\title{Forbidden Patterns in Mixed Linear Layouts}
\author{Deborah Haun}{Karlsruhe Institute of Technology, Germany}{deborah.haun@student.kit.edu}{}{}
\author{Laura Merker}{Karlsruhe Institute of Technology, Germany}{laura.merker2@student.kit.edu}{https://orcid.org/0000-0003-1961-4531}{}
\author{Sergey Pupyrev}{Menlo Park, CA, USA}{spupyrev@gmail.com}{https://orcid.org/0000-0003-4089-673X}{}
\authorrunning{D. Haun, L. Merker, and S. Pupyrev} 
\keywords{ordered graph, linear layout, mixed linear layout, stack layout, queue layout} 
\newtheorem{open}[theorem]{Open Problem}
\definecolor{defblue}{rgb}{0.121,0.47,0.705}
\newcommand{\df}[1]{{\color{defblue}\it #1}}
\renewcommand{\emph}[1]{\df{#1}}
\newcommand{\Oh}{{\ensuremath{\mathcal{O}}}\xspace}
\newcommand{\NP}{{\textrm{NP}}\xspace}
\newcommand{\Ph}{{\ensuremath{\mathcal{P}}}\xspace}
\newcommand{\Ih}{{\ensuremath{\mathcal{I}}}\xspace}
\newcommand{\Ah}{{\ensuremath{\mathcal{A}}}\xspace}
\newcommand{\Ch}{{\ensuremath{\mathcal{C}}}\xspace}
\DeclareMathOperator{\sn}{sn}
\DeclareMathOperator{\qn}{qn}
\DeclareMathOperator{\mn}{mn}
\DeclareRobustCommand{\diamondtimes}{%
	\mathbin{\text{\rotatebox[origin=c]{45}{$\boxplus$}}}%
}
\newcommand{\circled}[1]{\textcircled{\raisebox{-0.9pt}{{#1}}}}
\newcommand{\mathtitle}[2]{\texorpdfstring{\ensuremath{\bm{#1}}}{#2}}
\newcommand{\dexp}{7}
\newcommand{\sqf}{\ensuremath{k}\xspace}
\newcommand{\SP}[1]{}
\newcommand{\lm}[1]{} 
\newcommand{\dha}[1]{} 
\begin{document}
\nolinenumbers

\maketitle

\begin{abstract}
An ordered graph is a graph with a total order over its vertices.
A linear layout of an ordered graph is a partition of the edges into sets of either non-crossing edges, called stacks, or non-nesting edges, called queues. The stack (queue) number of an ordered graph is the minimum number of required stacks (queues). Mixed linear layouts combine these layouts by allowing each set of edges to form either a stack or a queue. The minimum number of stacks plus queues is called the mixed page number. It is well known that ordered graphs with small stack number are characterized, up to a function, by the absence of large twists (that is, pairwise crossing edges).
Similarly, ordered graphs with small queue number are characterized by the absence of large rainbows (that is, pairwise nesting edges). However, no such characterization via forbidden patterns is known for mixed linear layouts.

We address this gap by introducing patterns similar to twists and rainbows, which we call thick patterns; such patterns allow a characterization, again up to a function, of mixed linear layouts of bounded-degree graphs. That is, we show that a family of ordered graphs with bounded maximum degree has bounded mixed page number if and only if the size of the largest thick pattern is bounded. In addition, we investigate an exact characterization of ordered graphs whose mixed page number equals a fixed integer $ k $ via a finite set of forbidden patterns. We show that for every $ k \ge 2 $, there is no such characterization, which supports the nature of our first result.
\end{abstract}

\clearpage

\section{Introduction}\label{sec:intro}

An \df{ordered graph} is a graph given with a fixed linear vertex order, $ \prec $.
A \df{linear layout} of an ordered graph is a partition of its edges such that each part satisfies certain requirements with respect to the order.
In a \df{stack layout} each part, also called a \df{stack}, is required to be crossing-free with respect to $ \prec $, that is, two edges in the same stack may not have alternating endpoints.
A \df{queue layout} is a \enquote{dual} concept which forbids two edges to nest in the same part, called a \df{queue};
that is, if $ u \prec x \prec y \prec v $, then edges $ uv $ and $ xy $ must be in different queues.
The two concepts are generalized in \emph{mixed linear layouts}, where each part (called a \df{page} then) may either be a stack or a queue.
A linear layout using $ s $ stacks and/or $ q $ queues is called pure \emph{$ s $-stack}, pure \emph{$ q $-queue}, and
mixed \emph{$ s $-stack $ q $-queue}, respectively.
In all three cases, the objective is to minimize the number of parts.
The \df{stack number} $ \sn(G) $ (\df{queue number} $ \qn(G) $, \df{mixed page number} $ \mn(G) $) of an ordered graph $ G $ is the smallest $ k $ such that there is a stack (queue, mixed) layout with at most $ k $ stacks (queues, pages).

Stack layouts and queue layouts are well understood and a rich collection of tools has been developed. Most notably, the product structure theory~\cite{DJMMUW20}, layered path decompositions~\cite{DMY21,BDDEW19}, track layouts~\cite{DPW04}, and different kinds of $ H $-partitions that were used successfully both for queue layouts~\cite{DJMMUW20, HW24, FKMPR23, BGR23} and stack layouts~\cite{JMU23}.
A fundamental technique in this context is a characterization of stack and queue layouts via forbidden ordered patterns.
Formally, a \df{pattern} is an ordered graph with at least one edge. The \df{size} of a pattern is the number of edges.
An ordered graph $(G, \prec_1)$
\df{contains} a pattern $(H, \prec_2)$ if $H$ is a subgraph of $G$ and $\prec_2$ is a suborder of $\prec_1$;
otherwise, the graph \df{avoids} the pattern.
A \df{$k$-twist} denotes a set of $k$ pairwise crossing edges with respect to some vertex order, and a \df{$k$-rainbow} is a set of $k$ pairwise nesting edges, where symbol $ k $ can be omitted if not needed.
We define a \df{graph parameter} to be a function assigning a non-negative integer to every graph.
A parameter $p$ is \df{bounded} for a family of graphs if there is a constant $ c $ such that $p(G) \leq c $ for every graph $ G $ of the family.
Now, the characterization of stack and queue layouts can be formulated as follows.

\begin{theorem}[\cite{Gya85,Dav22}]\label{thm:sn_twists}
    A family $\mathcal{G}$ of ordered graphs has bounded stack number if and only if there exists $k \in \mathbb{N}$
    such that the size of the largest twist in every graph in $\mathcal{G}$ is at most $k$.

    \lm{I'd be curious to learn what was the reason for using the $ k $ explicitly for the size of the twist, but in any case I don't mind which formulation to use
    }
    \SP{the motivation was to match the statement with OP3; however, OP3 has been modified quite a bit since my original pass,
    and the claims diverged. i'm fine with either keeping the text as is or use the variant in the comments below}
    %
\end{theorem}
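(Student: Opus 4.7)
The plan is to handle the two directions separately; one is immediate, and the other is essentially a restatement of a known $\chi$-boundedness result for overlap (circle) graphs.

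For the forward direction, suppose $\mathcal{G}$ has stack number at most $s$. Since in any $k$-twist every pair of edges crosses, and two edges on the same stack cannot cross, any $k$-twist needs $k$ distinct stacks. Hence the size of the largest twist in each $G \in \mathcal{G}$ is at most $s$, and we may take $k := s$.

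For the converse, I would pass to the \emph{crossing graph} $C(G)$ of an ordered graph $G$: its vertices are the edges of $G$, and two vertices are adjacent iff the corresponding edges cross in the order $\prec$. A stack layout is exactly a proper coloring of $C(G)$, so $\sn(G) = \chi(C(G))$. Similarly, a $k$-twist is a clique of size $k$ in $C(G)$, so the size of the largest twist equals $\omega(C(G))$. The graphs $C(G)$ arising this way are precisely the overlap (equivalently, circle) graphs. Thus the statement reduces to: the class of overlap graphs is $\chi$-bounded, i.e., there is a function $f$ with $\chi(H) \le f(\omega(H))$ for every overlap graph $H$. This is exactly the classical theorem of Gy\'arf\'as~\cite{Gya85}, with the recent polynomial improvement of Davies~\cite{Dav22}. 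Applying $f$ uniformly across $\mathcal{G}$ yields that bounded twist size implies bounded stack number.

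The main obstacle is the $\chi$-boundedness of overlap graphs; I would simply cite~\cite{Gya85,Dav22} rather than reprove it. Everything else is a dictionary between stack layouts and colorings of the crossing graph: one only needs to verify that twists correspond to cliques and stacks to independent sets, both of which are immediate from the definitions of crossing and non-crossing.
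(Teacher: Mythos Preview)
Your proposal is correct and matches the paper's own treatment: the paper does not give an independent proof of this theorem but cites \cite{Gya85,Dav22} and explains (in the ``Coloring Perspective'' paragraph) that the result is exactly the $\chi$-boundedness of circle graphs, with twists corresponding to cliques and stacks to color classes. Your forward direction and your reduction to the crossing/circle graph are precisely this argument.
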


\begin{theorem}[\cite{HR92}]\label{thm:qn_rainbows}
   	A family $\mathcal{G}$ of ordered graphs has bounded queue number if and only if there exists $k \in \mathbb{N}$
	such that the size of the largest rainbow in every graph in $\mathcal{G}$ is at most $k$.
    %
\end{theorem}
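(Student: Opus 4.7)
The plan is to prove both directions of the equivalence, only one of which requires real work. For the easy direction, if every graph in $\mathcal{G}$ has queue number at most $q$, I would observe that two edges in the same queue cannot nest by definition; hence a rainbow of size $q+1$ would be forced to occupy at least $q+1$ distinct queues by pigeonhole, contradicting the bound. So bounded queue number implies bounded rainbow size, with the rainbow size at most $q$.

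For the converse, I would fix a graph $G \in \mathcal{G}$ whose largest rainbow has size $k$ and construct a queue layout of $G$ with at most $k$ queues. The key observation is that the edges of an ordered graph form a finite poset under the nesting relation, with $uv$ below $xy$ exactly when $x \preceq u \prec v \preceq y$. An antichain of this poset is a set of pairwise non-nesting edges, i.e., precisely a valid queue, while a chain is precisely a rainbow. By Mirsky's theorem, the minimum number of antichains needed to partition a finite poset equals the length of its longest chain, so $\qn(G)$ equals the largest rainbow size in $G$, which is at most $k$ by assumption.

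As a self-contained alternative avoiding Mirsky's theorem, I would run a greedy first-fit procedure: process the edges in order of their right endpoint (breaking ties by left endpoint in decreasing order), and place each edge in the smallest-indexed queue that contains no edge nesting with it. If edge $e$ is placed in queue $i$, then for each $j < i$ there must be a witness edge $f_j$ in queue $j$ that nests with $e$; the processing order guarantees that $f_1, \ldots, f_{i-1}$ together with $e$ form a rainbow, bounding $i$ by $k$.

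The only step requiring genuine care is the chain/antichain correspondence in the first proof, or, equivalently in the greedy version, checking that the witnesses $f_j$ are pairwise nesting and not merely each nesting with $e$; both facts follow by a short case analysis from the definition of nesting and the chosen processing order. Since neither direction loses anything, the argument in fact yields the sharper equality between $\qn(G)$ and the largest rainbow size in $G$, from which the claimed equivalence for the family $\mathcal{G}$ is immediate.
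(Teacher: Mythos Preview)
The paper does not give its own proof of this statement; it is quoted from Heath and Rosenberg (1992), who in fact establish the sharper equality $\qn(G) = {}$(size of the largest rainbow), which you correctly recover at the end. Your Mirsky-based argument is the standard one and is correct: the nesting relation is a partial order on the edges, chains are rainbows, antichains are queues, and Mirsky's theorem finishes.

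Your greedy alternative, however, has a small gap. The claim that the witnesses $f_1, \dots, f_{i-1}$ together with $e$ form a rainbow is not true for \emph{arbitrary} witnesses, and no ``short case analysis'' will fix that. For example, process $a = (2,3)$, $b = (1,4)$, $c = (6,7)$, $d = (5,8)$, $e = (0,9)$ by increasing right endpoint; then $a, c$ land in queue~$1$ and $b, d$ in queue~$2$, and when $e$ is forced into queue~$3$ the possible witnesses $a$ (from queue~$1$) and $d$ (from queue~$2$) are disjoint rather than nested. The fix is to argue inductively instead: when $e$ lands in queue~$i$, the witness $f_{i-1}$ in queue~$i-1$ is strictly inside $e$ by the processing order, and by induction a rainbow of size $i-1$ with $f_{i-1}$ outermost already exists; adding $e$ on top gives one of size~$i$. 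Since your Mirsky argument already suffices, this is a minor point.
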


These theorems are useful both for upper bounds (as an explicit assignment of edges to stacks or queues is not needed)
and for lower bounds (as a tedious case distinction which edge could go to which stack or queue gets superfluous).
Unfortunately, no analogous characterization is known for mixed linear layouts.
As a consequence, all known results on mixed linear layouts rely on an explicit assignment of edges to stacks and queues.
We aim to close the gap and find a set of patterns whose absence characterizes mixed linear layouts similarly to twists and rainbows in \cref{thm:sn_twists,thm:qn_rainbows}.
The main question we study is as follows.


\begin{open}\label{op:mn_characterization}
    Do there exist finite sets, $\Ph_1, \Ph_2, \dots $, of patterns
    and a binding function, $f: \mathbb{N} \rightarrow \mathbb{N}$, such that
    for all $ k \in \mathbb{N} $ and
    every ordered graph, $G$, the following holds:
    \begin{itemize}
        \item $\mn(G) < f(k)$ if $G$ avoids all patterns in $\Ph_k$, and
        \item $\mn(G) \ge k$ if $G$ contains some pattern in $\Ph_k$?
    \end{itemize}
\end{open}


Note that for pure stack/queue layouts, the answer to the question is positive, since
the corresponding sets $\Ph_k$ of forbidden patterns (also called \df{obstruction sets})
contain a single element, a $k$-twist and a $k$-rainbow, respectively.
Indeed, a more technical formulation of \cref{thm:sn_twists} would be that for every ordered graph $ G $ and every $ k \in \mathbb{N} $, it holds that the stack number of $ G $ is less than $14 k \log k $ if $ G $ avoids a $ k $-twist, and is at least $ k $ if $ G $ contains a $ k $-twist~\cite{Gya85,Dav22}.
We remark that in contrast to stack layouts, queue layouts even admit the identity as binding function~\cite{HR92}.
To complement this, we answer the problem affirmatively for mixed linear layouts of bounded-degree graphs and provide negative results for an exact characterization with the identity binding function.

The remainder of this section is organized as follows.
First, we present our main results and describe technical contributions with concrete bounds
that guide through the subsequent sections.
Then, we relate our findings to the state-of-the-art. Finally, we discuss linear layouts from
various perspectives targeted to readers not familiar with the topic.

\subsection{Main Results}\label{sec:main_results}
\SP{these two ``theorems'' are not formally proved nor referenced in the text.
    maybe we should make them corollaries or is this OK as is? I do like this section though}
\lm{They are now both referred to and briefly discussed in the technical contributions section. I think it might be irritating to read that the main results are only corollaries, plus at this point it would be unclear of what. So I vote for keeping them as theorems.}

We resolve \cref{op:mn_characterization} positively for bounded-degree ordered graphs.
For an explicit description of the set $\Ph_k$ of patterns, we define a \df{$k$-thick pattern} to be obtained either
from a $ k $-twist by replacing each edge by a $ k $-rainbow, or from a $ k $-rainbow by replacing each edge by a $ k $-twist; see \cref{fig:intro_forbidden_patterns}.%
\footnote{We emphasize that a $k$-thick pattern contains exactly $k^2$ edges, and hence, has size $k \times k$}
Refer to \cref{sec:thick} for a more elaborate introduction.

\begin{figure}
	\centering
	\includegraphics[page=3, width=0.48\textwidth]{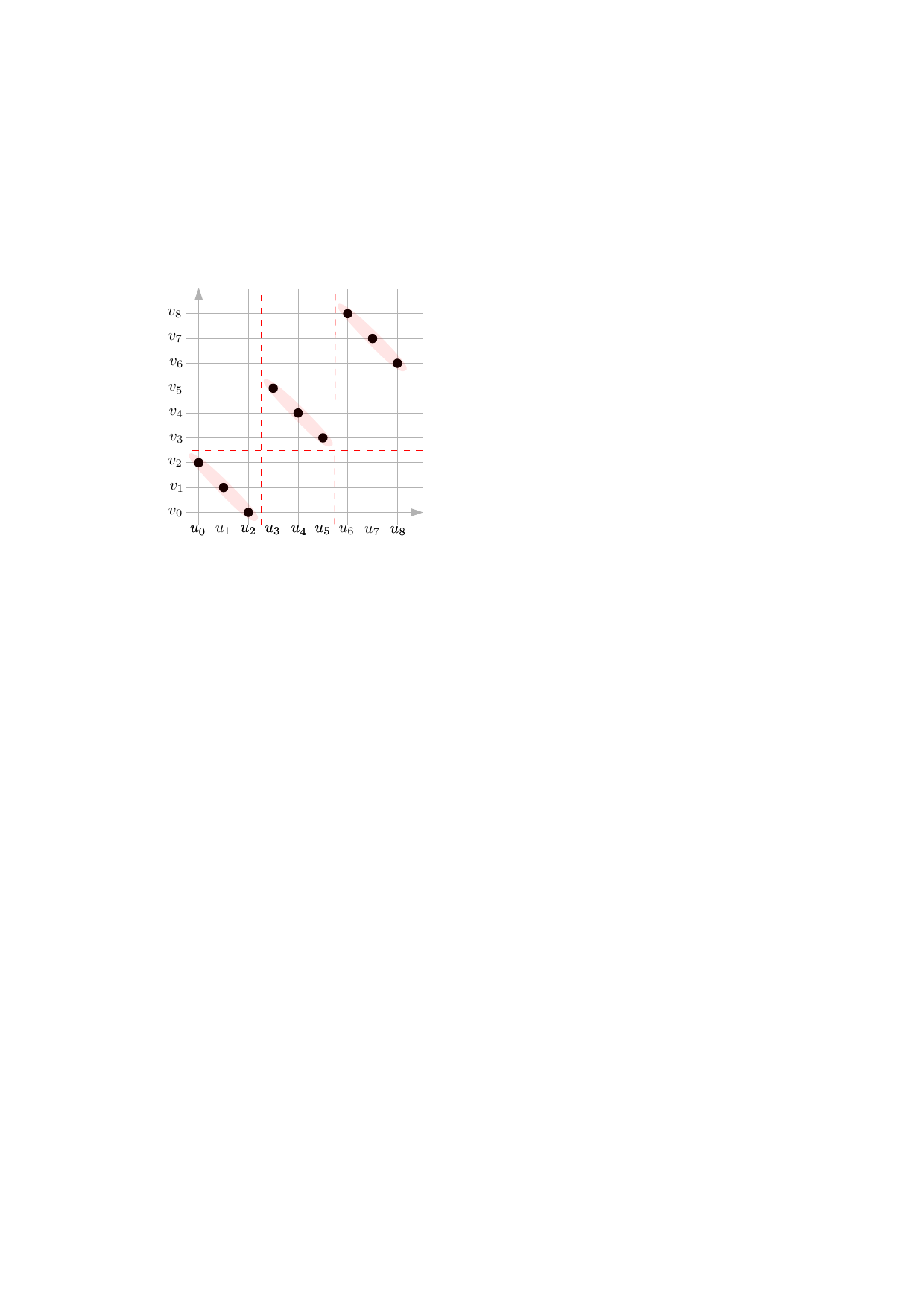}
	\hfill
	\includegraphics[page=4, width=0.48\textwidth]{pics/grid_representation_cr_nt}
	\caption{The two 3-thick patterns: three pairwise crossing 3-rainbows (left) and three pairwise nesting 3-twists (right)}
	\label{fig:intro_forbidden_patterns}
\end{figure}

\begin{theorem}\label{main:thick}
   	A family $\mathcal{G}$ of ordered graphs with bounded maximum degree has bounded mixed page number
   	if and only if there exists $k \in \mathbb{N}$
	such that the size of the largest thick pattern in every graph of $\mathcal{G}$ is at most $k$.
\end{theorem}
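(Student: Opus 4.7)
The plan is to prove both directions of the equivalence.

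For the necessity direction, I would show directly that any $k$-thick pattern forces mixed page number at least $k$ by a short counting argument. Consider a thick pattern of type ``twist of rainbows'', consisting of $k$ pairwise crossing rainbows $R_1, \dots, R_k$, each of size $k$. In any mixed layout with $s$ stacks and $q$ queues, two edges from distinct $R_i, R_j$ cross, so a single stack may host edges from at most one rainbow; within a single $R_i$, the $k$ edges pairwise nest, so a single queue hosts at most one edge of $R_i$, forcing any $R_i$ routed entirely to queues to occupy at least $k$ distinct queues. A case split on whether every $R_i$ places at least one edge on a stack gives either $s \ge k$ or $q \ge k$, and hence $s + q \ge k$. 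The type ``rainbow of twists'' is handled by swapping the roles of stacks and queues. Consequently, a family with unbounded mixed page number contains arbitrarily large thick patterns.

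For the sufficiency direction, fix $\Delta$ and $k$, and let $G$ be an ordered graph of maximum degree at most $\Delta$ avoiding $k$-thick patterns. I would partition $E(G)$ by participation in large patterns: letting $E_R$ (resp.\ $E_T$) denote the edges of $G$ that lie in some $k$-rainbow (resp.\ $k$-twist) of $G$, the four classes are $E_0 := E(G) \setminus (E_R \cup E_T)$, $E_R \setminus E_T$, $E_T \setminus E_R$, and $E_R \cap E_T$. The class $E_0$ has both rainbow and twist number less than $k$, so \cref{thm:sn_twists} and \cref{thm:qn_rainbows} bound its stack and queue number. The class $E_R \setminus E_T$ has twist number less than $k$ (a $k$-twist in the subgraph would be a $k$-twist in $G$, forcing its edges into $E_T$), so \cref{thm:sn_twists} gives a bounded stack number; the class $E_T \setminus E_R$ is handled symmetrically via \cref{thm:qn_rainbows}.

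The main obstacle, and the technical heart of the proof, is the class $E_R \cap E_T$ of edges lying in both a large rainbow and a large twist. The plan is a Ramsey-type extraction argument that uses the bounded-degree hypothesis in an essential way. I aim to show that the twist number of the subgraph on $E_R \cap E_T$ is bounded by a function of $\Delta$ and $k$; an analogous claim for the rainbow number follows symmetrically, and then \cref{thm:sn_twists} (or \cref{thm:qn_rainbows}) yields a bounded number of stacks (or queues) for this class. Suppose the subgraph contained a twist of size $N$ for $N$ much larger than $\Delta$ and $k$. Each of the $N$ twist edges $f_i$ lies in some $k$-rainbow $R_i$ of $G$. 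Through a sequence of colorings --- first by the position of $f_i$ within $R_i$, so that all selected $R_i$ extend from $f_i$ in a uniform ``direction'', then an Erd\H{o}s--Szekeres-style selection on the spans of the $R_i$, combined with bounded degree to limit vertex-sharing among the witnesses --- I would extract $k$ rainbows $R_{i_1}, \dots, R_{i_k}$ whose $k^2$ cross-rainbow edge pairs all cross, producing a $k$-thick pattern of type ``twist of rainbows'' and contradicting the hypothesis on $G$. Summing over the four classes then yields the desired bound on $\mn(G)$. The main technical difficulty is ensuring that the extracted rainbows cross in the strong sense that every one of the $k^2$ cross-rainbow pairs crosses, not merely the originally chosen witnesses $f_{i_j}$; this requires careful positional control, which is where the bounded-degree assumption plays its essential role.
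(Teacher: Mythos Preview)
Your lower-bound argument is correct and matches the paper's. The four-way edge partition for the upper bound is also sound for the three easy classes: $E_0$, $E_R\setminus E_T$, and $E_T\setminus E_R$ each have bounded twist or rainbow number and are covered by \cref{thm:sn_twists,thm:qn_rainbows}.

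The gap is in the treatment of $E_R\cap E_T$. Your central claim---that the twist number of this subgraph is bounded in terms of $\Delta$ and $k$---is false already for matchings. Take an $N$-twist $f_1,\dots,f_N$ (so $a_1<\dots<a_N<b_1<\dots<b_N$) and, for each $f_i$, attach a $(k-1)$-rainbow $R'_i$ whose $2(k-1)$ endpoints all lie in a tiny interval just right of $a_i$ and left of $a_{i+1}$. Then every $f_i$ lies in the $k$-rainbow $\{f_i\}\cup R'_i$ and in a $k$-twist among the $f_j$'s, so $\{f_1,\dots,f_N\}\subseteq E_R\cap E_T$ and the twist number of $E_R\cap E_T$ is $N$. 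Yet there is no $k$-thick pattern: every $k$-twist is a subset of $\{f_1,\dots,f_N\}$ and these never nest, while every $k$-rainbow must use one $f_i$ together with all of some $R'_j$, and no $R'_j$-edge ever crosses any $f_{i'}$, so no two $k$-rainbows strongly cross. In this example your extraction cannot succeed: every $f_i$ is already outermost in its witness rainbow, the spans coincide with the $f_i$, the witnesses are already vertex-disjoint---none of your selection steps changes anything, and the rainbows still fail to pairwise cross in the strong sense. Placing a symmetric construction (a large rainbow of $g_j$'s, each with a local witness $k$-twist) alongside shows that the rainbow number of $E_R\cap E_T$ can be unbounded simultaneously, so the ``symmetric'' escape route is also closed. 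The decomposition has not reduced the problem.

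The paper's route is entirely different. It first handles separated matchings via the associated two-dimensional poset: Greene's theorem on the Ferrer's diagram shows that the largest square governs how many chains and antichains are needed to cover the poset, and a separate argument locates a $\diamondtimes$-pattern (and inside it a thick pattern) witnessing that square. It then lifts to arbitrary ordered matchings by an iterated quotient construction along interval partitions, and finally to bounded-degree graphs via Vizing's theorem. The bounded-degree hypothesis enters only at this last, essentially trivial, step---it is not the mechanism that tames thick-pattern-free graphs.
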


Note that the above theorem provides a rough characterization of bounded-degree ordered graphs with bounded mixed
page number based on the size of the largest thick pattern contained in the graph, just like \cref{thm:sn_twists,thm:qn_rainbows} do for the stack and queue number.
That is, the
largest thick pattern implies an upper bound on the mixed page number but not its exact value.
A more granular version of \cref{op:mn_characterization} would be the one in which
the binding function is the identity, that is, a characterization of
ordered graphs admitting mixed linear layouts with exactly $k$ pages for $k \in \mathbb{N}$.
As a negative result, we show that such an exact characterization is impossible, by presenting an infinite family of patterns
that needs to be included in an obstruction set for ordered \df{matchings} (graphs of maximum degree one)
for every $k \ge 2$.

\begin{theorem}\label{main:critical}
	For $ k \geq 2 $, let $\Ph_k$ be a set of patterns such that the mixed page number of an ordered matching is at most $ k $
    if and only if it avoids all of $\Ph_k$.
	Then $\Ph_k$ is infinite.
\end{theorem}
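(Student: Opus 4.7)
The plan is to build, for each $k \ge 2$, an infinite family of pairwise non-isomorphic ordered matchings $M_1, M_2, \dots$ that are \emph{critical} with respect to mixed page number exceeding $k$, meaning $\mn(M_n) > k$ while every proper sub-pattern of $M_n$ (obtained by deleting at least one edge) has $\mn \le k$. The key observation is that any matching pattern $P \in \Ph_k$ has $\mn(P) > k$: applying the characterization to the matching $P$ itself, we see that $P$ contains $P$ and so does not avoid $\Ph_k$, hence $\mn(P) > k$. Now fix any $n \ge 1$. Since $\mn(M_n) > k$, the matching $M_n$ contains some pattern $P \in \Ph_k$, which is necessarily a matching and hence satisfies $\mn(P) > k$. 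Since every proper sub-pattern of $M_n$ has $\mn \le k$, the containment $P \subseteq M_n$ forces $P = M_n$. Hence $M_n \in \Ph_k$ for every $n$, making $\Ph_k$ infinite.

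For the construction of $M_n$, I would start from a small \emph{seed} matching with mixed page number $k+1$, such as a variant of the $(k+1)$-thick pattern shown in \cref{fig:intro_forbidden_patterns}. The lower bound $\mn > k$ for the seed follows from a pigeonhole argument counting the capacity of $k$ pages (each stack holding at most one nested chain of the seed, each queue at most one crossing chain) against the $k+1$ chains of the seed that need to be covered. From the seed, $M_n$ is produced by $n$ rounds of a cyclically symmetric local insertion that preserves $\mn = k+1$ while ensuring each inserted edge is essential to the obstruction; cyclic symmetry guarantees that the criticality analysis reduces to a finite number of cases independent of $n$, and iterating yields matchings of strictly increasing size.

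The principal obstacle is proving criticality uniformly in $n$: for every edge $e$ of $M_n$, one must exhibit a $k$-page mixed layout of $M_n - e$. By the symmetry of the construction this reduces to finitely many cases, and in each case the endpoint freed by deleting $e$ is used to merge one of the $k+1$ chains of the seed into a previously assigned page, yielding a valid $k$-page mixed layout of $M_n - e$ and hence $\mn(M_n - e) \le k$.
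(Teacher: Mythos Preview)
Your overall strategy---build infinitely many $k$-critical ordered matchings and argue that each must coincide with some element of $\Ph_k$---is exactly the paper's route, and your first paragraph carries out the reduction cleanly. The difficulty, and the actual content of the theorem, lies entirely in the construction, and here your proposal has a genuine gap.

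First, the seed you suggest does not work. A $(k+1)$-thick pattern has $(k+1)^2$ edges, and in any mixed layout every stack lies inside a single rainbow while every queue meets each rainbow in at most one edge; hence every page holds at most $k+1$ edges. After deleting one edge you still have $(k+1)^2-1=k^2+2k$ edges, which exceeds the capacity $k(k+1)=k^2+k$ of $k$ pages. So the $(k+1)$-thick pattern is \emph{not} $k$-critical, and there is no evident way to extract from it an arbitrarily large critical subpattern.

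Second, ``cyclically symmetric local insertion that preserves $\mn=k+1$ while ensuring each inserted edge is essential'' is a description of what a construction would have to achieve, not a construction. Proving criticality means exhibiting, for every edge $e$, an explicit $k$-page layout of $M_n-e$; without a concrete $M_n$ there is nothing to check, and the ``finitely many cases by symmetry'' cannot be carried out.

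The paper handles this quite differently. For the base case $k=2$ it builds, for each even $r$, a matching with three parts: a set $C$ of $r+1$ edges realising an odd cycle in the crossing graph (so $C$ alone is $(2,0)$-critical and of unbounded size), a $3$-rainbow $R$ (blocking two queues), and a single edge $x$ positioned so that any $1$-stack $1$-queue layout fails. Criticality is then verified by three explicit layouts, one for each type of deleted edge. For $k\ge 3$ the paper does \emph{not} grow the graph by local insertion; instead it takes a $k$-critical matching $G_k$ and covers it by a $(k+2)$-twist $M$. The new queue forced by $M$ cannot touch $G_k$, giving the lower bound, while deleting an edge of $G_k$ frees a $k$-page layout that one augments by a queue for $M$, and deleting an edge of $M$ reduces it to a $(k+1)$-twist that shares stacks with a $(k+1)$-stack layout of $G_k$ (maintained as part of the induction hypothesis). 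The unbounded size comes entirely from the base case, not from the inductive step.

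In short: your reduction is fine, but the construction---which is the whole proof---is missing, and the seed you propose is provably not critical.
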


\subsection{Technical Contributions}
\label{sec:summary}

While answering \cref{op:mn_characterization}, one can optimize three objectives.
Firstly, the number of patterns should be small, as each of the patterns might be considered individually
in applications of the characterization.
Secondly, the binding function should be small to yield effective upper bounds on the mixed page number.
And thirdly, the results should hold for graph classes that are as general as possible.
In this subsection, we present explicit bounds on the mixed page number with different trade-offs between the three objectives.

As a base for subsequent results, we start with ordered matchings having a separated layout
and a relatively large obstruction set.
Here, a layout of a bipartite graph is called \emph{separated} if all vertices of one part precede all vertices of the
other part in the order.
An explicit description of forbidden patterns, denoted $\diamondtimes$-patterns, is provided in \cref{sec:diamond}.

\newcommand{\restatemarker}[1]{#1}


\begin{restatable}[name=\restatemarker{\cref{sec:diamond}}]{theorem}{FixedCharThm}
    \label{thm:fixed_char}
    Let $M$ be a matching with a separated layout. If the largest $\diamondtimes$-pattern
    in $M$ has size $k \times k $, then the mixed page number of $M$ is at least $k$ and at most~$2k$.
\end{restatable}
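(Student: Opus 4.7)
The plan is to reformulate the layout problem in the language of posets. In a separated layout of a matching, any two distinct edges either cross or nest in the vertex order, and these two relations are mutually exclusive. Hence nesting is a partial order on $E(M)$ whose chains are exactly the rainbows and whose antichains are exactly the twists. A stack page, being crossing-free, is a chain under this order; a queue page, being nesting-free, is an antichain. Consequently $\mn(M)$ equals the minimum number of parts in any partition of $E(M)$ into chains and antichains.

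For the lower bound $\mn(M) \ge k$, I would fix a $k \times k$ $\diamondtimes$-pattern $P \subseteq M$ of maximum size and analyze how any partition of $E(M)$ into $s$ chains and $q$ antichains restricts to $P$. The intended structure of a $\diamondtimes$-pattern (see \cref{sec:diamond}) should arrange its $k^2$ edges into $k$ groups, within which edges are pairwise comparable and between which they are pairwise incomparable (or the dual configuration). Thus every chain of the ambient poset meets $P$ in at most $k$ edges drawn from a single group, while every antichain meets $P$ in at most one edge per group. An elementary double-counting argument across the $k$ groups then yields $s + q \ge k$.

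For the upper bound $\mn(M) \le 2k$, my plan is to produce a mixed layout with at most $k$ stacks and $k$ queues. The idea is to identify $k$ carefully chosen chains (rainbows) whose removal leaves a residual poset of bounded height---specifically, height at most $k$---so that Mirsky's theorem then partitions the residual into at most $k$ antichains placed on queues. The $\diamondtimes$-pattern avoidance is the combinatorial tool that forces the residual to be short: if the residual contained a chain of length $k+1$, then together with the $k$ extracted chains (chosen to be pairwise incomparable in the appropriate sense), they would reconstruct a $(k+1) \times (k+1)$ $\diamondtimes$-pattern in $M$, contradicting the assumption on the largest pattern.

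The main obstacle is making precise how to choose the initial $k$ chains so that the residual has bounded height. This is a Ramsey-style selection: one needs chains that are simultaneously long (to absorb vertical structure) and pairwise incomparable (to witness the horizontal axis of a $\diamondtimes$-pattern). Aligning this selection with the specific definition of $\diamondtimes$-patterns in \cref{sec:diamond}, so that a violation of the residual height bound truly reinstates a forbidden pattern, is where the combinatorial heart of the argument lies. The factor-of-two gap between $k$ and $2k$ reflects the separate bookkeeping of chains and antichains, each bounded independently by $k$ through the Dilworth--Mirsky framework.
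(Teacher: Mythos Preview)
Your lower bound is essentially the paper's argument, though your description of the $\diamondtimes$-pattern is inaccurate: it is \emph{not} the case that edges in different groups are pairwise incomparable. The definition only guarantees that each row $\{e_{i,1},\dots,e_{i,k}\}$ is a chain and each column $\{e_{1,j},\dots,e_{k,j}\}$ is an antichain; cross-relations like $e_{1,1}$ versus $e_{2,2}$ are unconstrained. The counting still goes through, since any chain meets each column in at most one element and any antichain meets each row in at most one element, so every page of the layout covers at most $k$ of the $k^2$ pattern edges.

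Your upper bound plan, however, has a genuine gap precisely where you flag the ``main obstacle.'' Extracting $k$ chains so that the residual has height at most $k$ is not a matter of Ramsey-style selection: a long chain in the residual, together with your $k$ extracted chains, does \emph{not} assemble into a $(k{+}1)$-$\diamondtimes$-pattern unless the $j$-th elements across all $k{+}1$ chains form antichains for every $j$, and nothing in a greedy or longest-chain extraction guarantees that alignment. The paper circumvents this entirely by invoking Greene's theorem: the Ferrer's diagram of the poset (whose $i$-th row records $c_i - c_{i-1}$, the marginal gain of an $i$-th chain in a maximum chain $i$-family) is conjugate to the analogous diagram for antichains. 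One then shows that a $(k{+}1)\times(k{+}1)$ square in this diagram forces a $(k{+}1)$-$\diamondtimes$-pattern, so the largest square has side at most $k$; a simple cell count in the diagram then gives that a maximum chain $k$-family together with a maximum antichain $k$-family covers all of $P(M)$. The Greene machinery is what supplies the ``alignment'' you are missing, and there is no evident elementary substitute.
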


As illustrated by \cref{thm:sn_twists,thm:qn_rainbows}, the obstruction sets for pure stack and queue layouts consist of only
one pattern, namely a twist and a rainbow, respectively.
Close enough, we construct exactly two patterns (namely, the two thick patterns shown in \cref{fig:intro_forbidden_patterns}),
to play the same role for mixed linear layouts. The simplification comes at the cost of a worsened binding function.


\begin{restatable}[name=\restatemarker{\cref{sec:thick}}]{theorem}{ThickPatternThm}
    \label{thm:thick_pattern}
    Let $M$ be a matching with a separated layout.
    If the largest thick pattern in $M$ has size $k \times k$, then the mixed page number of $M$ is at least $k$ and at most $2k^7$.
\end{restatable}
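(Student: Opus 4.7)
The plan is to reduce to \cref{thm:fixed_char} in both directions, with the upper bound using an extraction (Ramsey-type) argument that locates a large thick pattern inside a large $\diamondtimes$-pattern.

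For the lower bound, I would argue that a $k$-thick pattern is, in particular, a $\diamondtimes$-pattern of size $k \times k$: the two explicit forms in \cref{fig:intro_forbidden_patterns} match the grid shape that defines $\diamondtimes$ in \cref{sec:diamond}. Invoking the lower bound of \cref{thm:fixed_char} then gives $\mn(M) \ge k$ directly, with no further work.

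For the upper bound, I would prove the contrapositive of an extraction lemma: if $M$ contains a $\diamondtimes$-pattern of size $N \times N$ with $N > k^{\dexp}$, then $M$ contains a $(k+1)$-thick pattern. Combined with the upper bound of \cref{thm:fixed_char}, which says $\mn(M) \le 2 \cdot (\text{size of the largest } \diamondtimes\text{-pattern in } M)$, this yields $\mn(M) \le 2k^{\dexp}$ under the assumption that the largest thick pattern has size $k \times k$. The extraction itself proceeds by iterated application of the Erdős--Szekeres (or Dilworth) theorem to the rows and columns of the $\diamondtimes$-grid. Because any two edges of a $\diamondtimes$-pattern are either crossing or nesting, one can $2$-colour the pairs of rows (respectively, columns) accordingly, find large monochromatic families, and then recurse: a uniform sub-grid in which all rows are rainbows and all columns are twists (or vice versa) is, by definition, a thick pattern of the corresponding side length.

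The main obstacle is bookkeeping the exponents so that the final bound lands at $k^{\dexp}$ rather than a larger constant. A naive alternating application of Erdős--Szekeres loses a square per step, which would yield a much larger exponent; to reach $7$, I would combine the row and column reductions asymmetrically, so that one of them is paid for only once, and exploit the bipartite (separated) structure of $M$, which lets the matching edges be indexed by a single linear order on each side rather than by an interleaved order. A secondary subtlety is that the sub-grid extracted must be \emph{aligned} with the axes of the $\diamondtimes$-pattern to qualify as a thick pattern; I expect this to follow automatically from performing the row Ramsey step before restricting the columns, but it is the point where care is required to avoid extracting a large $\diamondtimes$-substructure that is not itself thick.
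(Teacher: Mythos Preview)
Your overall reduction is exactly the paper's: the lower bound is obtained by observing that each $k$-thick pattern is itself a $k$-$\diamondtimes$-pattern and invoking \cref{thm:fixed_char}, and the upper bound is obtained by proving an extraction lemma (the paper's \cref{lm:diamond_thick}: every $k^{\dexp}$-$\diamondtimes$-pattern contains a $k$-thick pattern) and then applying the upper bound of \cref{thm:fixed_char}. So the architecture is right.

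The gap is in your extraction argument. Your plan to ``$2$-colour the pairs of rows'' is not well-defined: two rows of a $\diamondtimes$-pattern do not stand in a single crossing/nesting relation, because the off-diagonal pairs $e_{i,j}$ versus $e_{i',j'}$ with $i\neq i'$ and $j\neq j'$ are completely unconstrained by the $\diamondtimes$ definition. Relatedly, your terminal criterion (``a uniform sub-grid in which all rows are rainbows and all columns are twists is a thick pattern'') misidentifies what distinguishes thick patterns: in \emph{every} $\diamondtimes$-pattern the rows are already twists and the columns already rainbows; a thick pattern is the special case where, in addition, \emph{all} cross-row (or cross-column) pairs are uniformly $\searrow$ (respectively $\nearrow$), i.e., the points sit in axis-aligned boxes along a diagonal. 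A row/column Ramsey scheme aimed at this uniformity would need to control $\Theta(N^{2})$ unconstrained pairwise relations, and there is no evident way to do that while landing at the exponent~$7$.

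The paper's extraction is genuinely different and geometric. It works in the grid representation of the $\diamondtimes$-pattern (one point per row and per column, since $M$ is a matching), cuts the grid into four equal quadrants, and shows by a pigeonhole on rows/columns that some pair of \emph{opposite} quadrants each contain at least $n^{2}/4$ points. A counting argument using that increasing sequences enter a corner quadrant as a prefix then shows each such quadrant contains a $\lceil(1-\sqrt{3}/2)n\rceil$-$\diamondtimes$-pattern. Recursing $h=2\lceil\log_2 k\rceil$ times yields $2^{h}$ tiny sub-squares, each still containing a $k$-$\diamondtimes$-pattern (hence both a $k$-twist and a $k$-rainbow); only at this final stage is Erd\H{o}s--Szekeres applied, to the permutation formed by the sub-squares themselves, giving a monotone run of $k$ boxes and thus a $k$-thick pattern. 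The exponent~$7$ is exactly what makes $(1-\sqrt{3}/2)^{2}\cdot 128\ge 2$ hold, not the outcome of an asymmetric Ramsey bookkeeping.
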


We complete the study of ordered matchings by generalizing results for separated layouts to arbitrary vertex orders.


\begin{restatable}[name=\restatemarker{\cref{sec:general_matching}}]{theorem}{GeneralMatchingThm}
    \label{thm:general_matching}
    Let $M$ be an ordered matching.
    If the largest thick pattern in $M$ has size $k \times k$, then the mixed page number of $M$ is
    at least $ k $ and at most $ \Oh(k^{8k-7}\log^k(k))$.
\end{restatable}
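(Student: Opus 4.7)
The lower bound $\mn(M) \ge k$ is immediate from the structure of a thick pattern. In the twist-of-rainbows variant of a $k$-thick pattern, any stack can host only edges of a single rainbow (since edges from distinct rainbows cross pairwise), so at most $k$ edges, and any queue can host at most one edge per rainbow (since edges within a rainbow nest), so again at most $k$ edges; the rainbow-of-twists variant is dual. Since a $k$-thick pattern has $k^2$ edges, at least $k$ pages are needed.

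For the upper bound I would proceed by induction on $k$, using \cref{thm:thick_pattern} for the separated case as the main external input; the base case (small $k$) is handled directly. For the inductive step, assume the claimed bound holds for every matching whose largest thick pattern has size at most $k - 1$. The plan is to decompose the edges of $M$ into $\Oh(k^8 \log k)$ sub-matchings, each having largest thick pattern of size at most $k - 1$. Applying the inductive hypothesis to each sub-matching and summing pages gives
\[
\mn(M) \;\le\; \Oh(k^8\log k) \cdot \Oh\bigl(k^{8(k-1)-7}\log^{k-1}k\bigr) \;=\; \Oh\bigl(k^{8k-7}\log^k k\bigr),
\]
which matches the target bound; the multiplicative shape of the induction is exactly what produces the $k^{8k}$ and $\log^k k$ factors.

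The decomposition is the technical heart of the argument. I would combine a median-style hierarchical split of the vertex order with \cref{thm:thick_pattern}: at each level, the edges that straddle the cut form a separated sub-matching whose largest thick pattern has size at most $k$, so \cref{thm:thick_pattern} lays them out on $\Oh(k^7)$ pages. Hierarchically iterating this split contributes the factor $\log k$ (from the number of scales), and invoking \cref{thm:thick_pattern} at the $\Oh(k)$ scales needed to expose substructure contributes the factor $k^8$. Within each leaf of the recursion, an Erdős--Szekeres-style analysis on the left/right endpoints of the remaining edges either extracts a separated sub-matching or certifies that the residual thick-pattern size has dropped to $k-1$, at which point the inductive hypothesis applies.

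The main obstacle, and where I expect most of the effort to go, is the Erdős--Szekeres step: establishing that in any matching with largest thick pattern of size $k$, only $\Oh(k^8 \log k)$ sub-matchings of thick-pattern size at most $k-1$ are needed to cover all edges, independently of the number of vertices. I expect this to follow by iteratively peeling the longest rainbows and twists from $M$ and using the absence of a $(k+1)$-thick pattern to bound how often such peeling can occur before the residual thick-pattern size strictly decreases. Once this combinatorial bound is established, the rest of the argument is bookkeeping of pages across the inductive expansion.
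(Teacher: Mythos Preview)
Your lower-bound argument is fine and matches the paper.

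The upper-bound sketch, however, has a genuine gap. The entire argument rests on the claim that any ordered matching with largest thick pattern $k\times k$ can be partitioned into $\Oh(k^8\log k)$ sub-matchings each with largest thick pattern at most $(k-1)\times(k-1)$, and you do not prove this; you yourself flag the ``Erd\H{o}s--Szekeres step'' as the main obstacle and only describe it in vague terms (``peeling the longest rainbows and twists''). There is no reason a median split should reduce the thick-pattern size: the edges crossing a median cut form a separated matching, but the edges \emph{not} crossing the cut may still contain a $k$-thick pattern, and nothing you wrote controls that. Nor is it clear why $\Oh(\log k)$ scales should suffice in a hierarchical split of an $n$-vertex order; the depth of such a recursion is naturally $\Oh(\log n)$, not $\Oh(\log k)$. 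Without a concrete mechanism that forces the thick-pattern size to strictly drop after boundedly many steps, the induction cannot get off the ground.

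The paper's route is quite different. It does \emph{not} decrease the thick-pattern size; instead it introduces a \emph{quotient} operation: greedily partition the vertex order into minimal intervals each containing a $(k{+}1)$-twist, and contract each interval to a single vertex. A transfer lemma (\cref{lem:quotient}) shows that $\mn(G)\le \Oh(k^8\log k)\cdot \mn(G/\Ih)+\Oh(k\log k)$, and this is where \cref{thm:thick_pattern} is actually used (the edges between an interval and the rest form a separated matching). The key structural insight is that this quotient step can be iterated at most $k-1$ times: a $(k{+}1)$-twist in the $i$-th quotient unfolds, level by level, into $k$ pairwise nesting $(k{+}1)$-twists in $G$, i.e., a $(k{+}1)$-thick rainbow, contradicting the hypothesis. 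Hence after at most $k-1$ quotient steps the graph has no $(k{+}1)$-twist and thus stack number $\Oh(k\log k)$; unrolling the transfer lemma $k-1$ times gives the stated $\Oh(k^{8k-7}\log^k k)$. So the parameter that drops along the recursion is the \emph{twist size in the quotient}, not the thick-pattern size in sub-matchings, and the recursion depth is bounded by $k$ for a structural reason (nested twists build a thick rainbow), not by a divide-and-conquer depth count.
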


To prove this, we work with so-called quotients of linear layouts that can capture the global structure of an (ordered) graph and show how to transfer linear layouts of a quotient to the initial graph (\cref{lem:quotient}), which might be of independent interest.

With Vizing's theorem~\cite{Viz65}, all our results on matchings generalize to bounded-degree graphs;
in particular we obtain the following bounds for \cref{main:thick}.


\begin{restatable}[name=\restatemarker{\cref{sec:general_matching}}]{theorem}{BoundedDegreeGraphThm}
    \label{thm:bounded_degree_graph}
    Let $G$ be an ordered graph with maximum degree $ \Delta $.
    If the largest thick pattern in $G$ has size $k \times k$, then the mixed page number of $G$ is
    at least $ k $ and at most $ \Oh(\Delta k^{8k-7}\log^k(k)) $.
\end{restatable}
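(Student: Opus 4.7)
The plan is to reduce the statement to the matching case (Theorem~\ref{thm:general_matching}) by a standard edge-partitioning argument via Vizing's theorem; essentially no new combinatorics is needed.

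For the lower bound, I would first observe that a $k$-thick pattern is itself an ordered matching: both variants (the thickened $k$-twist and the thickened $k$-rainbow) have $2k$ disjoint vertex clusters of size $k$, and the $k^2$ edges of the pattern form a perfect matching between appropriate pairs of clusters. Hence, if $G$ contains a $k$-thick pattern $M$ as a subgraph, then $M$ is an ordered matching on $2k^2$ vertices and, trivially, $M$ itself contains a $k$-thick pattern of size $k \times k$. Restricting any mixed layout of $G$ to the edges of $M$ yields a mixed layout of $M$, so $\mn(M) \le \mn(G)$. By Theorem~\ref{thm:general_matching} applied to $M$, we get $\mn(M) \ge k$, and hence $\mn(G) \ge k$.

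For the upper bound, I would invoke Vizing's theorem~\cite{Viz65} to properly edge-color $G$ with $\Delta+1$ colors, obtaining a partition $E(G) = M_1 \sqcup \cdots \sqcup M_{\Delta+1}$ into ordered matchings, each inheriting the vertex order of $G$. Since $M_i \subseteq G$ as ordered subgraph, the largest thick pattern contained in $M_i$ has size at most $k \times k$. Applying Theorem~\ref{thm:general_matching} to each $M_i$ yields a mixed linear layout of $M_i$ using at most $\Oh(k^{8k-7} \log^k(k))$ pages. These layouts can be merged into one mixed layout of $G$ simply by declaring each page of each $M_i$ a separate page of $G$: edges that were crossing-free on a stack of $M_i$ remain crossing-free when we ignore all other matchings, and likewise for queues. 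This produces a mixed layout of $G$ with at most $(\Delta+1) \cdot \Oh(k^{8k-7} \log^k(k)) = \Oh(\Delta\, k^{8k-7} \log^k(k))$ pages, as required.

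Since the heavy lifting is entirely inside Theorem~\ref{thm:general_matching}, there is no real obstacle in this reduction; the only mild subtlety is to verify that thick-pattern containment is monotone under taking ordered subgraphs (which is immediate from the definitions) so that the bound on the thick pattern in $G$ transfers to every matching $M_i$. The overhead factor of $\Delta$ in the final bound is simply the price paid by Vizing's decomposition.
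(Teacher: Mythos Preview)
Your proposal is correct and matches the paper's own argument essentially verbatim: the paper also derives the upper bound by edge-coloring with Vizing's theorem and applying Theorem~\ref{thm:general_matching} to each color class, and notes that the lower bound is trivial from the thick pattern itself.
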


Note that this is a specification of \cref{main:thick}, since the upper bound implies that if there is no large thick pattern, then the mixed page number is bounded; conversely, if the mixed page number is bounded, then there is no large thick pattern due to the lower bound.

Our final set of results concerns an exact characterization of ordered graphs with mixed page number $k$;
that is, we aim to optimize the binding function. To this end, we consider $(s, q)$- and $k$-\df{critical} graphs
that are edge-minimal forbidden patterns for $s$-stack $q$-queue layouts and mixed $k$-page layouts, respectively.
We show that for certain separated layouts, the corresponding obstruction sets $\Ph_k$ are finite, which
positively resolves \cref{op:mn_characterization} for the cases.


\begin{restatable}[name=\restatemarker{\cref{sec:critical_separated}}]{theorem}{CriticalSeparatedThm}
    \label{thm:critical_sep}
    For separated layouts, there exists a finite number of
    \begin{enumerate}[(i)]
        \item bounded-degree $(s, q)$-critical and $k$-critical graphs for all $s,q,k \ge 1$, and
        \item $(1, 1)$-critical and $2$-critical graphs.
    \end{enumerate}

\end{restatable}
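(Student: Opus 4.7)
My plan is to prove the two parts separately, reducing each to a finite-obstruction analysis tailored to separated layouts.

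For part (ii), I would exploit the bijection between separated-layout matchings and permutations: a matching $M$ with edges $\{(a_i, b_{\sigma(i)})\}$ corresponds to the permutation $\sigma$, and under this bijection stacks (non-crossing sets) are exactly decreasing subsequences while queues (non-nesting sets) are exactly increasing subsequences. Hence $M$ admits a $1$-stack $1$-queue layout if and only if $\sigma$ is a \emph{skew-merged} permutation. A classical theorem of Atkinson characterizes skew-merged permutations as those avoiding the two patterns $3412$ and $2143$, and each of these two patterns itself becomes skew-merged after the removal of any single element. So the only $(1,1)$-critical separated-layout matchings are the two patterns $3412$ and $2143$ themselves. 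For non-matching graphs, the extension is short: any $(1,1)$-critical graph $G$ must contain a $3412$ or $2143$ subpattern $P$; since $P$ uses at most one edge per vertex, any extra edge of $G$ lies outside $P$ and is removable without destroying $P$, contradicting edge-minimality. For $2$-critical graphs, $\mn(G)=3$ forces $G$ to contain simultaneously a $3$-twist (obstruction to $2$ stacks), a $3$-rainbow (obstruction to $2$ queues), and a $3412$ or $2143$ subpattern (obstruction to a $1$-stack $1$-queue layout), so edge-minimality bounds $|E(G)|$ by a constant (at most~$10$), giving finitely many graphs.

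For part (i), fix $\Delta, s, q \ge 1$ and let $G$ be an $(s,q)$-critical separated-layout graph of maximum degree $\Delta$. Since $G-e$ admits an $(s,q)$-layout for every edge $e$, placing $e$ on a new page gives $\mn(G) \le s+q+1$. By the lower-bound direction of \cref{thm:bounded_degree_graph}, the largest thick pattern in $G$ has size at most some $K = K(\Delta, s, q)$. The remaining and central step is a size-bound lemma: every $(s,q)$-critical, bounded-degree separated-layout graph with thick-pattern size at most $K$ has $|E(G)|$ bounded by a function of $K, \Delta, s, q$. I would prove the lemma by contradiction: if $G$ is large, Erd\H{o}s--Szekeres applied to the edges (ordered by their $A$-endpoints) extracts a long monotone block — a long twist or a long rainbow — which, by the bound $K$, cannot be extended into a thick pattern. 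Picking two edges $e, e'$ deep inside this block and pigeonholing the finitely many ``local layout types'' over the $(s,q)$-layouts $L_e$ of $G-e$ and $L_{e'}$ of $G-e'$ guaranteed by criticality, I obtain a compatible pair whose layouts merge into an $(s,q)$-layout of $G$, contradicting that $G$ is not $(s,q)$-layoutable. The $k$-critical case follows the same scheme using $\mn(G) \le k+1$.

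The main obstacle is the merging step in part (i): two $(s,q)$-layouts of $G-e$ and $G-e'$ need not combine into a layout of $G$ even when they agree on nearly all edges, because accommodating both $e$ and $e'$ simultaneously may create a new crossing or nesting. I expect this to require a second-level Erd\H{o}s--Szekeres refinement of the monotone block together with a careful local analysis of how a single edge can be absorbed into an existing stack or queue without violating its constraints. The resulting size bound will grow rapidly with the parameters, but finiteness for each fixed tuple is all that is required.
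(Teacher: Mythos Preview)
Your argument for part~(ii) contains a genuine error. The extension from matchings to arbitrary separated bipartite graphs does not go through: the premise ``any $(1,1)$-critical graph $G$ must contain a $3412$ or $2143$ subpattern $P$'' is false. Atkinson's characterisation applies to permutations, i.e.\ to separated matchings, and does not imply that every separated bipartite graph without a $(1,1)$-layout contains one of the two matching obstructions. A concrete counterexample is the $5$-edge graph with left vertices $\ell_1,\ell_2,\ell_3$, right vertices $r_1,\dots,r_5$, and edges $(\ell_1,r_3),(\ell_2,r_1),(\ell_2,r_5),(\ell_3,r_2),(\ell_3,r_4)$. With only three left vertices no four edges form a matching, so neither $3412$ nor $2143$ occurs. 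Yet the graph has no $1$-stack $1$-queue layout: putting $(\ell_1,r_3)$ in the stack forces the two edges crossing it, $(\ell_2,r_5)$ and $(\ell_3,r_4)$, into the queue, where they nest; putting it in the queue forces the two edges nested under it, $(\ell_2,r_1)$ and $(\ell_3,r_2)$, into the stack, where they cross. One checks that removing any single edge restores a $(1,1)$-layout, so this graph is $(1,1)$-critical with five edges. Your bound of $10$ edges for $2$-critical graphs fails for the same reason. (Indeed, the paper reports $20$ separated $(1,1)$-critical graphs.) The paper's route for~(ii) is entirely different: it first shows directly that every separated $(1,1)$-critical graph has maximum degree at most~$5$, by taking layouts of $G-x$ and $G-y$ for two edges $x,y$ incident to a high-degree vertex and splicing their halves across a horizontal cut of the grid; it then feeds this degree bound into the bounded-degree case~(i).

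For part~(i) your outline diverges from the paper and leaves the hard step open. The paper does not use thick patterns or Erd\H{o}s--Szekeres here; instead it encodes each layout of $G-e$ by a Boolean function $P_e\colon E(G)\setminus\{e\}\to\{0,1\}$ recording stack versus queue, shows that any two such functions differ on at most $2\Delta sq$ edges (else one finds an $(s+1)$-twist inside a union of $s$ stacks), and then invokes a combinatorial lemma of W\"arn: from sufficiently many pairwise-close partial Boolean functions one obtains a total function that agrees with some $P_e$ on every subset of size at most $q+1$, which immediately yields an $(s,q)$-layout of all of $G$. Your proposed merging of two layouts $L_e,L_{e'}$ via matching ``local layout types'' is exactly the obstacle that W\"arn's lemma resolves cleanly; the second-level Erd\H{o}s--Szekeres refinement you sketch is not obviously sufficient to produce a globally consistent layout.
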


General (non-separated) mixed layouts, however, cannot be characterized by a finite obstruction set,
as the next theorem illustrates. Our construction provides an infinite (though not complete) set of
minimal matchings that need to be included in the obstruction set and thereby implies \cref{main:critical}.


\begin{restatable}[name=\restatemarker{\cref{sec:critical_general}}]{theorem}{CriticalGeneralThm}
    \label{thm:critical_nonsep}
    For non-separated layouts, there exists an infinite number of
    \begin{enumerate}[(i)]
        \item $(s, q)$-critical matchings for all $s \ge 2$, $q \ge 0$, and
        \item $k$-critical matchings for all $k \ge 2$.
    \end{enumerate}

\end{restatable}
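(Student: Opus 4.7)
The plan is to exhibit, for each of the two parts, an explicit parameterized family $\{M_n\}_{n \in \mathbb{N}}$ of non-separated matchings that are all critical (in the appropriate sense) and pairwise non-isomorphic; since the $M_n$ will have strictly increasing number of edges, this directly implies that any obstruction set must be infinite. I would treat the two parts in a common framework: the core is a construction that takes a fixed ``root'' obstruction and glues on $n$ copies of a small ``gadget'' in such a way that every edge of the resulting matching is essential.

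For part~(ii), I would first handle $k = 2$. The goal is a matching $M_n$ with $\mn(M_n) \ge 3$ such that $\mn(M_n - e) \le 2$ for every edge~$e$. As a root I would take a small non-separated matching with $\mn = 3$; natural candidates are small thick-type patterns or their non-separated variants, which by an argument along the lines of the lower bound in \cref{thm:thick_pattern} force three pages. I would then parameterize by chaining $n$ copies of the root in a non-separated arc fashion, arranging endpoints so that each copy shares obligations with the next while no single edge belongs to two distinct obligations. For $k \ge 3$, I would wrap the $k = 2$ construction inside a $(k-2)$-rainbow of twists on disjoint vertex ranges; this contributes exactly $k - 2$ forced additional pages while preserving edge-criticality on both layers.

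For part~(i), the recipe is analogous. For $s \ge 2$ and $q = 0$, an $(s{+}1)$-twist is already $(s,0)$-critical but yields only one matching; I would produce infinitely many by iterating a non-separated twist-extension whose edges pairwise cross in a rotating pattern, so that every edge participates in a distinct forced $(s{+}1)$-twist and hence is essential. For $q \ge 1$, I would attach $q$ pairwise-nesting rainbow gadgets on disjoint vertex ranges, forcing exactly $q$ queue pages without interfering with the stack-criticality of the root.

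The main obstacle is the criticality argument: showing that for every edge $e$ of $M_n$, the matching $M_n - e$ admits a layout with one fewer page. Unlike the lower bound, which follows from identifying a forbidden substructure, the upper bound for each $M_n - e$ requires an explicit assignment of edges to pages. I would group the edges of $M_n$ by the role they play in the lower bound and, for each group, give a single family of layouts parameterized by the removed edge within the group. The flexibility afforded by non-separatedness is crucial: the ``slot'' freed by removal can be used to reroute the surrounding gadget's edges between stacks and queues. Verifying this case analysis---and in particular checking that the gluing between the root and the $n$ gadgets does not introduce further, unforced constraints that would prevent the desired layout---is the central technical step.
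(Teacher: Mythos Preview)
Your proposal is a plan rather than a proof, and the specific gadget choices contain a genuine error. For part~(i) with $q \ge 1$ you propose to attach rainbow gadgets on vertex ranges disjoint from the root in order to force $q$ queue pages. But a rainbow is crossing-free and therefore fits on a single stack; placed on ranges disjoint from the root, its edges neither cross nor (if the ranges are side by side) nest with any root edge, so the stacks already used for the root absorb the rainbow at no cost and no queue page is forced. The same defect undermines your lift in part~(ii): a thick pattern on ranges disjoint from the $k=2$ base can share pages with the base, so the mixed page number of the union is governed by the maximum of the two pieces, not their sum, and you do not reach $k+1$.

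The paper's induction step fixes exactly this by adding a \emph{covering} twist: an $(s{+}1)$-twist (respectively a $(k{+}2)$-twist) whose every edge nests over all edges of the current graph $G$. This guarantees interaction: the twist is too large for the available stacks, so some twist edge must go to a queue, and that queue is then unusable for $G$ because every edge of $G$ nests under it. Criticality after deleting a twist edge holds because the remaining twist fits on the stacks and can be merged with $G$'s stacks (here the paper additionally maintains, for part~(ii), that $G_k$ admits a $(k{+}1)$-stack layout); criticality after deleting an edge of $G$ follows from the induction hypothesis together with one fresh queue for the twist. The base cases are explicit constructions: for $(s,0)$ the paper realizes odd-cycle-like conflict graphs of arbitrary length as crossing patterns of arcs (your ``rotating twist'' intuition is in the right direction but has to be written down and its criticality checked), and for $k=2$ it gives a concrete three-part matching whose criticality is verified by a short case analysis over the three edge types.
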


\renewcommand{\restatemarker}[1]{}

\subsection{Related Work}\label{sec:related_work}

Building on earlier notions~\cite{K74,O73}, the concepts of stack and queue numbers were first investigated by Bernhart and Kainen~\cite{BK79} in 1979 and Heath and Rosenberg~\cite{HR92} in 1992, respectively. Over the last three decades, there has been an extensive research on the concepts leading to numerous results for various graph families, including planar graphs~\cite{ABGKP20, BGR23, BKKPRU20, dFdMP95, DJMMUW20, FKMPR23, Ove98, Yan89, Yan20}, $1$-planar graphs~\cite{BBKR17}, graphs with bounded genus~\cite{HI92,Mal94} or bounded treewidth~\cite{GH01}, and
bounded-degree graphs~\cite{BFGMMRU19,W08,DMW19}.
In addition, the stack and queue numbers of directed acyclic graphs~\cite{HPT99, NP23, JMU23}, where the vertex order must respect the orientation of the edges, have been studied fruitfully, e.g., for (planar) posets~\cite{NP89, KMU18, ABGKP23, FUW21, Pup23} and upward planar graphs~\cite{FFRV13, JMU22a, JMU23}.

Linear layouts are often analyzed via a machinery of forbidden patterns that can describe a family of graphs by
excluding certain (ordered) subgraphs.
In their seminal paper on queue layouts, Heath and Rosenberg~\cite{HR92} identified rainbows as a simple characterization for queue layouts. They showed that the queue number with respect to a fixed vertex order always equals the size of the largest rainbow; this result has often been used to derive bounds on the queue number~\cite{DJMMUW20, BGR23, ABGKP23, BFGMMRU19, DW04, FUW21, ABGKP20, Pup23, KMU18}.
Similarly, twists offer a natural characterization for stack layouts, although the stack number of an ordered graph does not always equal the size of the largest twist. While the size of the largest twist is clearly a lower bound on the stack number, Davies~\cite{Dav22} showed an upper bound of $\Oh(k \log k)$ on the stack number, where $k$ denotes the size of the largest twist. This very recent and asymptotically tight~\cite{KK97} bound, along with earlier larger upper bounds~\cite{Gya85,DM21,KK97,Cer07} has been useful for bounding the stack number of various graph classes~\cite{JMU23, NP23, JMU22a, FFRV13}.
For small values of $ k $, it is known that an order without a $2$-twist (that is, when $k=1$) corresponds to an outerplanar drawing of a graph, which is a $1$-stack layout.
For $k=2$ (that is, an order without a $3$-twist), five stacks are sufficient and sometimes necessary~\cite{K88,Age96};
for $k=3$, it is known that $19$ stacks suffice~\cite{Dav22}.

Although Heath and Rosenberg~\cite{HR92} already suggested to study mixed linear layouts back in 1992, specifically conjecturing that every planar graph has a $1$-stack $1$-queue layout, significant progress began only quite recently when Pupyrev~\cite{Pup18} disproved their conjecture. Subsequently, other papers have followed strengthening Pupyrev's result by showing that not even series-parallel~\cite{ABKM22} or planar bipartite graphs~\cite{FKMPR23} admit a mixed $1$-stack $1$-queue layout.
Further investigations into more general mixed linear layouts have often been proposed~\cite{BFGMMRU19, NP23, BKKPRU20}.
Results on this include complete and complete bipartite graphs~\cite{ABGKP22}, subdivisions~\cite{DW05,M20,Pup18}, and computational hardness results~\cite{CKN19}.
Very recently, Katheder, Kaufmann, Pupyrev, and Ueckerdt~\cite{KKPU24} related the queue, stack, and mixed page number to each other and asked for an improved understanding of graphs with small mixed page number both in the separated and in the general setting as, together with their results, this would fully reveal the connection between pure stack/queue and mixed layouts.
All these investigations of mixed layouts share a difficulty of analysis due to the lack of a simple characterization
similar to rainbows and twists for stack and queue layouts, which is the primary motivation for this paper.

Our investigations start with bipartite graphs having a separated layout, which is a setting that has been widely studied (sometimes, implicitly)
under different names, such as \df{2-track layouts}, \df{2-layer drawings}, or \df{partitioned drawings}~\cite{CSW04,DW05,ABGKP22,ALFS20,Woo22,DDEL11,EW94,DPW04,EW94b,Nag05,Sud04,Pem92}.
Initially, we further narrow down our focus to matchings with a separated layout.
Separated matchings can also be viewed
as permutations of the right endpoints of the edges, relative to the order of the left endpoints.
In this context, the mixed page number of an ordered matching equals the minimum number of
monotone subsequences into which the permutation can be partitioned.
This particular question has been investigated in the past:
Permutations that can be partitioned into a fixed number of monotone subsequences, and
thus matchings with a fixed mixed page number, are
characterized by a finite, though potentially very large, set of forbidden subsequences~\cite{KSW96,FH06,Wa21}.

Finally, we mention a related topic on the extremal theory of ordered graphs~\cite{Tar19,PT06}, which has been studied
from the perspective of $0$-$1$-matrices in the case of bipartite graphs with a separated layout~\cite{Tar05,JJMM24,MT04,FH92,Gen21}.
In this field, the focus is on determining how many edges (or $1$-entries) an ordered graph (or a $0$-$1$-matrix) can have while avoiding a specified family of patterns (or submatrices).
Note that the maximum number of edges on a page of a linear layout is upper-bounded by about twice the number of vertices.
Thus, a necessary (but clearly not sufficient) condition for a set of patterns to characterize a fixed mixed page number is that these patterns must
enforce the number of edges to be linear in the number of vertices.
This necessary condition is met by the F\"{u}redi-Hajnal conjecture~\cite{FH92},
which was first proven by Marcus and Tardos~\cite{MT04}.


\subsection{Connections to Other Fields}\label{sec:connections}

Here we review linear layouts from different perspectives and connect them to related~concepts.

\subparagraph{Data Structure Perspective.}
Stack layouts and queue layouts capture how well an ordered graph can be processed by stacks and queues, which can be best seen in the case of an ordered matching.
Here, the vertices are considered in the given order, and an edge is pushed into the data structure when its first endpoint is reached, and it is popped when its second endpoint is reached.
With a stack, an ordered matching can be processed if and only if the edges obey a first-in-last-out order, that is, if and only if no two edges have alternating endpoints.
Thus, an ordered matching can be processed by $ k $ stacks if and only if its stack number is at most $ k $.
Similarly, an ordered matching can be processed with a queue if and only of the edges obey a first-in-first-out order, which is equivalent to having no two nesting edges.
Again, an ordered matching can be processed by $ k $ queues if and only if its queue number is at most $ k $.
Now if we allow to use both stacks and queues and minimize the total number of data structures, we naturally obtain the mixed page number.
Note that the restriction to matchings is not necessary here: For the order in which edges sharing an endpoint are pushed and popped, the order or reverse order of the other endpoints is used as a tie breaker, depending on whether we have a stack or queue.
We remark that linear layouts are typically defined for \textit{unordered} graphs, where the task is to pick both a vertex order and an edge partition.
Translated to data structures, this asks whether a given graph admits a vertex order such that it can be processed by few stacks, queues, or both.
Altogether, investigating linear layouts improves our understanding of how the three fundamental data structures, namely graphs, stacks, and queues, interact with each other.

\subparagraph{Coloring Perspective.}
Finding a stack layout of an ordered graph is equivalent to coloring a circle graph~\cite{Dav22}.
A \emph{circle graph} is the intersection graph of chords of a circle, where two chords intersect if they cross but not if they share an endpoint.
To see the equivalence, consider a straight-line drawing of an ordered graph with the vertices forming a circle.
Now two chords cross if and only if their endpoints alternate in the linear order of the graph.
Hence, all findings on colorings of circle graphs~\cite{Gya85,DM21,KK97,Cer07,Un88,Un92,GJMP80} also apply to stack layouts and vice versa.
Most importantly, \cref{thm:sn_twists} is proved in the language of circle graphs by showing that they are $ \chi $-bounded, that is, their chromatic number is bounded by a function of their clique number~\cite{Dav22,Gya85}.
Note that a twist in an ordered graph corresponds to pairwise crossing chords, and similarly the size of the largest rainbow can be seen in the circle with its chords, up to a factor of 2.
For this, we say a set of chords is \emph{parallel} if the circle can be partitioned into two parts, one with one endpoint of each chord, and the other with the second endpoints in the reverse order.
Now if the largest rainbow is of size $ k $, then the largest set of parallel chords is at least $ k $ and at most $ 2k $.
Together, a mixed linear layout asks for a partition of the chords into two groups:
The first group is supposed to have only few pairwise crossing chords and corresponds to the set of stacks in the mixed linear layout.
And the second group should have only small sets of parallel chords and thereby corresponds to the set of queues.
We remark that an explicit coloring of the chords that represents the pages of a mixed linear layout can also be defined but requires to fix a point on the circle that serves as reference to decide which parallel chords nest in the linear layout.


\subparagraph{Upward Stack Number Perspective.}
One of the most prominent open questions in the field of linear layouts is whether planar posets
and upward planar graphs have bounded stack number~\cite{NP89,FFRV13,JMU22a}.
A directed acyclic graph is called \emph{upward planar} if it can be drawn in the plane such that the edges are crossing-free and $y$-monotone.
Despite intensive research in this direction~\cite{H93,FFRV13,BDDDMP23,MS09,BDMN23,NP23,JMU22a,JMU23}, this problem is still widely open. 
Hoping for a positive answer, bounding the mixed page number of an upward planar graph is an intermediate step before
answering the question for pure stack layouts.
In light of this, our results provide a tool to attack the problem.

\section{Separated Layouts of Bipartite Graphs}
\label{sec:separated_matching}

We start with matchings having a separated layout, which is the base for proving \cref{main:thick} in \cref{sec:general_matching}.
Recall that a linear layout of a bipartite graph $G = (V_1 \cup V_2, E)$ is \df{separated}
if all vertices of $V_1$ precede all vertices of $V_2$ in the vertex order.
For separated layouts, there is a one-to-one mapping of the edges
of a bipartite graph to (a subset of) the points of the $|V_1| \times |V_2|$ integer grid, which we call
the \emph{grid representation}; see \cref{fig:fbd_grid}.
The grid is also sometimes referred to as the \df{reduced adjacency matrix} of a bipartite graph.
In the case when $G$ is a matching, the matrix is a permutation matrix and the grid has exactly
one point in each column and each row.
Observe that the edges of a queue (stack) correspond to a monotonically increasing (decreasing)
sequence on the grid.

Next we give a linear upper bound in \cref{sec:diamond} using a comparably large obstruction set of patterns with mixed page number $ k $.
We then reduce the number of patterns to 2 in \cref{sec:thick} and still obtain a polynomial dependency on $ k $.
Note that this suffices for proving that a given class of graphs has bounded mixed page number but does not allow to compute the exact mixed page number.
This is a similar situation as we have for stack layouts, where having no $ k $-twist does not guarantee that the stack number is less than $ k $ but it does imply an upper bound of $ \Oh(k \log k) $.

\subsection{\mathtitle{\diamondtimes}{Diamond}-Patterns}
\label{sec:diamond}

Consider the grid representation of a bipartite graph $G$ with a fixed separated layout.
That is, one part of the bipartition is represented by columns, the other by rows, and each edge is described by an $ x $- and a $ y $-coordinate indicating the column, respectively the row, of its endpoints.
For two edges $e_1, e_2$ of $G$ we write $e_1 \nearrow e_2$ if $x(e_1) < x(e_2)$ and $y(e_1) < y(e_2)$.
Similarly, we write $e_1 \searrow e_2$ if $x(e_1) < x(e_2)$ and $y(e_1) > y(e_2)$.
Consider a set of $k^2$ edges indexed by $e_{i,j}$ for $1 \le i \le k$ and $1 \le j \le k$. We say that
the edges form a \df{$k$-$\diamondtimes$-pattern} if the following holds:
\begin{enumerate}[(i)]
	\item $e_{i,j} \nearrow e_{i, j+1}$ for every $1 \le i \le k$ and $1 \le j < k$, and
	\item $e_{i,j} \searrow e_{i+1, j}$ for every $1 \le i < k$ and $1 \le j \le k$.
\end{enumerate}

\begin{figure}
    \begin{subfigure}[b]{.24\linewidth}
        \center
        \includegraphics[page=1,width=\linewidth]{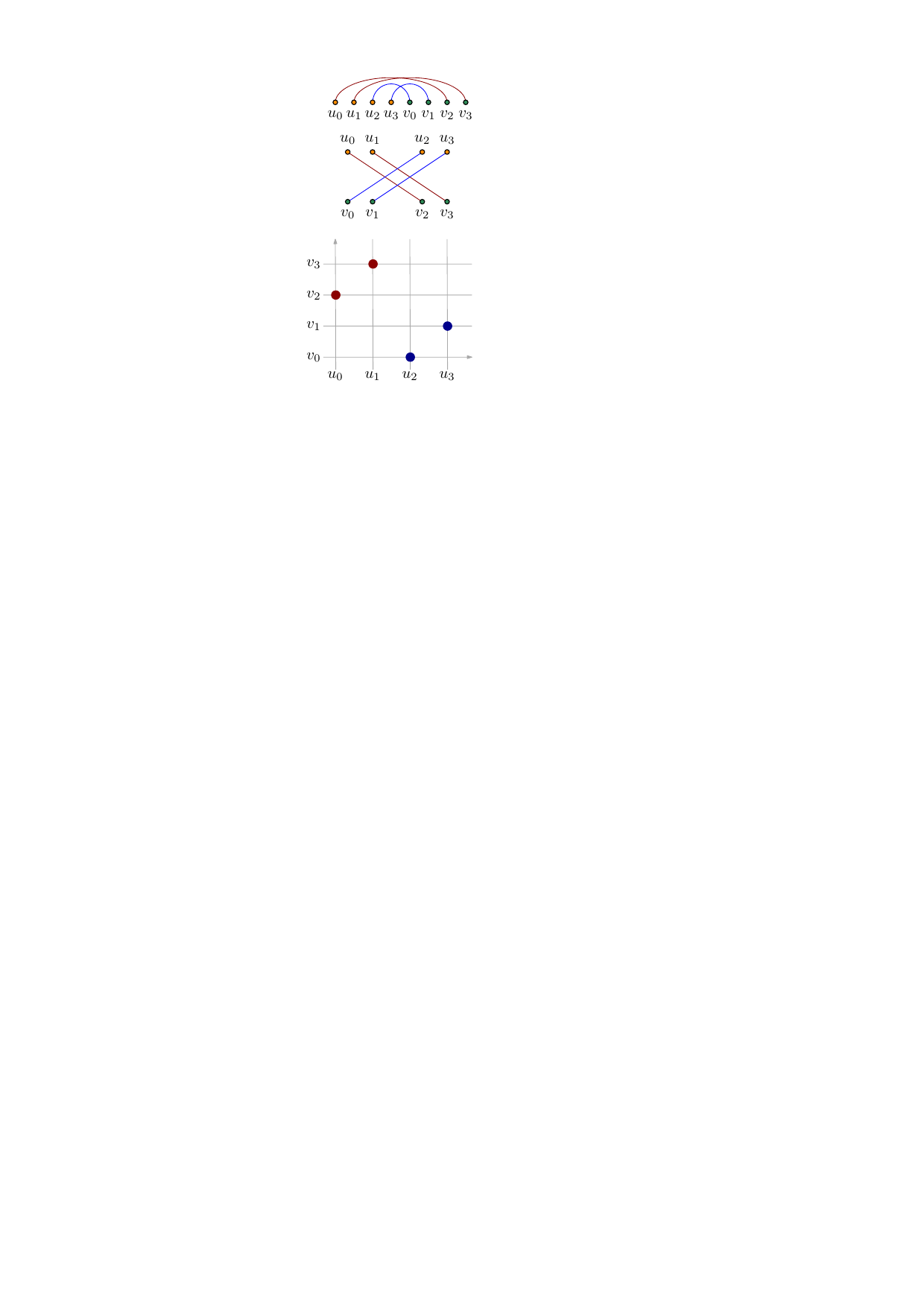}
        \caption{}
    \end{subfigure}
    \hfill
    \begin{subfigure}[b]{.24\linewidth}
        \center
        \includegraphics[page=2,width=\linewidth]{pics/forbidden_subgrids}
        \caption{}
    \end{subfigure}
    \hfill
    \begin{subfigure}[b]{.24\linewidth}
        \center
        \includegraphics[page=3,width=\linewidth]{pics/forbidden_subgrids}
        \caption{}
    \end{subfigure}
    \hfill
    \begin{subfigure}[b]{.24\linewidth}
        \center
        \includegraphics[page=4,width=\linewidth]{pics/forbidden_subgrids}
        \caption{}
    \end{subfigure}
    \caption{2-$\diamondtimes$-patterns for \cref{thm:fixed_char}}
    \label{fig:fbd_grid}
\end{figure}

That is, in the grid representation we have $ k $ increasing sequences of length $ k $, where the $ j $-th elements in each chain together form a decreasing sequence for each $ j = 1, \dots, k $.
We denote the \emph{size} of a $ k $-$ \diamondtimes $-pattern by $ k \times k $ to indicate that we have $ k^2 $ edges, where $ k $ is the parameter we are usually interested in.
Some $\diamondtimes$-patterns of size $ 2 \times 2 $ are illustrated in \cref{fig:fbd_grid}.
As it turns out, $ \diamondtimes $-patterns are deeply connected to mixed linear layouts.
Our main result on $ \diamondtimes $-patterns is summarized in the following theorem.

\FixedCharThm*

Before proving \cref{thm:fixed_char}, let us discuss $ \diamondtimes $-patterns from different perspectives.
First, observe that two edges $ e_1, e_2 $ with $ e_1 \nearrow e_2 $ have their left endpoints in the same order as there right endpoints, and as we consider separated layouts, they cross.
Thus, an increasing sequence corresponds to a twist in the linear layout.
Similarly, two edges $ e_1, e_2 $ with $ e_1 \searrow e_2 $ nest, and a decreasing set of edges forms a rainbow.
That is, the $ k^2 $ edges of a $ k $-$ \diamondtimes $-pattern from $ k $ twists of size $ k $, one for each index $ i = 1, \dots, k $, and at the same time, they form $ k $ rainbows of size $ k $, one for each index $ j = 1, \dots, k $.
Observe that each twist and each rainbow intersect in exactly one edge.
We refer to the top row of \cref{fig:fbd_grid} for an illustration.

Another perspective that we crucially use throughout the proof of \cref{thm:fixed_char} is that we interpret a
$ \diamondtimes $-pattern, or more generally any matching $ M $ with a separated vertex order, as a poset $ P(M) $ whose elements are the edges of $ M $.
Since $M$ is a matching, for every two elements, $e_1, e_2$ of $P(M)$ with $x(e_1) < x(e_2)$, it holds that either $e_1 \nearrow e_2$ or $e_1 \searrow e_2$.
In the former case, we set $e_1 < e_2$ in $P(M)$ and in the latter we make the two elements incomparable, for which we write $e_1 \parallel e_2$.
We call $ P(M) $ the \emph{poset of $ M $}.
Note that this indeed defines a poset as the relation $ < $ is the intersection of two linear orders, namely the $ x $-coordinates and the $ y $-coordinates in the grid representation.
Further observe that a chain of $ P(M) $ is increasing in the grid representation and corresponds to a twist in $ M $, and an antichain is decreasing in the grid representation and corresponds to a rainbow.
In particular, the \emph{height} of the poset, defined as the size of the largest chain, is equivalent to the size of the largest twist, and an antichain decomposition corresponds to a stack layout.
Similarly, the \emph{width} of the poset, that is, the size of the largest antichain, is the same as size of the largest rainbow, and a chain decomposition corresponds to a queue layout.
In mixed linear layouts, however, we are allowed to use both stacks and queues, which translates to a decomposition of the poset into chains and antichains.
That is, to bound the mixed page number, we aim to cover each element of the poset by a chain or an antichain (or both) and thereby minimize the number of chains plus antichains.

\begin{observation}\label{obs:diamond_poset}
	A matching $ M $ with separated vertex order admits an $ s $-stack $ q $-queue layout if and only if its poset $ P(M) $ can be covered by $ s $ antichains plus $ q $ chains.
	In particular, the mixed page number $ \mn(M) $ is at most $ m $ if and only if $ P(M) $ can be covered by $ m $ chains and antichains.
\end{observation}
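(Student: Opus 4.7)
The plan is to show the stated equivalence by unpacking definitions, using the facts already established in the preceding paragraphs about the grid representation.

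First I would verify the key preliminary fact: in a separated matching, any two distinct edges $e_1, e_2$ (with, say, $x(e_1) < x(e_2)$) either satisfy $e_1 \nearrow e_2$ or $e_1 \searrow e_2$. This is because the $x$- and $y$-coordinates in the grid representation are exactly the positions of the left and right endpoints respectively, and since $M$ is a matching these are all distinct. Combined with the earlier paragraph in this section, $e_1 \nearrow e_2$ means $e_1$ and $e_2$ cross (they form a $2$-twist), while $e_1 \searrow e_2$ means they nest (they form a $2$-rainbow). Translating to the poset: $e_1$ and $e_2$ are comparable in $P(M)$ iff they cross, and are incomparable iff they nest.

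Next I would use this correspondence twice. A page is a stack iff its edges are pairwise non-crossing; by the previous paragraph, for a matching with separated layout this is equivalent to saying the edges are pairwise nesting, i.e., pairwise incomparable in $P(M)$, i.e., form an antichain. Symmetrically, a page is a queue iff its edges are pairwise non-nesting, which by the previous paragraph is equivalent to being pairwise crossing, i.e., a chain in $P(M)$. So an $s$-stack $q$-queue layout is exactly a partition of the edges of $M$ into $s$ antichains and $q$ chains of $P(M)$, which in particular is a cover by $s$ antichains and $q$ chains.

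For the converse, I would note that a cover can be turned into a partition without loss: if an edge $e$ appears in several chains/antichains of the cover, assign $e$ to just one of them and remove it from the others. Removing elements from a chain (respectively antichain) leaves a chain (respectively antichain), so the resulting partition still consists of $s$ antichains plus $q$ chains and yields a valid $s$-stack $q$-queue layout of $M$. The \enquote{in particular} statement follows by taking $m = s+q$ and observing that the mixed page number is the minimum of $s+q$ over all valid $s$-stack $q$-queue layouts.

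Since the argument is essentially a direct dictionary translation between the two viewpoints, there is no real obstacle; the only point requiring care is the cover-versus-partition distinction, handled in the previous paragraph.
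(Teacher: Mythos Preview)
Your proposal is correct and follows exactly the paper's approach: the observation is stated without a formal proof, as the preceding paragraph already establishes that chains correspond to twists (hence queues) and antichains to rainbows (hence stacks), making the equivalence an immediate dictionary translation. Your explicit treatment of the cover-versus-partition distinction is a nice addition that the paper leaves implicit.
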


In the special case of a $ k $-$ \diamondtimes $-pattern, the poset consists of $ k^2 $ elements admitting both a chain partition of $ k $ chains of size $ k $ and an antichain partition of $ k $ antichains of size $ k $ such that each chain and each antichain share exactly one element.

Having this, the lower bound of \cref{thm:fixed_char} asks for a proof that $ P(M) $ cannot be covered by less than $ k $ chains and antichains.
In the following lemma, we show a slightly stronger statement.

\begin{lemma}\label{lem:diamond_lb}
	Let $ M $ be a $ k $-$ \diamondtimes $-pattern and let $ P(M) $ be its poset.
	Then every decomposition of $ P(G) $ into a minimum number of chains and antichains consists either of $ k $ chains or of $ k $ antichains.
\end{lemma}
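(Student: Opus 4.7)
My plan is to prove the lemma in two stages: first, that any chain--antichain decomposition of $P(M)$ uses at least $k$ parts (and this is tight); then, that every minimum decomposition is pure.

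For the lower bound, I would bound the size of any single chain or antichain in $P(M)$ by $k$. Property~(ii) of a $\diamondtimes$-pattern tells us that each column $\{e_{1,j},\dots,e_{k,j}\}$ consists of pairwise incomparable elements (all related by $\searrow$), hence forms an antichain of size $k$ in $P(M)$. Consequently, any chain of $P(M)$ meets each column in at most one element and thus has size at most $k$. Symmetrically, property~(i) makes each row a chain of length $k$, so any antichain of $P(M)$ has at most $k$ elements. Since $|P(M)|=k^2$, any decomposition into chains and antichains needs at least $k$ parts. This lower bound is tight: the $k$ rows give a pure-chain decomposition of size $k$ and the $k$ columns give a pure-antichain decomposition of size $k$.

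For the structural claim, I would argue by contradiction: suppose a minimum decomposition uses $s\geq 1$ antichains and $q\geq 1$ chains with $s+q=k$. Because $sk+qk=k^2$ and each part has at most $k$ elements, every part has exactly $k$ elements and the decomposition is actually a partition. Combined with the size bounds, each chain of the decomposition contains exactly one element per column, and each antichain contains exactly one element per row. The crucial structural step is then to show that every $k$-chain of $P(M)$ must coincide with a full row of the pattern: the grid structure of a $\diamondtimes$-pattern forces $e_{i,j}$ and $e_{i',j'}$ with $i\neq i'$ and $j\neq j'$ to be incomparable, so a chain that spans all $k$ columns cannot change rows. Once some chain in the decomposition equals a row $r$, no element of row $r$ is left for any antichain, contradicting the requirement that each antichain contains exactly one element of row $r$. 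Hence $q=0$, or symmetrically $s=0$.

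The main obstacle is precisely this structural step---justifying that every $k$-chain of $P(M)$ equals a row of the pattern (and, by symmetry, that every $k$-antichain equals a column). This requires exploiting the complete geometric structure of the $\diamondtimes$-pattern in the grid representation, beyond merely the two defining relations~(i) and~(ii); in particular, the implicit incomparability between indices differing in both coordinates is what prevents a $k$-chain from meandering across rows. Making this precise is the technical heart of the argument.
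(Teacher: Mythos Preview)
Your lower-bound argument is correct and matches the paper's: each row of the $\diamondtimes$-pattern is a chain and each column an antichain, so every chain and every antichain of $P(M)$ has size at most $k$, forcing at least $k$ parts; equality is witnessed by the rows (or the columns).

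The structural step, however, does not go through. You assert that the grid structure forces $e_{i,j}$ and $e_{i',j'}$ with $i\neq i'$, $j\neq j'$ to be incomparable, but the definition of a $\diamondtimes$-pattern constrains only pairs within a common row (comparable) or a common column (incomparable); the remaining relations are free. Concretely, for $k=2$ take the grid points $e_{1,1}=(1,3)$, $e_{2,1}=(2,1)$, $e_{1,2}=(3,4)$, $e_{2,2}=(4,2)$. Conditions~(i) and~(ii) hold, yet $e_{2,1}\nearrow e_{1,2}$, so $\{e_{2,1},e_{1,2}\}$ is a $2$-chain that is \emph{not} a row, while $e_{1,1}\searrow e_{2,2}$, so $\{e_{1,1},e_{2,2}\}$ is a $2$-antichain. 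These two sets partition all four elements, giving a minimum decomposition into one chain and one antichain. Hence your ``crucial structural step'' is simply false, and in fact the full statement of the lemma is too strong: the same example refutes the paper's own assertion that every $k$-chain and every $k$-antichain share an element. What survives---and is all that is actually used for \cref{thm:fixed_char}---is the bound $\mn(M)\ge k$, which your first paragraph establishes correctly.
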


\begin{proof}
	Recall that for each $ i = 1, \dots, k $, the elements $ e_{i, 1}, \dots, e_{i, k} $ form a chain, and thus every antichain contains only one of them.
	Hence, every antichain, and symmetrical every chain, contains at most $ k $ elements.
	As we have $ k^2 $ elements, we need at least $ k $ (anti)chains.
	Note that the $ k $ (anti)chains need to be pairwise disjoint to cover all elements, which is only achieved with $ k $ chains or with $ k $ antichains, whereas every chain of size $ k $ and every antichain of size $ k $ share exactly one element.
\end{proof}

That is, $ \diamondtimes $-patterns are particularly hard graphs for mixed linear layouts as they do not profit from the possibility to mix chains and antichains, respectively queues and stacks, which makes them a suitable candidate for a characterization.
In the remainder of this subsection, we prove the upper bound of \cref{thm:fixed_char}, which confirms that these candidates are, up to a factor of 2, the only reason for a large mixed page number.

\begin{figure}
	\centering
	\includegraphics{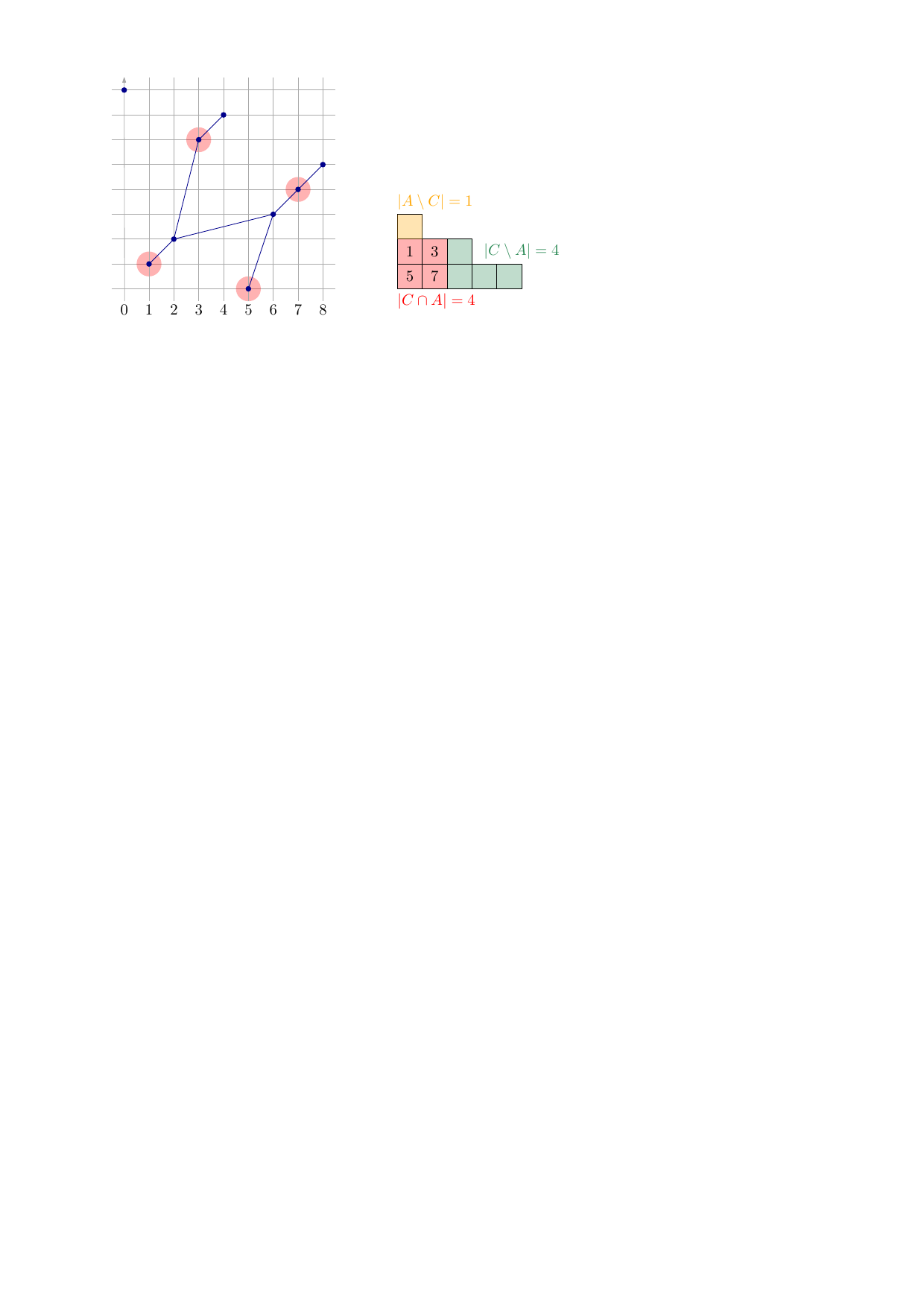}
	\caption{%
		Left: A grid representation of a matching $ M $ with edges indicating the comparabilities
        in the poset $ P(M)$ of $ M $.
		The 2-$ \diamondtimes $-pattern is highlighted red.
		Right:
        The Ferrer's diagram of $ M $ showing the partition $ |P(M)| = 5 + 3 + 1 $ by rows of length 5, 3 and 1.
        The diagram expresses that one chain can cover five elements (bottommost row), two chains can cover $ 5 + 3 = 8 $ chains (two bottommost rows), and that three chains can cover all $ 5 + 3 + 1 = 9 $ elements (all three rows).
		Note, however, that eight elements cannot be covered with two chains of length 5 and 3.
		The largest square is of size $ 2 \times 2 $ and can be filled with elements of $ P(M) $ representing a $ 2 $-$ \diamondtimes $-pattern by \cref{lem:square_diamond}.
		$ C $ and $ A $ denote the a maximum set of elements that can be covered by two chains, respectively two antichains, and the sizes of $ C \setminus A $ and $ A \setminus C $ are highlighted.
	}
	\label{fig:ferrer}
\end{figure}

The upper bound relies on insights into a Ferrer's diagram that represents how many elements can be covered by a certain number of chains, respectively antichains.
A Ferrer's diagram represents a partition of an integer by left-aligned rows of cells, where a summand $ s $ is represented by a row of length $ s $ and the rows are ordered by decreasing length from bottom to top.
See \cref{fig:ferrer} for the Ferrer's diagram of 9 with the partition $ 5 + 3 + 1 = 9 $.
For a poset $ P $, we consider the Ferrer's diagrams of $ |P| $ with the following two partitions.
For $ i \geq 0 $, let $c_i$ and $a_i$ denote the maximum number of elements of $ P $ that can be covered by $ i $ chains, respectively $ i $ antichains.
We call a set of $ i $ chains (antichains) whose union contains $ c_i $ ($a_i$) elements, a
\emph{maximum chain (antichain) $ i $-family}.
Let $ w $ and $ h $ denote the smallest $ i $ such that $ c_i = |P| $, respectively $ a_i = |P| $, that is, all elements can be covered by $ c_w $ chains or $ a_h $ antichains.
For the Ferrer's diagrams we now use the differences between two consecutive $ c_i $, respectively $ a_i $.
More precisely, let $ F_c(P) $ denote the Ferrer's diagram for the partition $ |P| = \sum_{i = 1}^{w} \hat{c}_i $, where $\hat{c}_i = c_i - c_{i-1}$.
Analogously, $ F_a(P) $ is the Ferrer's diagram for the partition $ |P| = \sum_{i = 1}^{h} \hat{a}_i $, where $\hat{a}_i = a_i - a_{i-1}$.
For our upper bound, it is key that the two diagrams have the same shape up to mirroring, which is first proved by Greene~\cite{Gre76}.
For this, two Ferrer's diagrams are called \emph{conjugate} if they are obtained from one another by flipping along the main diagonal, i.e., by making the rows to columns and vice versa.

\begin{theorem}[Greene~\cite{Gre76}]\label{thm:ferrer_conjugate}
	For every poset $ P $, the Ferrer's diagrams $ F_c(P) $ and $ F_a(P) $ are conjugate.
\end{theorem}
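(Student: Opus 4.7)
The plan is to prove Greene's theorem by showing that the sequences $\lambda := (\hat{c}_1, \hat{c}_2, \dots)$ and $\mu := (\hat{a}_1, \hat{a}_2, \dots)$ are conjugate partitions of $|P|$, i.e., that $\hat{c}_k = |\{j : \hat{a}_j \geq k\}|$ for every $k \geq 1$. The argument decomposes naturally into three ingredients: concavity, an easy one-sided inequality, and the matching lower bound.

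First, I would verify that $\lambda$ and $\mu$ are indeed partitions, i.e., that the sequences $(\hat{c}_i)$ and $(\hat{a}_i)$ are weakly decreasing. This is equivalent to the concavity of $c_i$ and $a_i$ in $i$, and it follows from a standard exchange argument: given a maximum chain $(i+1)$-family $\mathcal{C}_{i+1}$ and a maximum chain $(i-1)$-family $\mathcal{C}_{i-1}$, one can rearrange the multiset of chains in $\mathcal{C}_{i+1} \cup \mathcal{C}_{i-1}$ into two chain $i$-families whose joint cover is at least $c_{i+1} + c_{i-1}$, yielding $2 c_i \geq c_{i+1} + c_{i-1}$. The same argument works for antichains by order-theoretic duality.

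Second, I would establish the easy direction by double counting. Fix a maximum chain $k$-family $\{C_1, \dots, C_k\}$ and a partition of $P$ into antichains with sizes $\hat{a}_1, \hat{a}_2, \dots$ obtained from a maximum antichain $h$-family that covers all of $P$. Since any chain meets any antichain in at most one element, the $k$ chains collectively cover at most $\min(\hat{a}_j, k)$ elements of the $j$-th antichain. Summing over $j$ gives
\begin{equation*}
    c_k \;\leq\; \sum_{j \geq 1} \min(\hat{a}_j, k),
\end{equation*}
which is precisely the statement that the Ferrer's diagram $F_c(P)$ is contained in the conjugate of $F_a(P)$. The symmetric inequality with chains and antichains swapped holds by the same argument.

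Third, and this is where the real work lies, I need to show that equality holds above, pinning down $F_c(P)$ as exactly the conjugate of $F_a(P)$. My approach here is induction on $|P|$: remove a carefully chosen element $x$ so that its removal decreases exactly one row of $F_a(P)$ (say the last antichain of a fixed maximum antichain $h$-family containing $x$) and correspondingly adjusts $F_c(P)$ in a controlled way, after which the inductive hypothesis applied to $P - x$ gives conjugacy of the smaller diagrams; the delicate case analysis then verifies that inserting $x$ back preserves conjugacy. The main obstacle is the bookkeeping: one must track, for each possible $x$ and each index $k$, which row of $F_c(P)$ changes relative to $F_c(P - x)$, and match it against the corresponding change in $F_a$. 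This is the core of Greene's original argument, and it is most elegantly formalized via LP duality applied to the min-cost flow formulation of the maximum chain $k$-packing problem, where the antichain partition plays the role of the dual certificate.
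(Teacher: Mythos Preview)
The paper does not prove this statement; it is quoted from Greene~\cite{Gre76} and used as a black box in the proof of \cref{thm:fixed_char}. There is therefore no in-paper argument to compare against.

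Your outline follows the standard route to Greene's theorem and is broadly sound, but two steps need repair. First, the concavity argument in Step~1 does not work as written: the joint cover of the multiset $\mathcal{C}_{i+1} \cup \mathcal{C}_{i-1}$ can be as small as $c_{i+1}$ (for instance when $\mathcal{C}_{i-1}$ consists of $i-1$ chains already present in $\mathcal{C}_{i+1}$), and regrouping the same $2i$ chains into two $i$-families does not enlarge that union, so one cannot deduce $2c_i \ge c_{i+1} + c_{i-1}$ this way. Concavity is true, but its usual proofs go through the saturated-chain-partition machinery of Greene--Kleitman or the flow/LP model you allude to later. Second, Step~2 assumes an antichain \emph{partition} of $P$ whose part sizes are exactly $\hat a_1, \hat a_2, \dots$; the existence of such a simultaneous (``canonical'') partition is itself one of the nontrivial consequences of Greene--Kleitman theory, so invoking it here is circular. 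The easy direction is more safely obtained from the inequality $c_k + a_j \le kj + |P|$ for all $k,j$ (any $k$ chains and any $j$ antichains meet in at most $kj$ elements), which already gives $c_k \le \min_j\bigl(kj + |P| - a_j\bigr) = \sum_j \min(\hat a_j, k)$ without assuming anything about special partitions. With these two fixes in place, your Step~3 (induction on $|P|$, or the min-cost-flow LP-duality formulation) is indeed where the content of the theorem lies and is how it is proved in the literature.
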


That is, from now we only need to consider one of $ F_c(P) $ and $ F_a(P) $ as they have the same information, and call $ F(P) = F_c(P) $ the \emph{Ferrer's diagram of $ P $}.
For a matching $ M $ and its poset $ P(M) $, we call $ F(P(M)) $ the \emph{Ferrer's diagram of $ M $} and write $ F(M) $ for ease of presentation.
Observe that in $ F(P) $, the number of cells in the first $ i $ rows is the number of elements in $ P(M) $ that can be covered by $ i $ chains, and the number of cells in the first $ i $ columns equals the number of elements that can be covered by $ i $ antichains.

We remark that this kind of Ferrer's diagram, sometimes also called \emph{Greene's diagram}, was used by Felsner~\cite{Fel19} to show that $ P(M) $ can be covered by $ \Oh(\sqrt{n}) $ chains and antichains, which yields $ \mn(M) \in \Oh(\sqrt{n}) $.
For \cref{thm:fixed_char}, however, we want a bound depending on the size of the largest $ \diamondtimes $-pattern, so we first need to identify these patterns in the Ferrer's diagrams.
Let us first discuss why this is not a trivial task.
Recall that the Ferrer's diagram only indicates the number of elements that can be covered by a certain number of (anti)chains but does not associate the cells with elements of the poset as this is not always possible.
For an example, consider \cref{fig:ferrer} where we have three rows of size 5, 3, and 1 but there is no chain decomposition with chains of lengths 5, 3, 1.
That is, there is no way of writing elements into all cells of the diagram such that elements in the same row form a chain.
Therefore, we cannot simply use the \enquote{elements in the $ k \times k $-square} to identify a
$ k $-$ \diamondtimes $-pattern but need an elaborate argument.

\begin{lemma}\label{lem:ferrer_square}
	Let $ P $ be a poset and $ F(P) $ be its Ferrer's diagram, where the largest square is of size $ k \times k $.
	Then for every maximum chain $ k $-family $ \Ch $ and every maximum antichain $ k $-family $ \Ah $, exactly $ k^2 $ elements are covered both by $ \Ch $ and $ \Ah $.
\end{lemma}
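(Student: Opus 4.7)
The plan is to prove $|C \cap A| = k^2$, where $C$ and $A$ denote the unions of the chains in $\mathcal{C}$ and the antichains in $\mathcal{A}$, respectively, by establishing matching upper and lower bounds.

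For the upper bound, I would use the well-known fact that a chain and an antichain share at most one element. Since $\mathcal{C}$ consists of $k$ chains and $\mathcal{A}$ of $k$ antichains, an element of $C \cap A$ lies in exactly one chain of $\mathcal{C}$ (assuming we may take the families disjoint, or in at least one chain otherwise) and in one antichain of $\mathcal{A}$, giving $|C \cap A| \le k \cdot k = k^2$.

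For the lower bound, I would use inclusion--exclusion together with $C \cup A \subseteq P$ to obtain
\begin{equation*}
    |C \cap A| \;=\; |C| + |A| - |C \cup A| \;\ge\; c_k + a_k - |P|.
\end{equation*}
It then remains to show $c_k + a_k - |P| = k^2$, which is the main computation and where the assumption on the Durfee square enters. By definition, $c_k$ is the number of cells in the first $k$ rows of $F(P)$ and, by \cref{thm:ferrer_conjugate}, $a_k$ is the number of cells in the first $k$ columns of $F(P)$. Their intersection is the subregion of $F(P)$ contained in the first $k$ rows and first $k$ columns; since the largest square has size $k \times k$, the $k$-th row has length at least $k$, so this intersection is exactly the $k \times k$ Durfee square and hence has $k^2$ cells. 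On the other hand, every cell of $F(P)$ lies in the first $k$ rows or in the first $k$ columns: if some cell $(i,j)$ had both $i > k$ and $j > k$, then the first $k{+}1$ rows would each have length at least $k{+}1$, yielding a $(k{+}1) \times (k{+}1)$ square and contradicting maximality. Thus the first $k$ rows together with the first $k$ columns cover all of $F(P)$, and inclusion--exclusion on cells gives $c_k + a_k - k^2 = |P|$, as required.

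Combining the two bounds yields $|C \cap A| = k^2$. The main subtlety is the correct handling of the Durfee square argument, specifically verifying both that its size is exactly $k^2$ and that it accounts for the entire overlap between the row-strip and column-strip within $F(P)$; once this combinatorial fact is in place, the lemma follows directly from Greene's theorem and the chain--antichain intersection bound.
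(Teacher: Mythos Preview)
Your proof is correct and follows essentially the same approach as the paper: an upper bound from the chain--antichain intersection property, and a lower bound via inclusion--exclusion together with the Durfee square computation $c_k + a_k = |P| + k^2$ obtained from Greene's theorem. Your justification of the Durfee square step is in fact more explicit than the paper's, which simply asserts that the first $k$ rows and first $k$ columns cover all cells and overlap exactly in the $k\times k$ square; the only unnecessary wrinkle is your parenthetical about disjointness of the chain family, since the bound $|C\cap A|\le k^2$ holds regardless (each element of $C\cap A$ lies in some $C_i\cap A_j$, and $|C_i\cap A_j|\le 1$).
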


\begin{proof}
	First, each chain and each antichain share at most one element, and hence the $ k $ chains in $ \Ch $ and the $ k $ antichains in $ \Ah $ share at most $ k^2 $ elements.

	Now, let $ n $ be the number of elements of $ P $, and let $ C $ and $ A $ denote the set of elements covered by $ \Ch $, respectively $ \Ah $.
	Recall that $|C|$ counts the number of cells in the first $k$ rows of the Ferrer's diagram $F(P)$, and $|A|$ counts the number of cells in the first $k$ columns.
	Since the largest square in $F(P)$ is of size $k \times k$, we count all $n$ cells of $F(P)$ and double-count exactly the cells in the $k \times k$-square.
	Thus, $| C | + | A | = n + k^2$.
	We conclude that
	$|C \cap A|
	= | C | + | A | - |C \cup A|
	\geq | C | + | A | - |P|
	= n + k^2 - n = k^2$.
\end{proof}

Having this, we are now able to identify $ \diamondtimes $-patterns.

\begin{lemma}\label{lem:square_diamond}
	Let $ M $ be a matching with a separated vertex order and let $ F(M) $ be its Ferrer's diagram.
	If $ F(M) $ contains a square of size $ k \times k $, then $ M $ contains a $ \diamondtimes $-pattern of size $ k \times k $.
\end{lemma}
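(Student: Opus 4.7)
The plan is to exhibit the $k^2$ edges of a $\diamondtimes$-pattern by starting from the elements shared by a maximum chain family and a maximum antichain family (provided by \cref{lem:ferrer_square}) and then laying them out via the canonical level decomposition of the induced subposet.

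First I fix any maximum chain $k$-family $\mathcal{C} = \{C_1, \ldots, C_k\}$ and any maximum antichain $k$-family $\mathcal{A} = \{A_1, \ldots, A_k\}$, and let $S$ denote the set of their shared elements. By \cref{lem:ferrer_square} we have $|S| = k^2$, and since $|C_l \cap A_j| \le 1$ while $\sum_{l,j} |C_l \cap A_j| \ge |S| = k^2$, each $|C_l \cap A_j|$ must equal exactly $1$. Consequently $\{C_l \cap S\}_l$ partitions $S$ into $k$ disjoint chains of size $k$ and $\{A_j \cap S\}_j$ partitions $S$ into $k$ disjoint antichains of size $k$, so the induced subposet $P(S)$ has both height and width equal to $k$.

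Next I introduce the level decomposition $L_1, \ldots, L_k$ of $P(S)$: an element $e$ lies in $L_j$ if and only if the longest chain of $P(S)$ ending at $e$ has exactly $j$ elements. Each $L_j$ is an antichain of $P(S)$, so $|L_j| \le k$, and since $\sum_j |L_j| = k^2$ every level has size exactly $k$. Moreover, each chain $C_l \cap S$ meets every level in exactly one element, which I denote $u_l^{(j)}$; this gives a bijection $\phi_j \colon L_j \to L_{j+1}$ via $u_l^{(j)} \mapsto u_l^{(j+1)}$ whose matched pairs are strictly comparable in $P(S)$, so both $x$- and $y$-coordinates strictly increase along $\phi_j$.

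Finally I sort each level by $x$-coordinate as $v^j_1 < v^j_2 < \ldots < v^j_k$ and set $e_{i,j} := v^j_i$. Condition~(ii) of the $\diamondtimes$-pattern is automatic because $L_j$ is an antichain, so sorting by $x$ ascending forces $y$ descending. The main obstacle is condition~(i), namely $v^j_i < v^{j+1}_i$ in $P(S)$ for all $i,j$. I would establish it via the following elementary rearrangement fact: if $\phi \colon X \to Y$ is a bijection between finite sets of reals of the same size with $f(x) < f(\phi(x))$ for every $x$, then after sorting both sets in ascending order the $i$-th rank of $X$ is still strictly less than the $i$-th rank of $Y$ (a short pigeonhole argument on the elements above a violation). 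Applied to $\phi_j$ once with $f = x$ and once with $f = y$, this yields $x(v^j_i) < x(v^{j+1}_i)$ and $y(v^j_i) < y(v^{j+1}_i)$ for all $i,j$, so $v^j_i \nearrow v^{j+1}_i$ and the $k^2$ edges $\{e_{i,j}\}$ form the desired $\diamondtimes$-pattern.
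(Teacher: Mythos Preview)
Your argument is correct and shares the paper's starting point---the $k^2$ elements common to a maximum chain $k$-family and a maximum antichain $k$-family supplied by \cref{lem:ferrer_square}---but diverges in how the $\diamondtimes$-indexing is produced. The paper simply sets $e_{i,j}$ to be the unique element of $C_i\cap A_j$ and asserts conditions~(i) and~(ii) on the grounds that each row lies in one chain and each column in one antichain. You instead replace the given antichains by the level antichains $L_j$ of the induced subposet $P(S)$, sort each level by $x$-coordinate, and invoke a rearrangement lemma to show that the $i$-th point of $L_j$ lies below the $i$-th point of $L_{j+1}$ in both coordinates. The extra work is not redundant: with the paper's naming one only knows that $e_{i,j}$ and $e_{i,j+1}$ are \emph{comparable}, and there are legitimate choices of maximum families for which the $A_j$ are traversed in opposite orders along different $C_i$ (e.g.\ $a=(1,3)$, $b=(2,4)$, $c=(3,1)$, $d=(4,2)$ with $C_1=\{a,b\}$, $C_2=\{c,d\}$, $A_1=\{a,d\}$, $A_2=\{b,c\}$), so no mere relabelling of the $A_j$ repairs condition~(i) for all $i$ simultaneously. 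Your level decomposition sidesteps this because the levels are traversed in the same direction by every chain of $P(S)$, and the rearrangement step cleanly transfers this monotonicity to the $x$-sorted indices; thus your proof in fact tightens a point the paper's write-up leaves implicit.
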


\begin{figure}
	\centering
	\includegraphics{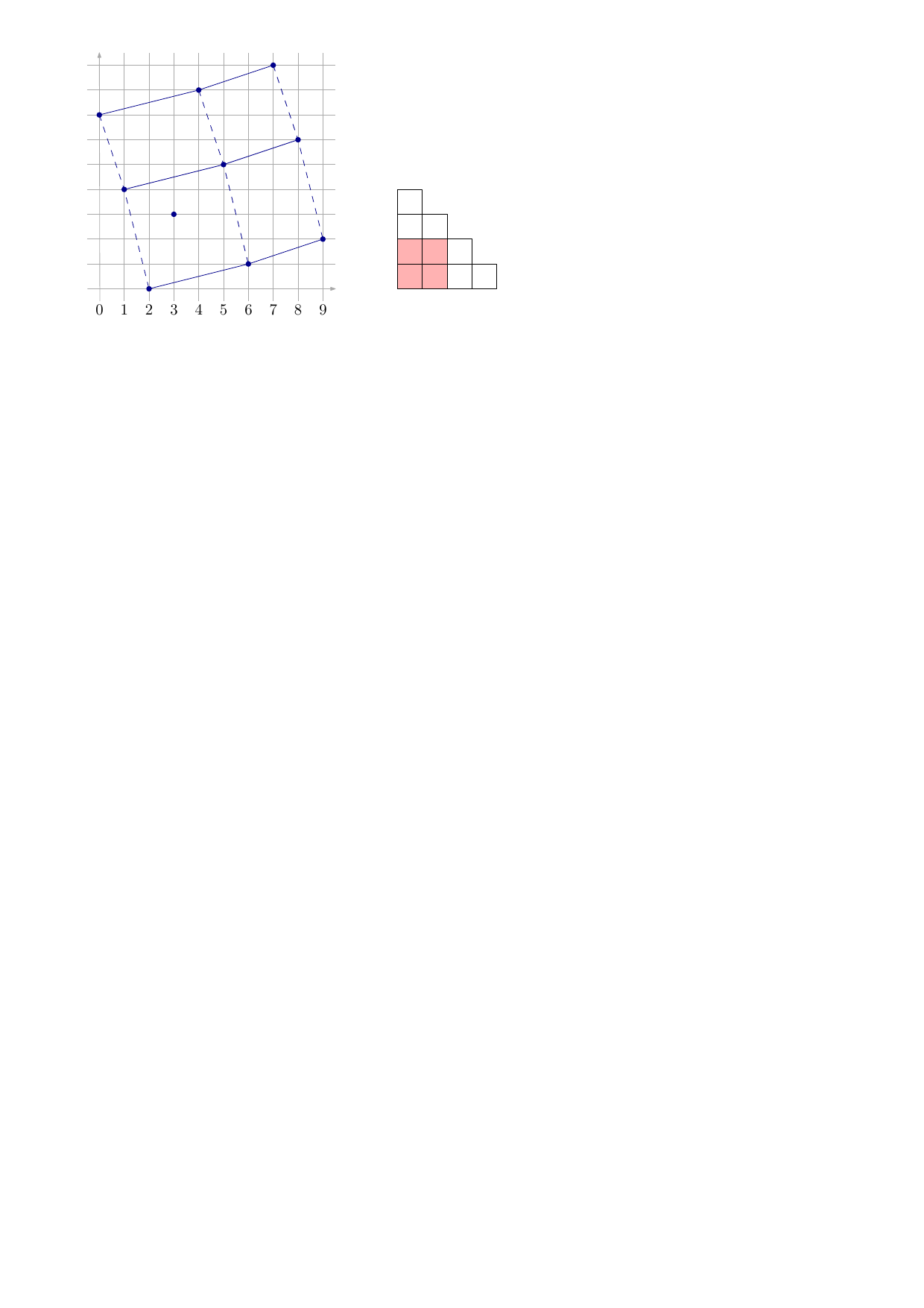}
	\caption{%
		Left: A grid representation of a matching, where some (in)comparabilities in the corresponding poset are shown with solid (dashed) edges.
		Right: The corresponding Ferrer's diagram.
		The matching contains a $\diamondtimes$-pattern of size $ 3 \times 3 $ but its Ferrer's diagram contains only a $ 2 \times 2 $-square.
		In this case, \cref{lem:ferrer_ub} gives a stronger upper bound of $ 2 \cdot 2 $ instead of $ 2 \cdot 3 $ as guaranteed by \cref{thm:fixed_char}.
	}
	\label{fig:square_diamond_reverse}
\end{figure}

Note that the reverse is not true; see \cref{fig:square_diamond_reverse}.
We remark that \cref{lem:square_diamond} is also presented by Viennot~\cite{Vie84} together with a proof sketch.
As, to the best of our knowledge, the announced full version has never appeared, we give a proof for the sake of completeness.

\begin{proof}
	Let $ P(M) $ be the poset of $ M $ and let $ \Ch = \{C_1, \dots, C_k \} $ and $ \Ah = \{A_1, \dots, A_k \} $ denote a maximum chain (antichain) $ k $-family in $ P(M) $.
	Note that each chain and each antichain share at most one element.
	By \cref{lem:ferrer_square}, $ \Ch $ and $ \Ah $ share exactly $ k^2 $ elements, so one for each pair $ C_i, A_j $, which we denote by $ e_{i,j} $ according to their containment in $ C_i \cap A_j $.

	We show that the edges $e_{i,j}$ form a $\diamondtimes$-pattern of size $k \times k $. First, consider two edges $e_{i,j}, e_{i,j+1}$ for $1 \leq i \leq k$ and $1 \leq j < k$. Their corresponding poset elements are in the same chain $C_i$, and thus
	$e_{i,j} \nearrow e_{i,j+1}$ by construction of the poset $P(M)$.
	Conversely, for two edges $e_{i,j},e_{i+1,j}$ ($1 \leq i < k$ and $1 \leq j \leq k$), the corresponding poset elements are in the same antichain $A_j$, and thus  $e_{i,j} \searrow e_{i+1,j}$. So, the edges $e_{i,j}$ indeed form a $\diamondtimes$-pattern.
\end{proof}

Recall that our goal for the upper bound is to cover all elements by $ 2k $ chains and antichains, where $ k \times k $ is the size of the largest $ \diamondtimes $-pattern.
With \cref{lem:square_diamond}, we can equivalently bound the number of chains and antichains in terms of the largest square in the Ferrer's diagram.

\begin{lemma}\label{lem:ferrer_ub}
	Let $ P $ be a poset and $ F(P) $ be its Ferrer's diagram, where the largest square is of size $ k \times k $.
	Then for every maximum chain $ k $-family $ \Ch $ and every maximum antichain $ k $-family $ \Ah $, the union of $ \Ah $ and $ \Ch $ covers $ P $.
	In particular, $ P $ can be covered by $ k $ chains plus $ k $ antichains.
\end{lemma}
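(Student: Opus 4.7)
The plan is to use the Ferrer's diagram $F(P)$ as bookkeeping and combine two counts: a global count of cells via the "largest square" hypothesis, and the local count $|C \cap A| = k^2$ already supplied by \cref{lem:ferrer_square}.

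First I would identify the sizes of $C$ and $A$ directly in $F(P)$. By the definition of $\hat{c}_i$, the number of cells in the first $k$ rows of $F(P)$ is $c_k = \sum_{i=1}^{k} \hat{c}_i$, and since $\mathcal{C}$ is a \emph{maximum} chain $k$-family we get $|C| = c_k$. Applying \cref{thm:ferrer_conjugate}, the columns of $F(P)$ have lengths $\hat{a}_1 \ge \hat{a}_2 \ge \dots$, so the number of cells in the first $k$ columns is $a_k = \sum_{j=1}^{k} \hat{a}_j$, which equals $|A|$ for the maximum antichain $k$-family $\mathcal{A}$.

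Next I would exploit that the largest square in $F(P)$ has side $k$. This translates to $\hat{c}_k \ge k$ (the first $k$ rows all have length at least $k$, so a filled $k \times k$ square lives in the upper-left corner) and $\hat{c}_{k+1} \le k$ (otherwise a $(k+1) \times (k+1)$ square would fit). Consequently every row after the first $k$ has length at most $k$, which is precisely the statement that every cell outside the first $k$ rows lies in the first $k$ columns. Thus the union of the "first $k$ rows" and "first $k$ columns" of $F(P)$ equals all of $F(P)$, and their intersection is exactly the filled $k \times k$ square of $k^2$ cells. By inclusion–exclusion,
\[
	c_k + a_k - k^2 \;=\; |P|.
\]

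Finally, \cref{lem:ferrer_square} gives $|C \cap A| = k^2$ for \emph{every} pair of maximum families $\mathcal{C}$ and $\mathcal{A}$. Inclusion–exclusion on the poset side then yields
\[
	|C \cup A| \;=\; |C| + |A| - |C \cap A| \;=\; c_k + a_k - k^2 \;=\; |P|,
\]
so $\mathcal{C} \cup \mathcal{A}$ covers all of $P$. In particular, taking any single maximum chain $k$-family and any single maximum antichain $k$-family shows that $P$ is covered by $k$ chains plus $k$ antichains. The only subtle point is the equivalence "largest square $= k$" $\iff$ "$\hat{c}_{k+1} \le k \le \hat{c}_k$", but once this is observed the proof is a one-line double counting; the real work is done by \cref{thm:ferrer_conjugate} and \cref{lem:ferrer_square}.
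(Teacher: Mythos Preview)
Your proof is correct and follows essentially the same route as the paper's: both identify $|C|$ and $|A|$ with the cell counts in the first $k$ rows and columns of $F(P)$, use the largest-square hypothesis to get $|C|+|A|=|P|+k^2$, invoke \cref{lem:ferrer_square} for $|C\cap A|=k^2$, and finish by inclusion--exclusion. You are somewhat more explicit about why the row/column union exhausts $F(P)$ (via $\hat{c}_{k+1}\le k\le \hat{c}_k$), whereas the paper simply asserts the cell count; but the argument is the same.
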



\begin{proof}
	Let $ n $ be the number of elements of $ P $, and let $ C $ and $ A $ denote the set of elements covered by $ \Ch $, respectively $ \Ah $.
	The sizes of $ C $ and $ A $ can be counted in the Ferrer's diagram and add up to $ n + k^2 $.
	Now $ P $ can be partitioned into three sets:
	First, $ |C \cap A| = k^2 $ by \cref{lem:ferrer_square}.
	And second, we have the sets $ C \setminus A $ and $ A \setminus C $ containing $ |C| - k^2 $, respectively $ |A| - k^2 $ elements.
	This adds up to $ |C \cap A | = |C \setminus A | + |A \setminus C| + |C \cap A | = |C| - k^2 + |A| - k^2 + k^2 = |C| + |A| - k^2 = n + k^2 - k^2 = n $ elements, so $ \Ch $ and $ \Ah $ together cover all $ n $ elements of~$ P $.
\end{proof}

For an interpretation of the lemmas above, let us remark that even though we cannot write numbers in all cells of the Ferrer's diagram, \cref{lem:ferrer_square,lem:square_diamond} allow us to assign an element of $ P $ to each cell in the largest square and to associate the set of remaining cells to the right with $ C \setminus A $ and the remaining cells above the square with $ A \setminus C $; see also \cref{fig:ferrer}.

Now \cref{thm:fixed_char} is proved by putting everything together.

\begin{proof}[Proof of \cref{thm:fixed_char}]
	The lower bound is handled by \cref{lem:diamond_lb}.
	For the upper bound, let $ M $ be a matching with a separated vertex order, where the largest $ \diamondtimes $-pattern has size $ k \times k $ and let $ P(M) $ be the poset of $ M $.
	We aim to show that $ M $ can be covered by $ 2k $ pages.
	For this, first consider the Ferrer's diagram $ F(M) $ of $ P(M)$, for which \cref{lem:square_diamond} shows that it does not contain a square of size $ (k+1) \times (k+1) $ as this would imply a $ \diamondtimes $-pattern of the same size.
	By \cref{lem:ferrer_ub}, $ P(M) $ can be covered by $ k $ chains plus $ k $ antichains.
	Finally \cref{obs:diamond_poset} shows that this implies that $ M $ admits a $ k $-stack $ k $-queue layout and thus has mixed page number at most $ 2k $.
\end{proof}

We remark that the Ferrer's diagram 
can be computed in $ \Oh(n^{3/2}) $ time with the Robinson-Schensted algorithm~\cite{T22}.
From the proof of \cref{thm:fixed_char} we see that this translates to a 2-approximation algorithm for the mixed page number of matchings with a given separated vertex order.
Additionally, the chain and antichain $ k $-families can be computed in $ \Oh(k n \log n) $~\cite{FW98} where $ k $ is obtained from the largest square in the Ferrer's diagram.
With $ k \leq \sqrt{n} $ this yields an $ \Oh(n^{3/2} \log n) $-time algorithm for $2k$-page assignment certifying the upper bound.
Note that having the $ k $-families, the proof of \cref{lem:square_diamond} shows how to obtain a $ k $-$\diamondtimes$-pattern, that is, we also get a certificate for the lower bound in the same time.

Finally, we show that \cref{thm:fixed_char} is tight.
First, the lower bound is indeed tight: there exist matchings with a separated layout for which the largest
$\diamondtimes$-pattern has size $k$ and whose mixed page number is exactly $k$.
For an example, we refer to the thick patterns defined in \cref{sec:thick} and illustrated in \cref{fig:thick_pattern}.
Additionally, we show that the upper bound of $2k$ can also be obtained.

\begin{lemma}\label{prop:diamond_tight}
	For every $k \geq 1$, there exists a matching with a separated layout and no $\diamondtimes$-pattern of size
	larger than $k \times k$ that has mixed page number $2k$.
\end{lemma}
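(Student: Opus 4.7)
The plan is to exhibit, for each $k \ge 1$, a matching $M$ with separated layout whose poset $P(M)$ is the disjoint union of $k$ chains of length $L$ and $L - k$ chains of length $k$, where $L := k^2 + 1$. Such a $2$-dimensional poset can be realized as a matching by placing the $L$ base chains as diagonal \emph{blocks} arranged along a descending staircase in the grid representation: within each block, the points form an ascending diagonal (making the block a chain of $P(M)$), and consecutive blocks step down and to the right, so that one block dominates another in the $x$-coordinate while being dominated in the $y$-coordinate, rendering any two elements from different base chains incomparable in $P(M)$.

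For the upper bound $\mn(M) \le 2k$, I would first verify that the largest $\diamondtimes$-pattern in $M$ has size at most $k \times k$ and then invoke \cref{thm:fixed_char}. Suppose $M$ contained a $(k{+}1)\times(k{+}1)$-pattern. Each of its $k+1$ row-chains has size $k+1 \ge 2$ and is a chain of $P(M)$, hence must lie in a single base chain. Any two such row-chains must sit in \emph{distinct} base chains: otherwise their $j$-th elements would be comparable and could not both belong to the $j$-th antichain of the pattern. Since only $k$ base chains have length at least $k+1$, the $k+1$ row-chains cannot all be accommodated, a contradiction.

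For the lower bound $\mn(M) \ge 2k$, I would analyze an arbitrary cover of $P(M)$ by $c$ chains and $a$ antichains using two structural observations: every chain of $P(M)$ of size $\ge 2$ lies in a single base chain (so $c$ chains cover at most $cL$ elements when $c \le k$), and every antichain of $P(M)$ has at most $L$ elements (one per base chain). If $c \le k$, the at least $(2k-c)L - k^2$ elements not covered by chains must be covered by antichains, forcing
\[
 a \;\ge\; 2k - c - k^2/L.
\]
With $L = k^2 + 1$ the slack $k^2/L$ is strictly less than $1$, so integrality yields $c + a \ge 2k$. The symmetric case $a \le k$ is handled by the same calculation with chains and antichains swapped (using the symmetric bound that every chain has size at most $L$), and the remaining case $c, a > k$ gives $c + a \ge 2k+2$ trivially.

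The main technical point is the choice $L$ strictly larger than $k^2$: it is precisely what forces the slack $k^2/L$ below $1$ and hence pushes the integer rounding up to $2k$ rather than only $2k - 1$. Without this choice the construction degenerates; for instance, $L = k$ yields exactly a $k \times k$-$\diamondtimes$-pattern, whose mixed page number is only $k$ by \cref{lem:diamond_lb}. Once $L$ is fixed, the remaining steps---verifying the block incomparabilities and the coverage bookkeeping (including the sub-case $c > k$ where chains cover only $kL + (c-k)k$ elements, which one checks separately but yields the same bound $c + a \ge 2k$)---are routine.
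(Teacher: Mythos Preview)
Your proof is correct, and your construction is genuinely simpler than the paper's. The paper builds its matching by combining a $k$-thick $(k^2+1)$-twist with a $k$-thick $(k(k-1)+1)$-rainbow placed above and to the left; you instead take a disjoint union of chains ($k$ of length $k^2+1$ and $k^2+1-k$ of length $k$) arranged as a descending staircase. The two posets are not isomorphic, but they share the decisive numerics: both have height and width exactly $L = k^2+1$, both have $2kL - k^2$ elements, and in both only $k$ pairwise disjoint chains can exceed length $k$. Accordingly, the lower bound via the counting inequality and the exclusion of a $(k{+}1)$-$\diamondtimes$-pattern via ``only $k$ long chains'' are essentially the same in both proofs, and both appeal (explicitly or implicitly) to \cref{thm:fixed_char} for $\mn(M) \le 2k$. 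Your staircase has the advantage that the poset structure is completely transparent; the paper's construction, in exchange, is assembled from the very thick patterns introduced in the next subsection, which ties it to the surrounding narrative. One minor simplification: your case analysis on $c \le k$ versus $c > k$ (and symmetrically on $a$) is unnecessary, since the single inequality $(c+a)L \ge |P(M)| = 2kL - k^2$ already yields $c+a > 2k-1$ for all $c,a \ge 0$, and integrality finishes.
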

\begin{proof}
	Recall that increasing sequences (chains in the language of posets) correspond to twists, and decreasing sequences
	(antichains) to rainbows.
	That is, each queue is a chain and each stack is an antichain.
	We construct a separated matching $M$ with the desired properties by taking a $k$-$\diamondtimes$-pattern and extending
	its antichains to the top left and its chains to the top right (see \cref{fig:2k_is_tight}~left).
	This way, we ensure that the matching cannot be covered by less than $2k$ pages because the antichains (blue) require
	at least $k$ stacks or even more queues and the chains (green) require at least $2k$ queues or even more stacks.
	It is crucial to carefully place the points in the grid to maintain a matching, that is,
	to ensure that there is exactly one point in every row and column, and to preserve the pattern size at
    $k \times k$, that is, to avoid unintended (in)comparabilities.
	The concrete construction for $k = 3$ is illustrated in \cref{fig:2k_is_tight}~(right), and described in the following.

	We start with a $k$-thick twist (black) and extend it to a $k$-thick $(k^2 + 1)$-twist by adding a $k$-thick $(k(k-1)+1)$-twist
	(green) to the top right.
	Then, we add a $k$-thick $(k(k-1) + 1)$-rainbow (blue) completely to the left and above the existing structure.
	We make the following two claims:
	\begin{enumerate}
		\item The height (length of the longest chain) and width (length of the longest antichain) of $M$ are both $k^2 + 1$.\label{item:height_and_width}
		\item There are at most $k$ maximal, pairwise disjoint chains of size larger than $k$.\label{item:max_disjoint_chains}
	\end{enumerate}
	Before proving the two claims, we explain their significance.
	The first claim implies that $2k$ pages are indeed
	necessary to cover $M$.
	Specifically, $2k-1$ pages cover at most $(2k-1)(k^2+1) = 2k^3 - k^2 + 2k - 1$ edges, since every stack/antichain and
	every queue/chain contains at most $k^2 + 1$ elements.
	However, the total number of elements is $k^2 + 2k(k(k-1)+1) = k^2 + 2k^3 - 2k^2 + 2k = 2k^3 - k^2 + 2k$, that is,
	at least one element is not covered.

	The second claim implies that there is no $(k+1)$-$\diamondtimes$-pattern in $M$, since every $(k+1)$-$\diamondtimes$-pattern contains at least $k+1$ maximal, pairwise disjoint chains of size at least $k+1$.

	Now, we prove the two claims.
	Note that the largest antichain of the thick rainbow (blue) has size $k(k-1) + 1$ and the largest antichain of the thick
	twist (black and green) has size $k$.
	Thus, together, the largest antichain has size $k + k(k-1)+1 = k^2+1$.

	Further, every chain contains elements either from the thick rainbow (blue) or from the thick twist (black and green),
	but not from both.
	The largest chain of the thick rainbow has size $k$, and the largest chain of the thick twist has size
    $k + k(k-1) + 1 = k^2+1$.
	Therefore, the height of the poset is $k^2 + 1$, which completes the proof of claim~\ref{item:height_and_width}.
	Moreover, the thick twist (black and green) consists of exactly $k$ such chains of size $k^2+1$.
	Thus, there are not more than $k$ chains of size larger than $k$, proving claim~\ref{item:max_disjoint_chains}.
\end{proof}

\begin{figure}[!tb]
	\includegraphics[page=2, scale=0.9]{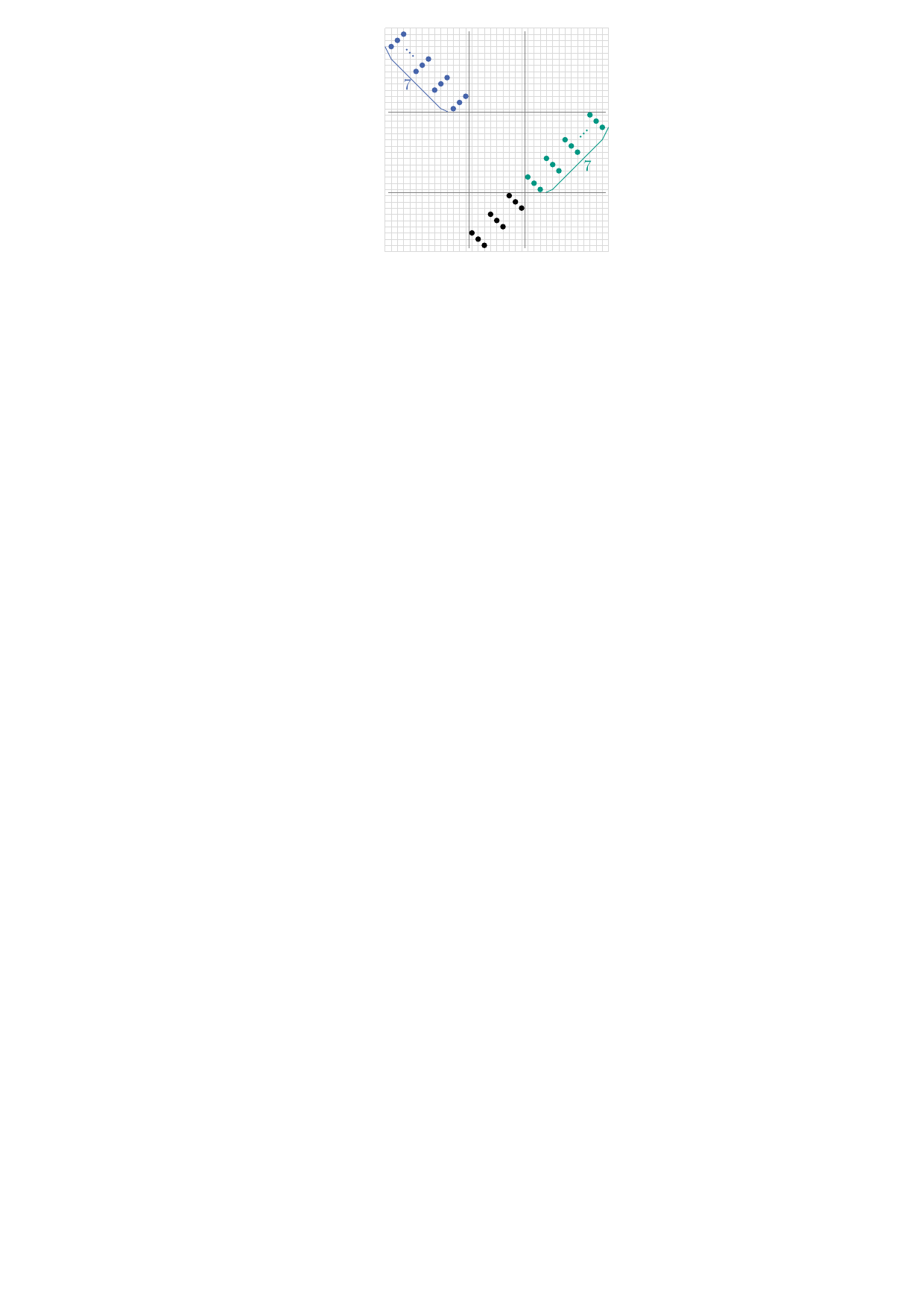}
	\hfill
	\includegraphics[page=1]{pics/2k_is_tight}
	\caption{Construction of a separated matching in grid presentation with mixed page number $2k$ that contains no $\diamondtimes$-pattern of size larger than $k \times k$.
		Left:~high-level idea of the construction: given a $k$-$\diamondtimes$-pattern (black) extend the chains to the top right (green) and the antichains to the top left (blue). Right:~concrete construction for $k = 3$}
	\label{fig:2k_is_tight}
\end{figure}

\subsection{Connection to Twists and Rainbows}%
\label{sec:thick}

\cref{thm:fixed_char} provides a set of patterns causing a large mixed page number up to a factor of 2.
In contrast to stack and queue layouts, where there is a single pattern (a twist and a rainbow, respectively),
the set of patterns is relatively large. Note that all $k$-$ \diamondtimes $-patterns need to be included in
an exact characterization of ordered graphs with mixed page number $ k $.
In this section, we describe a way of reducing the number of patterns in an obstruction set, at the cost of worsening
the bound on the mixed page number.
Specifically, we present two $ \diamondtimes $-patterns whose absence characterizes the mixed page number up to a polynomial factor.

As we may use both stacks and queues in mixed linear layouts, we combine twists and rainbows to obtain two new patterns for a characterization.
A \emph{$t$-thick $k$-twist} is obtained from a $k$-twist by replacing each edge by a $ t $-rainbow.
This yields $ k $ pairwise vertex-disjoint $t$-rainbows, where each edge nests with the edges belonging to the same rainbow but crosses the edges of all other rainbows; see \cref{fig:thick_twist}.
Similarly, a \emph{$t$-thick $k$-rainbow} is obtained from a $ k $-rainbow by replacing each edge by a $ t $-twist; see \cref{fig:thick_rainbow}.
In both cases, we call value $ t $  the \emph{thickness} and write the \emph{size} as $ k \times t $ to emphasize that we have $ k \cdot t $ edges organized as $ k $ smaller groups of size $ t $.
Throughout the paper, we often have $ t = k $ in which case we simply write $ k $-thick rainbow and $ k $-thick twist, respectively
If we mean any of the two patterns, we say \emph{$ k $-thick pattern}.

\begin{figure}
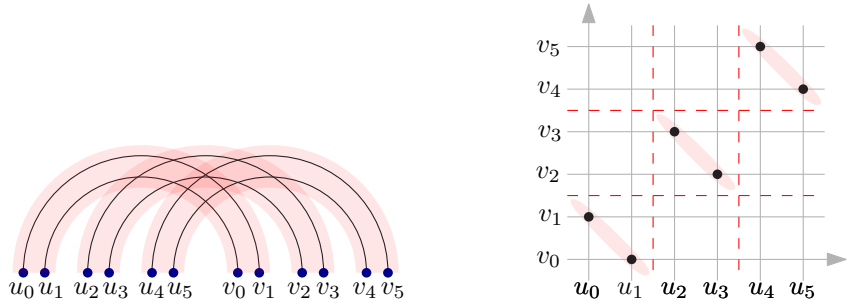
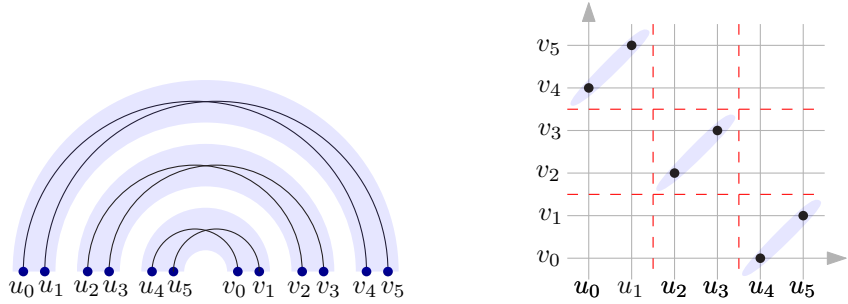

	\begin{subfigure}[b]{\linewidth}
		\centering
		\includegraphics[page=5]{pics/grid_representation_cr_nt}
		\hspace{4em}
		\includegraphics[page=6]{pics/grid_representation_cr_nt}
		\caption{A 2-thick 3-twist and its grid representation, called a 2-thick increasing 3-sequence.}
		\label{fig:thick_twist}
	\end{subfigure}
	\begin{subfigure}[b]{\linewidth}
		\vspace{\baselineskip}
		\centering
		\includegraphics[page=7]{pics/grid_representation_cr_nt}
		\hspace{4em}
		\includegraphics[page=8]{pics/grid_representation_cr_nt}
		\caption{A 2-thick 3-rainbow and its grid representation, called a 2-thick decreasing 3-sequence.}
		\label{fig:thick_rainbow}
	\end{subfigure}
	\caption{Thick patterns (left) and their grid representations (right)}
	\label{fig:thick_pattern}
\end{figure}

Next we investigate the relation between thick patterns and $\diamondtimes$-patterns. Specifically, we show that thick patterns occur if and only if $\diamondtimes$-pattern occur, where the sizes are tied by a polynomial function (\cref{lm:diamond_thick}).
Together with \cref{thm:fixed_char} we obtain the following main result of the section.

\ThickPatternThm*

Note that the lower bound is already given by \cref{thm:fixed_char}, so our task is to bound the mixed page number in terms of the largest thick pattern.
\Cref{thm:fixed_char} also gives an upper bound using $ \diamondtimes $-patterns instead of thick patterns, so we aim to show that if there is a large $ \diamondtimes $-pattern in a matching, then it also contains a large thick pattern, which proves to upper bound of \cref{thm:thick_pattern}.
This is accomplished by identifying a thick pattern within a $\diamondtimes$-pattern with the help of grid representations.
To support the connection between text and figures, we use notation that matches the setting.
The edges of an (ordered) matching, $M$, are represented by \emph{points} in the grid representation.
Recall that a set of increasing points forms a twist in $ M $, whereas a set of decreasing points represents a rainbow.
Hence, when a set of edges forms a $t$-thick $k$-twist in the linear layout, the corresponding points are
called a \emph{$t$-thick increasing $k$-sequence}, and similarly a $t$-thick $k$-rainbow is called a \emph{$t$-thick decreasing $k$-sequence};
refer to \cref{fig:thick_twist,fig:thick_rainbow} for an illustration.
If any of the two is meant, we say \emph{$t$-thick $k$-sequence}, where $t$ is called the
\emph{thickness} and $ k $ the \emph{length} of the sequence.

\begin{lemma}\label{lm:diamond_thick}
	Let $ M $ be a $ k^\dexp $-$\diamondtimes$-pattern, then $M$ contains a $k$-thick pattern.
\end{lemma}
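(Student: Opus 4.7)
The plan is to extract a $k$-thick pattern from the $k^7$-$\diamondtimes$-pattern by iteratively refining large subcollections of rows using interval-graph duality and Erd\H{o}s--Szekeres-type arguments, until the structure is clean enough to yield either a $k$-thick $k$-rainbow (from $k$ pairwise $\searrow$-nested sub-chains of size $k$) or a $k$-thick $k$-twist (from $k$ pairwise $\nearrow$-crossing sub-rainbows of size $k$).

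Setup: Set $N = k^7$ and view the $\diamondtimes$-pattern as an $N \times N$ grid, where rows $R_1, \ldots, R_N$ are chains of size $N$ and columns are antichains of size $N$. Associate to each row $R_i$ its $x$-interval $I_i^x = [x_{i,1}, x_{i,N}]$ and $y$-interval $I_i^y = [y_{i,1}, y_{i,N}]$. By the $\diamondtimes$-structure, both endpoints of $I_i^x$ strictly increase in $i$ while both endpoints of $I_i^y$ strictly decrease in $i$.

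The first stage is to apply the perfect-graph inequality $\alpha(H) \cdot \omega(H) \ge |V(H)|$ (valid for interval graphs since $\chi = \omega$ there) to the interval graph $H$ of $\{I_i^x\}_i$: this yields either many rows with pairwise disjoint $x$-intervals or a large subset whose $x$-intervals share a common point $x^*$. Applying the same inequality to the $y$-intervals of the resulting collection, I reach one of two outcomes: either (i) $k$ rows with pairwise disjoint $x$- and $y$-intervals — in which case the monotonicity of the interval endpoints forces them to be pairwise fully $\searrow$-nested, and any choice of $k$ elements from each such row yields a $k$-thick $k$-rainbow — or (ii) a substantial subset of rows shares a common coordinate in at least one direction.

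In the remaining overlap case (say the rows share $x^*$), I would split each chosen row at $x^*$ into its left half (elements with $x \le x^*$) and right half (elements with $x > x^*$). The $\diamondtimes$-structure forces the split positions to be monotonic in the row index; moreover, for any two chosen rows ordered by index, every element in the left half of the later row is $\nearrow$ every element in the right half of the earlier row (both coordinates can be bounded using the chain/antichain inequalities). Applying an Erd\H{o}s--Szekeres refinement to the heights of these elements, I would isolate $k$ groups each forming a $k$-element rainbow ($\searrow$-sequence), with all elements across different groups pairwise $\nearrow$-related — exactly a $k$-thick $k$-twist.

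The main obstacle is this overlap case: organizing the split-induced $\nearrow$-interactions into $k$ clean rainbows of size $k$ that pairwise cross requires careful bookkeeping of sizes. The exponent $7$ is chosen generously to absorb the polynomial losses through two interval-graph reductions (each potentially losing a factor of $k$) and an Erd\H{o}s--Szekeres extraction, ensuring that enough material remains to build the $k \times k$ thick structure.
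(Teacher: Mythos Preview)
Your outline proposes a genuinely different strategy from the paper's, and the disjoint case~(i) is sound: if $k$ rows have pairwise disjoint $x$- and $y$-intervals, the monotonicity of the interval endpoints forces every element of a later-indexed row to be $\searrow$-related to every element of an earlier-indexed row, so any $k$ elements from each of these rows give a $k$-thick $k$-rainbow. Your observation that, in the overlap case, every element of $L_j$ is $\nearrow$ every element of $R_i$ for $i<j$ is also correct (using $\ell\le p_j\le p_i<m$ together with the chain/antichain constraints).

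The gap is precisely where you say it is. To produce a $k$-thick $k$-twist you need $k$ pairwise-crossing $k$-rainbows, and a rainbow is a $\searrow$-sequence. Everything you have assembled in the overlap case lives inside rows (which are chains) and the relation $L_j\nearrow R_i$ is again a $\nearrow$-relation; nowhere do you explain where the $\searrow$-sequences are supposed to come from. Columns are the natural source of rainbows, but two columns restricted to the overlap rows are \emph{not} pairwise $\nearrow$-related: the comparison of $e_{i,c}$ with $e_{j,c'}$ for $i\ne j$, $c\ne c'$ is undetermined by the $\diamondtimes$-axioms. The sentence ``applying an Erd\H{o}s--Szekeres refinement to the heights of these elements'' does not name which elements are being refined or why the output is $k$ crossing rainbows rather than, say, one long chain. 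In addition, there is no size tracking: you assert that two interval-graph reductions plus one Erd\H{o}s--Szekeres step fit under the exponent~$7$, but since the overlap-case construction is missing, this cannot be checked.

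For contrast, the paper sidesteps the overlap analysis entirely via a recursive quadrant subdivision. At each step two opposite quadrants are selected, each holding at least $n^2/4$ points, and a counting argument shows each still contains a $\diamondtimes$-pattern of side length at least $(1-\sqrt{3}/2)n$. After $h=2\lceil\log_2 k\rceil$ rounds one has $2^h\ge k^2$ sub-squares, each containing a $k$-$\diamondtimes$-pattern (hence both a $k$-twist and a $k$-rainbow), and the sub-squares themselves form a permutation; Erd\H{o}s--Szekeres on that permutation yields $k$ sub-squares in monotone position, which together with the $k$-rainbow (resp.\ $k$-twist) inside each give the thick twist (resp.\ thick rainbow). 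The recursion does the size accounting for free and never has to separate ``disjoint'' from ``overlapping'' rows.
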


\begin{proof}
	Consider the grid representation, $\Gamma$, of a $k^\dexp$-$\diamondtimes$-pattern and let $ n = k^\dexp $.
	We aim to find a thick pattern by repeatedly subdividing the grid.
	In the first step, we partition the grid into four sub-squares of equal size, denoted by $Q_{1,1}$, $Q_{1,2}$, $Q_{2,1}$, and  $Q_{2,2}$ as shown in \cref{fig:grid_subdivision}.
	We claim that two of the four sub-squares in opposite corners contain at least $ n^2 / 4 $ points each.
	Indeed, since there is exactly one point in every row of the grid, $Q_{1,1}$ and $Q_{1,2}$ together contain exactly $n^2/2$ points.
	It follows that either $Q_{1,1}$ or $Q_{1,2}$ contains at least $n^2/4$ points, say $ Q_{1, 1}$.
	Then, since there is exactly one point in every column, there are at most $n^2/4$ points in $Q_{2,1}$.
	Analogously, it follows that $Q_{2,2}$ contains at least $n^2/4$ points.
	By symmetry, we conclude that either both $Q_{1,1}$ and $Q_{2,2}$ or both $Q_{1,2}$ and $Q_{2,1}$ contain at least $n^2/4$ points each.

	Next we show how to find a $ \diamondtimes $-pattern of a sufficient size whenever there are $ n^2/4 $ points in a sub-square.
	To this end, consider $ Q_{1,1} $ and assume it contains $ n^2/4 $ points; the other cases are symmetric.
	Observe that each of $n$ increasing sequences in $\Gamma$ starts in $ Q_{1,1} $, that is, the first $q \ge 0$ points of
	every increasing sequence are in $ Q_{1,1} $. Furthermore, if there are $ q $ increasing sequences having $ q $ points in $ Q_{1,1} $,
	then the sub-square contains a $ q $-$ \diamondtimes $-pattern.
	We categorize all increasing sequences into
	\textit{long}, each containing at least $q$ points for a fixed $q$, and \textit{short}, each containing at most $q-1$ points.
	Assume for a contradiction that $ Q_{1,1} $ contains no $ q $-$ \diamondtimes $-pattern; then, there are
	at most $q-1$ long sequences.
	Counting the maximum possible number of points that can be in the sub-square, we obtain at most $ q-1 $ long sequences
	contributing at most $ q $ elements plus the remaining short sequences each contributing at most $ q-1 $ points.
	This adds up to $ (q-1)n + (n - q + 1)(q-1) $, which equals $ n^2/4 $ for $ q = (1-\sqrt{3}/2)n + 1 $.
	That is, for every strictly smaller $ q^* $, say
	$ q^* = \lceil (1-\sqrt{3}/2)n \rceil \approx \lceil0.134 n\rceil$, avoiding a $ q^* $-$ \diamondtimes $-pattern contradicts the fact that $ Q_{1,1} $ contains at least $ n^2/4 $ points.
	Therefore, we find a $ q^* $-$ \diamondtimes $-pattern in $ Q_{1,1} $.

\begin{figure}[!tb]
	\centering
	\includegraphics[page=4,scale=0.6]{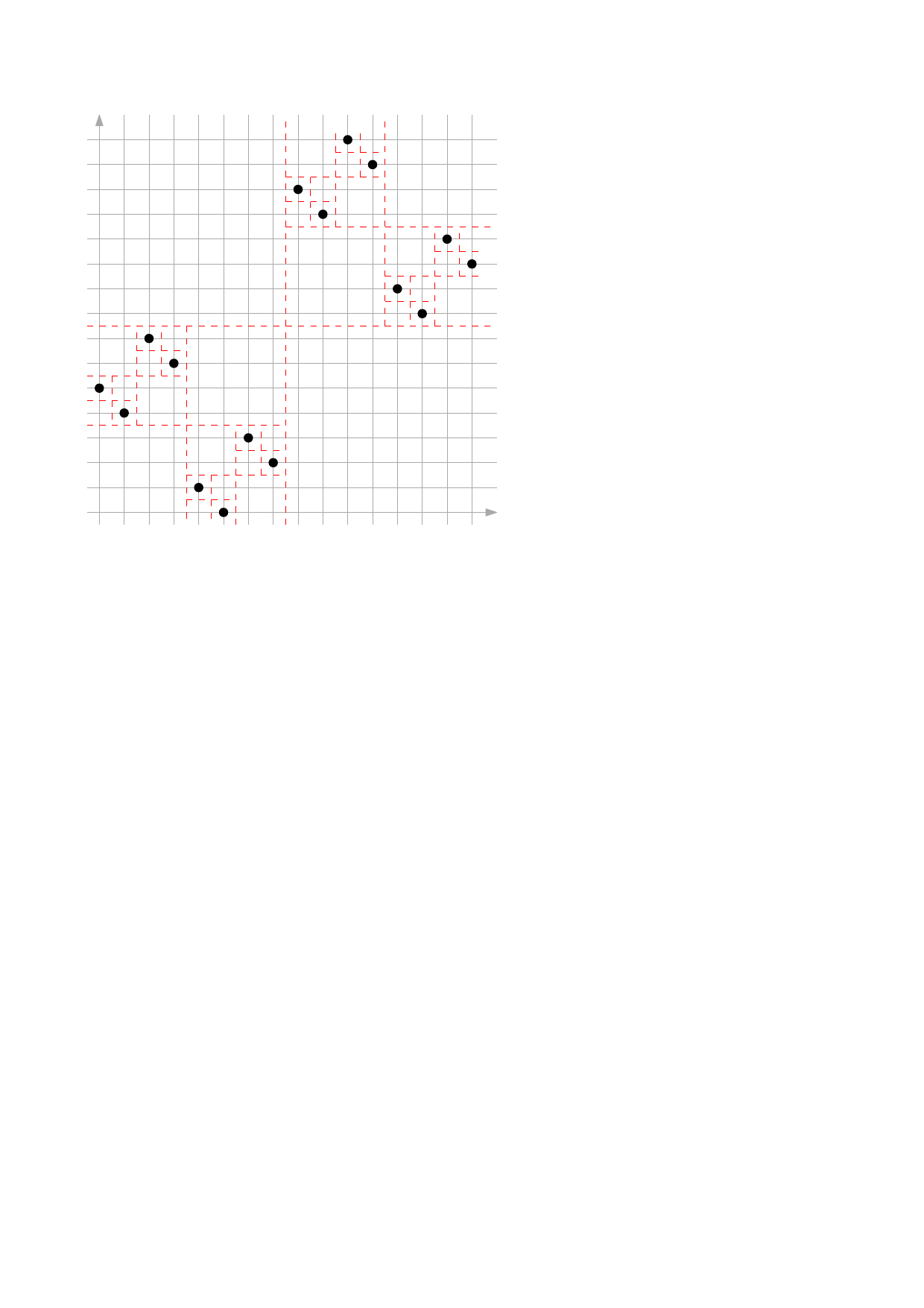}
	\caption{%
		A grid representation of a matching with a separated layout
		containing a $\diamondtimes$-pattern of size $4 \times 4 $ and a thick pattern of size $2 \times 2 $.
		The large sub-squares, $Q_{1,1}$, $Q_{1,2}$, $Q_{2,1}$, and $Q_{2,2}$, illustrate the first step of the subdivision.
		The $16$ small sub-squares (gray) show the final step of the subdivision, forming a
		permutation $\pi = (6, 5, 8, 7, 2, 1, 4, 3,  14, 13, 16, 15, 10, 9, 12, 11)$.
	}
	\label{fig:grid_subdivision}
\end{figure}

	Now, we take either $Q_{1,1}$ and $Q_{2,2}$ or $Q_{1,2}$ and $Q_{2,1}$, depending on where the most points are, and
	repeat the division into sub-sub-squares recursively.
	We apply the subdivision step $h = 2\lceil \log_2 k \rceil$ times.
	In each recursion step, the side length of the largest $\diamondtimes$-pattern is multiplied by $\lceil 1-\sqrt{3}/2 \rceil$, so
	the smallest sub-squares contain a $ \diamondtimes $-pattern with side length at least $n (\lceil 1-\sqrt{3}/2 \rceil)^h$.
	Assuming $k = 2^t$ for some $t \ge 0$ so that $h=2t$,
	the side length is lower-bounded as follows:
	\begin{equation}\label{eq:smallest_diamond}
		n (\lceil 1-\sqrt{3}/2 \rceil)^h \ge k^7 (1 - \sqrt{3}/2)^h = 2^{7t} (1 - \sqrt{3}/2)^{2t}
		= (128 (1 - \sqrt{3}/2)^2)^t \ge 2^t = k.
	\end{equation}
	Therefore, each sub-square at the lowest level contains a $ k $-$ \diamondtimes $-pattern. In particular,
	every such sub-square contains an increasing \sqf-sequence (a $\sqf$-twist) and a decreasing \sqf-sequence (a $\sqf$-rainbow).

	The final step is to argue that there is a $k$-thick pattern in $\Gamma$.
	To this end, we build a permutation, $\pi$, formed by the sub-squares. An element in the permutation
	is a sub-square with a $ q^* $-$ \diamondtimes $-pattern identified on the previous stage.
	The sub-squares are numbered with respect to their $x$-coordinates in $\Gamma$, while the
	order of the sub-squares in $\pi$ is determined by the their $y$-coordinates. Note that since on every recursive step
	we consider the opposite sub-squares only, all the coordinates are distinct and $\pi$ has exactly $2^h$ elements.
	By the Erd\"os-Szekeres theorem~\cite{ES35}, there exists an increasing subsequence in $\pi$
	of length at least $\sqrt{2^h} = k$ or a decreasing subsequence of that length.
	In the former case, the decreasing subsequence of the sub-squares together with a $\sqf$-twist in each sub-square
	forms a $k$-thick $k$-twist; in the latter case, the decreasing subsequence forms a $k$-thick $k$-rainbow.
\end{proof}

This proves the upper bound of \cref{thm:thick_pattern}, since a matching $ M $ without a $ k $-thick pattern has no $ k^\dexp $-$ \diamondtimes$-pattern by \cref{lm:diamond_thick}, which implies $ \mn(M) \leq 2 k^\dexp $ by \cref{thm:fixed_char}.
We do not know whether the $k^7$ bound is best possible. We found a family of matchings with separated layouts
whose largest thick pattern is of size $ k \times k $, while the mixed page number is $ k^2 $ and the largest
$ \diamondtimes $-pattern is of size $ k^2 \times k^2 $.
This is in contrast to general $ \diamondtimes $-patterns, that guarantee a linear upper bound for the mixed page number.

Let us describe the family of matchings.
\Cref{fig:grid_subdivision} shows a matching with separated layout that contains a
$\diamondtimes$-pattern of size $4 \times 4 $, while the largest size of a thick pattern is $2 \times 2$.
This construction can be generalized to a $\diamondtimes$-pattern of size $k^2 \times k^2 $ with no $(k+1)$-thick pattern by combining smaller such matchings alternately along the diagonal and the anti-diagonal of the grid.
That is, when subdividing the grid representation as described in the proof of \cref{lm:diamond_thick},
we select sub-squares $Q_{1,1}$ and $Q_{2,2}$ at odd levels of the recursion and sub-squares $Q_{1,2}$ and $Q_{2,1}$ at even levels.
The construction works for every $ k = 2^{h/4}$, where $ h $ is the depth of the recursion and is divisible by 4.
The largest $ \diamondtimes $-pattern uses all $ 2^h $ points, and thus is of size $ 2^{h/2} \times 2^{h/2} = k^2 \times k^2 $.
To see that there is indeed no larger thick pattern than of size $ k \times k $, recall that the grid representation of a thick pattern consists of $ k $ sub-squares of the grid that together form an increasing (decreasing) sequence, where each sub-square contains a decreasing (increasing) sequence of points.
Now consider a recursive subdivision of the constructed grid representation into four sub-squares until the sub-squares have side length $ k = 2^{h/4} = 2^{h/2 - h/4} $.
Starting with a side length of $ k^2 = 2^{h/2} $, this is reached after $ k = h/4 $ subdivision steps.
Now consider only the non-empty sub-squares of size $ k \times k $ and observe that each increasing and each decreasing sequence of these sub-squares is of length at most $ k $.
So to increase the length, smaller squares would be necessary and conversely, increasing the size of the sub-squares decreases the length.
A larger thick pattern, however, needs both larger sub-squares and more of them in a sequence, so the largest thick pattern is indeed of size $ k \times k $.


\section{Non-Separated Matchings and Bounded-Degree Graphs}%
\label{sec:general_matching}

In this section we study general (non-separated) layouts of ordered matchings and bounded-degree graphs.
We first introduce the notion of quotients.
Consider a matching $ G $ with a fixed vertex order~$ \prec $ together with a partition $ \Ih $ of the vertices into \df{intervals}, such that the vertices in each part are consecutive.
The \df{quotient} $ G / \Ih $ of $ G $ and $ \Ih $ with respect to $ \prec $ is the graph obtained from contracting each interval together with the inherited vertex order.
For each interval $I \in \Ih$, $G[I]$ denotes the subgraph of $G$ induced by the vertices of $I$.
The following lemma shows that mixed linear layouts of quotients can be transferred to the original matching.

\begin{lemma}\label{lem:quotient}
	Let $ k \geq 1 $, let $ G $ be an ordered matching without a $ (k+1) $-thick pattern, let $ \Ih $ be a partition of the vertices into intervals, and let $ H = G / \Ih $ be the quotient.
	Then we have
	\[ \mn(G) \leq 6 \ell \cdot 2k^\dexp (1 + 14 (k+1) \log(k+1) + k) + 2m \in \Oh(\ell k^8 \log(k) + m), \]
	where $ \ell = \mn(H) $ is the mixed page number of the quotient and $ m = \max_{I \in \Ih}(\mn(G[I])) $ is the maximum mixed page number induced by some interval.
\end{lemma}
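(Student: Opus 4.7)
The idea is to split $E(G)$ into \emph{internal} edges (both endpoints in one interval of $\mathcal I$) and \emph{external} edges (endpoints in distinct intervals), and handle the two classes on disjoint palettes of pages. Since the intervals are disjoint blocks of the vertex order of $G$, the subgraphs $G[I]$ do not interact in any linear layout of $G$; fixing an $m$-page mixed layout of every $G[I]$ and declaring a shared global palette of $m$ stacks and $m$ queues (each internal edge going to the global stack or queue whose index matches its local page) gives a valid layout of the internal edges on $2m$ pages.

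For the external edges we use the $\ell$-page mixed layout of $H=G/\mathcal I$ as a skeleton. Each external edge $e$ projects to a unique abstract edge $\bar e \in E(H)$, which sits on exactly one page of $H$, so the external edges partition as $E_{\text{ext}} = G_1 \sqcup \cdots \sqcup G_\ell$ with $G_p$ the edges whose abstract image lies on page $p$. By subadditivity of $\mn$, it suffices to prove
\[
  \mn(G_p) \ \le \ 12\, k^7 \bigl(1 + 14(k+1)\log(k+1) + k\bigr)
\]
for every page $p$ of $H$; summing this over the $\ell$ pages and adding the $2m$ internal pages gives the lemma. The two cases ($p$ a stack of $H$ versus $p$ a queue of $H$) are symmetric, and we outline only the stack case.

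On a stack page $p$ of $H$, the abstract edges form a laminar family on the intervals of $\mathcal I$, and for each $\bar e$ on $p$, the \emph{fiber} $F_{\bar e}$ (the external edges of $G$ that project to $\bar e$) is a matching with a separated layout between the two intervals spanning $\bar e$. As a subgraph of $G$, it contains no $(k+1)$-thick pattern, so \cref{thm:thick_pattern} gives $\mn(F_{\bar e}) \le 2 k^7$. Fix such a mixed layout of every fiber and split its pages into \emph{fiber-stacks} and \emph{fiber-queues}. Because $p$ is a stack of $H$, any two distinct fibers are either vertex-disjoint in $\mathcal I$ or nested (every edge of the inner fiber nests inside every edge of the outer one); in particular, no two edges from distinct fibers ever cross. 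Consequently, for every fiber-page index $t \in \{1, \ldots, 2k^7\}$, the union of the $t$-th fiber-stacks across all fibers is itself a single valid stack of $G$, contributing $2k^7$ stacks to $G_p$ in total. For the corresponding union $E_t$ of the $t$-th fiber-queues, the no-$(k+1)$-thick-pattern hypothesis of $G$ forces the largest twist of $E_t$ to be at most $k+1$ and the largest rainbow to be at most $k$: a larger twist or rainbow inside $E_t$, combined with the laminar nesting supplied by the surrounding fibers on $p$, would lift to a $(k+1)$-thick rainbow or a $(k+1)$-thick twist of $G$. Plugging these bounds into \cref{thm:sn_twists} and \cref{thm:qn_rainbows} gives a mixed layout of $E_t$ on $14(k+1)\log(k+1) + k$ pages, and summing over the $2k^7$ indices $t$ and absorbing small constants (a factor $2$ for the symmetric queue-page case of $H$ and a factor $3$ for the case distinction among fiber-stacks and fiber-queues) into the leading constant $12$ yields the per-page bound.

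The main obstacle is the twist bound for $E_t$: in the layout provided by \cref{thm:thick_pattern}, a single fiber-queue can itself be a twist of size up to $\mn(F_{\bar e}) = \Oh(k^7)$, far larger than the target $k+1$. One therefore has to refine each fiber-queue into sub-twists of size at most $k+1$ before merging them across fibers, while absorbing the refinement overhead into the $k^7$ factor already present rather than multiplying it on top. Arranging this refinement compatibly across the nested fibers of page $p$ of $H$, so that any large twist or rainbow appearing in $E_t$ really does lift to a $(k+1)$-thick pattern of $G$, is the technical core of the argument.
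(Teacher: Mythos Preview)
Your overall decomposition into internal and external edges, and the use of \cref{thm:thick_pattern} on separated subpatterns, matches the paper. However, the argument has two genuine gaps, one of which you do not seem to have noticed.

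\textbf{Fibers that share an interval can cross.} Your claim that on a stack page of $H$ ``no two edges from distinct fibers ever cross'' is false whenever two abstract edges $\bar e_1,\bar e_2$ on that page share an endpoint $I_a\in\mathcal I$. For example, with $I_a\prec I_b\prec I_c$, an edge $(u,v)\in F_{(I_a,I_b)}$ and an edge $(u',w)\in F_{(I_a,I_c)}$ with $u\prec u'$ in $I_a$ satisfy $u\prec u'\prec v\prec w$ and therefore cross. This already breaks your merging of the $t$-th fiber-stacks into a single stack of $G$, and it also invalidates the ``laminar nesting'' picture you rely on for $E_t$. The paper's factor of $6$ is exactly the device that repairs this: each page of $H$ is first split into six \emph{one-sided star forests}, so that distinct stars have disjoint vertex sets in $H$ and hence disjoint interval sets in $G$; abstract edges sharing an interval are absorbed into the \emph{same} star, and \cref{thm:thick_pattern} is then applied to the whole star (which is still a separated matching, since one side is the single center interval), not to individual abstract edges. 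Your ad-hoc factors ``$2$'' and ``$3$'' do not substitute for this step.

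\textbf{The twist/rainbow bounds on $E_t$ are not established.} You correctly observe that a single fiber-queue in a separated layout is a twist, possibly of unbounded size, so ``largest twist $\le k+1$'' fails outright; and you do not actually argue the rainbow bound either (a $(k{+}1)$-rainbow in $E_t$ hits distinct fibers, but nothing guarantees that each of those fibers supplies a $(k{+}1)$-twist for the lift). The paper's fix, stated at the star level, is to take from each star's queue the $k$ leftmost edges into a set $L$ (so $L$ has twists of size $\le k$, hence $14k\log k$ stacks) and to show, using the one-sided nature of the stars, that a $(k{+}1)$-rainbow in $Q\setminus L$ forces a $(k{+}1)$-thick rainbow in $G$; this is the ``technical core'' you allude to but do not carry out. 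Without both the star-forest decomposition and this $L$ versus $Q\setminus L$ split, the per-page bound cannot be closed.
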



\begin{proof}
	First observe that all edges with both endpoints in the same interval $ I $ can be covered by at most $ \mn(G[I]) $ stacks plus the same number of queues.
	As the pages can be reused for all intervals, we have $ 2 \max_{I \in \Ih}(\mn(G[I]))= 2m $ pages for all edges with both endpoints in the same interval.

	Thus, our main task is to deal with edges in $ G $ between distinct intervals corresponding to edges of the quotient $ H $.
	We consider only a single page of $ H $, which we pay with a factor of $ \ell $.
	Now having a stack or a queue, it is well known that it can be partitioned into six \df{one-sided} star forests, that is, for each part either all stars have their center to the left of their leaves or all to the right.%
	\footnote{
		Every subgraph $G'$ is also a stack, respectively queue, and thus has at most $ 2 |G'| - 3 $ edges~\cite{DW04}.
		That is, the graph in a stack, respectively queue, is 2-degenerate.
		As such it can be partitioned into stars by taking an elimination scheme and then for each vertex $ v $, take the star with center $ v $ and all its successors.
		As $ v $ has only two predecessors, the stars can be partitioned into three star forests. Now each star forest can be partitioned into two one-sided star forests.
	}
	At the cost of an additional factor of 6, we may consider a stack or a queue of $ H $ consisting of one-sided stars.
	Without loss of generality, we may assume that the stars have their center as their rightmost vertex.
	Note that a one-sided star in $ H $ induces a separated pattern in $ G $ consisting of a single interval (corresponding to the center of the star) on one side and possibly several intervals on the other.
	This is particularly convenient as we already know how to deal with separated matchings by \cref{thm:thick_pattern}.
	We fix a $ 2 k^\dexp $-page assignment $ \alpha $ for each subgraph of $ G $ induced by some one-sided star.

	\begin{figure}
		\centering
		\includegraphics{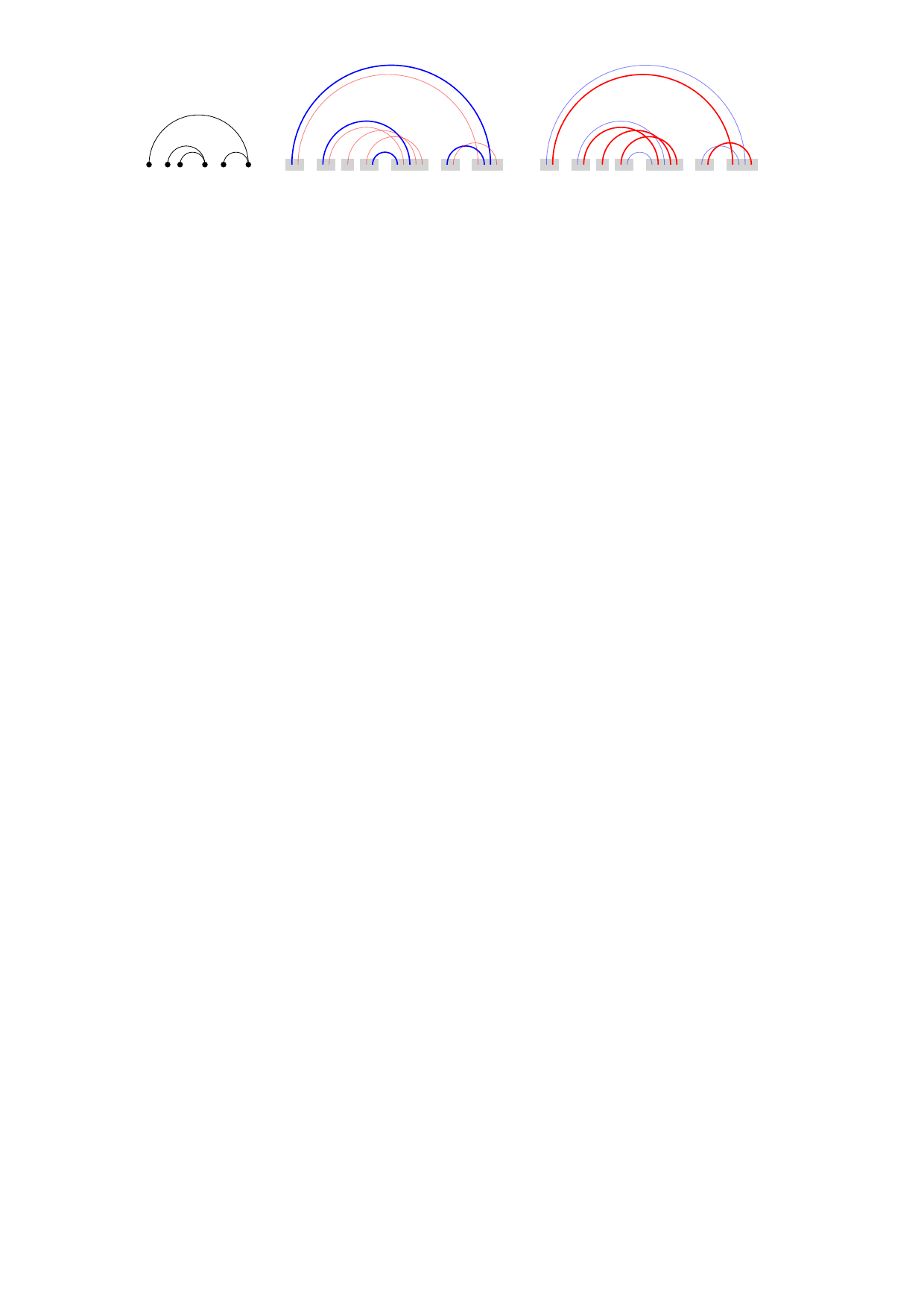}
		\caption{%
			A stack of the quotient $ H $ consisting of one-sided stars (left) and the matching induced in $ G $ (middle and right).
			Each star induces a separated pattern admitting a page assignment $ \alpha $ with at most $ 2 k^\dexp $ stacks (blue) and queues (red).
			The set $ Q $ is formed by one queue per star (red).
			The stacks can be reused for distinct stars, while the edges in the queues of all stars together are partitioned into two groups -- one having only small twists and one having only small rainbows.
			The intervals of $ \Ih $ are indicated by gray boxes.
		}
		\label{fig:quotient_stack}
	\end{figure}

	\subparagraph*{Stacks of $ \bm{H} $.}
	We are now ready to construct a mixed linear layout for a subgraph of $ G $ induced by a one-sided star forest forming a stack in the layout of $ H $; see \cref{fig:quotient_stack}.
	First, edges in $ G $ that are assigned to a stack by $ \alpha $ are easy to handle:
	As the stars do not cross, we may reuse the same $ 2 k^\dexp $ stacks.
	That is, we are left with the edges of $ G $ assigned to queues by $ \alpha $.
	Consider only one out of the up to $ 2 k^\dexp $ queues of each star and let $ Q $ denote the set of these edges.
	We pay this with a factor of $ 2 k^\dexp $.
	As the stars may nest, we cannot simply join the queues but need a more involved strategy.
	Observe that the edges of $ Q $ belonging to the same star in $ H $ are separated but do not nest, that is, they form a twist.
	Next we aim to use these twists to either cover the edges with few pages or to find a large thick rainbow.
	For each twist, take the $ k $ leftmost edges (or all if there are less than $ k $), and denote the union of all these edges of all twists by $ L $.
	As the stars do not cross, the edges in $ L $ forms twists of size at most $ k $ and can be covered by $ 14 k \log(k) $ stacks~\cite{Dav22}.

	\begin{figure}
		\centering
		\includegraphics{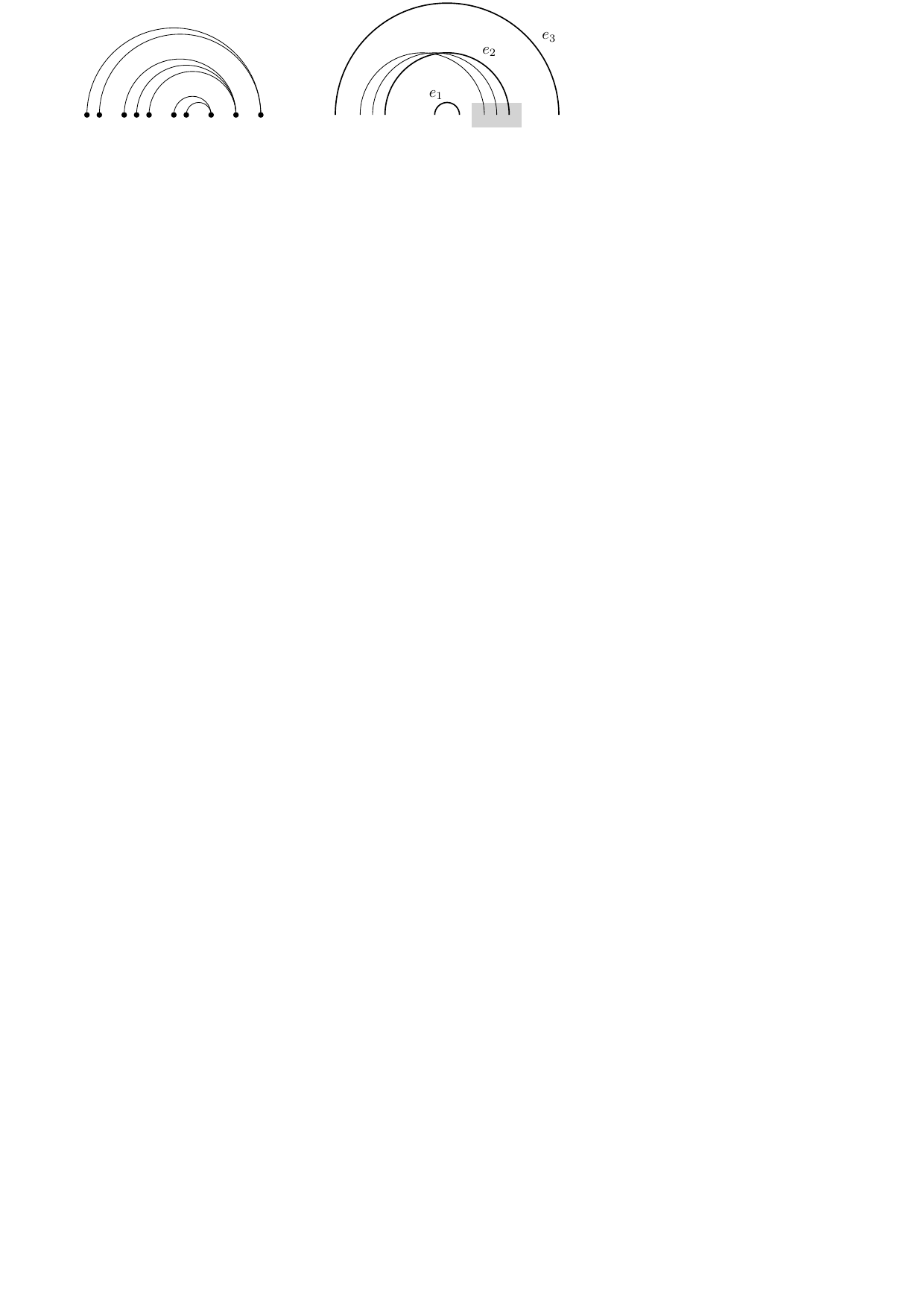}
		\caption{%
			Left: Non-crossing one-sided stars in $ H $.
			Right: A rainbow $ e_1, e_2, e_3 \in Q - L \subseteq E(G) $ together with the twist $ T_2 $.
			Recall that $ e_2 $ is chosen such that it is the rightmost edge of $ T_2 $.
			The right endpoints of $ T_2 $ are consecutive as their are in a common interval that is contracted to the center of the star in $ H $.
			The left endpoints are consecutive as the respective edges in the stars do not cross and thus the edges of $ T_2 $ do not cross any edges of other stars like $ e_3 $.
		}
		\label{fig:quotient_rainbow}
	\end{figure}

	For the remaining edges, we show that a rainbow implies a thick rainbow of the same size in $ G $, and thus they can be covered by $ k $ queues.
	Indeed, consider a rainbow $ e_1, e_2, \dots $ in $ Q - L $, where the edges nest above each other in this order.
	We refer to \cref{fig:quotient_rainbow} for an illustration of the upcoming argument.
	Recall that the edges of $ Q $ belonging to the same star form a twist and thus the edges in the rainbow belong to pairwise distinct stars.
	Let $ T_i $ denote the $ (k+1) $-twist consisting of $ e_i $ and the $ k $ edges in $ L $ induced by the same star.
	Also recall that the stars are in a stack of $ H $ and as their edges do not cross, the same holds for edges in $ G $ belonging to distinct stars.
	In particular, the edges of the twist $ T_i $ do not cross $ e_{i + 1 } $ nor $ e_{i-1} $.
	Therefore, the left endpoints of $ T_i $ are between the left endpoints of $ e_i $ and $ e_{i + 1} $.
	For the right endpoints of $ T_i $, recall that the stars are one-sided with the center to the very right.
	As such, the right endpoints of $ T_i $ belong to a common interval of $ \Ih $, and thus are to the right of $e_{i-1}$.
	Since $ T_i $ is a twist, its right endpoints are between the right endpoints of $ e_{i-1} $ and $ e_i $.
	This yields $ (k+1) $-twists that are pairwise nesting, that is, if there is a $ (k+1) $-rainbow in $ Q - L $, then we obtain a $ (k+1) $-thick rainbow in $ G $, a contradiction.
	Hence, there is no $ (k+1) $-rainbow in $ Q - L $ and $ k $ queues suffice for these edges.

	To sum up, we have $ 2k^\dexp $ stacks for the edges assigned to stacks by $ \alpha $, plus $ 14 k \log(k) $ stacks for $ L $ and $ k $ queues for $ Q - L $, which we do $ 2 k^\dexp $ times as this is the number of queues $ \alpha $ may use.
	This yields $ 2 k^\dexp + (14 k \log(k) + k) \cdot 2k^\dexp $ for all edges induced by one star forest in $ H $ that is in a stack in the layout of $ H $.

	\subparagraph*{Queues of $\bm{H}$.}
	The approach for a queue of $ H $ is fairly similar.
	As discussed above, we have one-sided stars in the queue and a fixed page assignment $ \alpha $ for the separated subpatterns induced by each star individually.
	All edges in $ G $ that are assigned to a queue by $ \alpha $ may share the same $ 2 k^\dexp $ queues so this time the edges assigned to stacks are the challenge.
	Take one stack per star and let $ S $ be the union of these edges.
	Again, we pay this with a factor of $ 2k^\dexp $ as $ \alpha $ uses up to $ 2 k^\dexp $ stacks.
	First, consider a single star and the edges of $ S $ it induces in $ G $.
	As the these edges are a separated and fit into a stack, they form a rainbow.
	That is, $ S $ is the disjoint union of rainbows, one per star.
	Now, for each rainbow, take the $ k $ bottommost edges (or all it there are less than $ k $) and let $ B $ denote the union of all these edges.
	As the edges of the stars are in a queue of $H$ and thus do not nest, the same holds for the edges of distinct stars.
	Hence, the largest rainbow in $ B $ has size at most $ k $ and $ B $ can be covered by $ k $ queues.

	The main difference between stacks and queues of $ H $ is how we handle the remaining edges in $ S - B $.
	We aim to show that a twist in $ S - B $ implies a thick twist in $ G $ of similar (but not exact the same) size.
	So consider a twist in $ S - B $ with edges $ e_1, \dots, e_r $ from left to right.
	For each $ e_i $, let $ R_i $ denote the $ (k+1) $-rainbow consisting of $ e_i $ and the $ k $ edges in $ B $ belonging to the same star.
	Recall that the stars have their centers to the right, that is, for each star the right endpoints of the edges induced in $ G $ are consecutive.
	To find a thick twist, we also need the left endpoints to be consecutive.
	Indeed, this is the case for all but the rightmost rainbow, as otherwise some edge of $ R_i $ nests below $ e_{i+1} $; see \cref{fig:quotient_twist}.
	This however, would also imply that the respective edges in $ H $ nest, which is not the case as we consider a queue of $ H $.
	We conclude that if there is a $ (k+2) $-twist in $ S - B $, then we obtain a $ (k+1) $-thick twist in $ G $, a contradiction.
	Therefore, the edges in $ S - B $ can be covered by $ 14 (k + 1) \log(k+1) $ stacks~\cite{Dav22}.

	\begin{figure}
		\centering
		\includegraphics{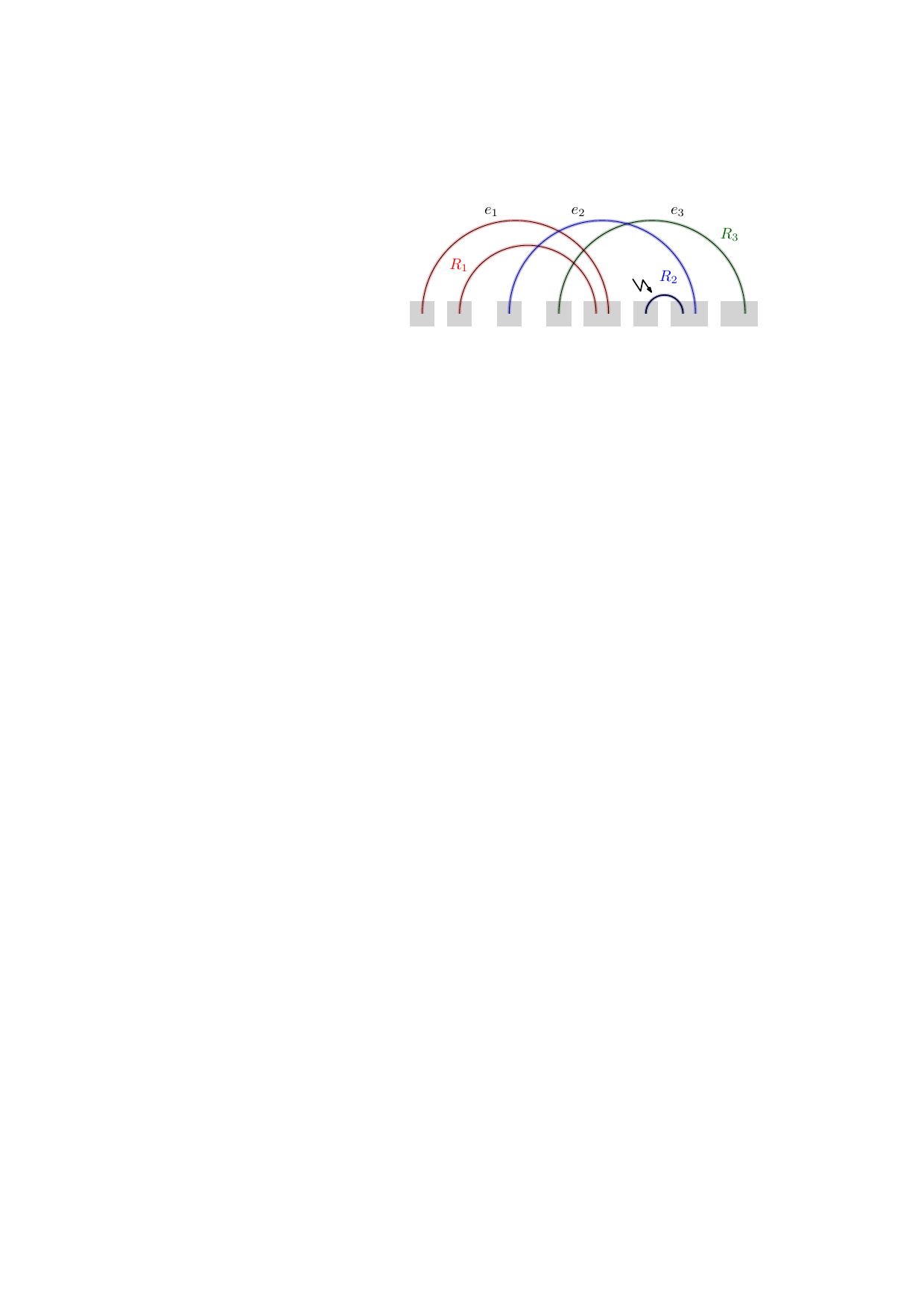}
		\caption{%
			Rainbows $ R_1, R_2, R_3 $ with their topmost edges $ e_1, e_2, e_3 $ forming a twist.
			If an edge of $ R_2 $ does not have its left endpoint close to $ e_2 $, then it nests below $ e_3 $.
		}
		\label{fig:quotient_twist}
	\end{figure}

	Adding up all pages, we have $ 2 k^\dexp $ queues for the edges assigned to queues by $ \alpha $, plus $ 2 k^\dexp $ times $ k $ queues for $ S $ and $ 14 (k + 1) \log(k+1) $ stacks for $ S - B $.
\end{proof}

Next we use \cref{lem:quotient} to prove our main result of this section, namely that it suffices to bound the size of the largest thick pattern to obtain bounded mixed page number.
Recall that this is also necessary as $ k $-thick patterns have mixed page number $ k $.

\GeneralMatchingThm*

\begin{proof}
	The idea is to start with $ G $ and apply \cref{lem:quotient} repeatedly, that is, we define suitable intervals, contract them, and then repeat the same procedure on the quotient graph. After at most $k-1$ steps, we show that the mixed page number of the obtained quotient graph is bounded. Applying \cref{lem:quotient} to all the levels of contractions then gives us that the mixed page number of $G$ is bounded.

	More detailed, we define a sequence of graphs $ H_1, H_2, \dots $, starting with $H_1 = G$.
	If for some $ i \geq 1 $, there is no $(k+1)$-twist in $H_i$, the stack number, and thus the mixed page number of $H_i$ is bounded by $\mathcal{O}(k \log k)$~\cite{Dav22} and we stop the procedure.
	Otherwise, we define $ H_{i+1} $ as the quotient of $ H_i $ and the partition $\Ih_i$ into intervals $I^i_1, I^i_2, \dots$ of $H_i$ as follows.
	The first interval $I^i_1$ starts with the first vertex of $H_i$ in the given vertex order. From left to right, we add vertices to $I^i_1$ until it contains a $(k+1)$-twist. In particular, the last vertex of $I^i_1$ is the rightmost right endpoint of a $(k+1)$-twist and is followed by the first vertex of $ I^i_2 $.
	The next interval $I^i_2$ is then defined in the same way, that is, starting from its first vertex, it includes the minimum number of vertices such that it contains a $ (k+1) $-twist (or all if there is no $ (k+1)$-twist in the remaining vertices).
	We continue until every vertex is in one of the intervals. Then, $H_{i+1}$ is defined as the quotient $H_i / \Ih_i$.
	Note that every interval, except possibly the last one, contains a $(k+1)$-twist.
	On the other hand, by construction, there is no $(k+2)$-twist in the subgraph of $H_i$ induced by any of the intervals, so the mixed page number within the intervals is bounded by $\mathcal{O}(k \log k)$~\cite{Dav22}.

	Next we show that the procedure stops with $ H_k $ (or before).
	Suppose to the contrary that there is always a $(k+1)$-twist in $H_i$ for $i = 1,\dots,k$.
	We show that this implies a $k$-thick rainbow in $G$, which is a contradiction.
	That is, we now look for $k$-twists, one in each $ H_i $, that nest above each other.
	For this, consider a $(k+1)$-twist with vertices $ u_1, \dots, u_{k+1}, v_1, \dots, v_{k+1} $ in $H_i$, for $2 \leq i \leq k$, and note that the first $k$ edges form a $k$-twist which nests above $ u_{k+1}$.
	Now recall that every vertex of $ H_i $ (except for the last) is the result of contracting an interval of $ H_{i-1} $ containing a $ (k+1) $-twist.
	Thus, a $(k+1)$-twist in $H_i$ corresponds to a $k$-twist nesting above a $(k+1)$-twist of $H_{i-1}$.
	By transitivity of the nesting relation, this gives a $k$-thick rainbow, a contradiction.

	It follows that we make at most $k$ steps until $ H_k $ is the last quotient we obtain and has no further $(k+1)$-twist.
	Then, the mixed page number of $H_k$ is bounded by $\Oh(k \log k)$.
	Also recall that the mixed page number within the intervals is $ \Oh(k \log k) $ and thus is dominated by the quotient.
	Hence for $ H_{k-1} $, \cref{lem:quotient} yields
	$ \mn(H_{k-1}) \leq \mn(H_k) \cdot k^8 \log k + \max_{j} \mn(I^{k-1}_j)
	\in \Oh(k \log k \cdot k^8 \log k) $.
	Similarly, applying \cref{lem:quotient} to $H_{i-1}$ and $H_{i} = H_{i-1} / \Ih_{i-1}$ for $2 \leq i \leq k$ gives us a factor of $\Oh(k^8 \log k)$ each time.
	After $k-1$ steps, we obtain an upper bound of $ \Oh(k^{8(k-1)+1}\log^k (k)) $ on the mixed page number of $H_1 = G$.
\end{proof}

With Vizings's theorem~\cite{Viz65}, \cref{thm:general_matching} generalizes to bounded-degree graphs.
To this end, we find a proper edge-coloring of a graph and apply the theorem to each of the matchings.
Note that the size of the largest thick pattern gives a trivial linear lower bound.

\BoundedDegreeGraphThm*

\section{Critical Graphs}\label{sec:critical}

In this section, we focus on the exact characterization of
mixed linear layouts. That is, the goal is to construct, for an integer $k>0$,
a finite obstruction set of patterns for ordered graphs with mixed page number exactly $k$.
To this end, we introduce the concept of \df{critical graphs} that are minimal graphs that do not
admit a mixed linear layout with a certain number of pages. Suppose $G=(V, E)$ is a graph
with a fixed vertex order $\prec$. For $s,q \ge 0$, we call $G$ \df{$(s,q)$-critical} if it does
not admit an $s$-stack $q$-queue layout under $\prec$ but every subgraph $G-e$ (that is, $G$ after deleting edge $e$)
for all edges $e \in E$ does. Similarly, \df{$k$-critical graphs} are minimal graphs not admitting a layout on
mixed $k = s + q$ (for some $s, q \ge 0$) pages.

We investigate critical graphs both for separated and general (non-separated) layouts.
As mentioned in \cref{sec:related_work}, separated matchings can be viewed as permutations;
prior results for the setting imply that the number of critical separated matchings is finite~\cite{KSW96,FH06,Wa21}.
We extend the results to separated bounded-degree graphs and to separated $2$-critical graphs,
thus, showing the existence of a characterization of separated layouts with a fixed mixed page number
for the classes of graphs.
For the non-separated case, however, we construct an infinite set of $k$-critical matchings
for every $k \ge 2$ and an infinite set of $(s, q)$-critical matchings for every $s \ge 2$, $q \ge 0$.
The construction answers the exact version of \cref{op:mn_characterization} in negative.
Our results of this section are summarized in \cref{fig:critical_graphs_overview}.

\begin{figure}
	\centering
	\includegraphics[]{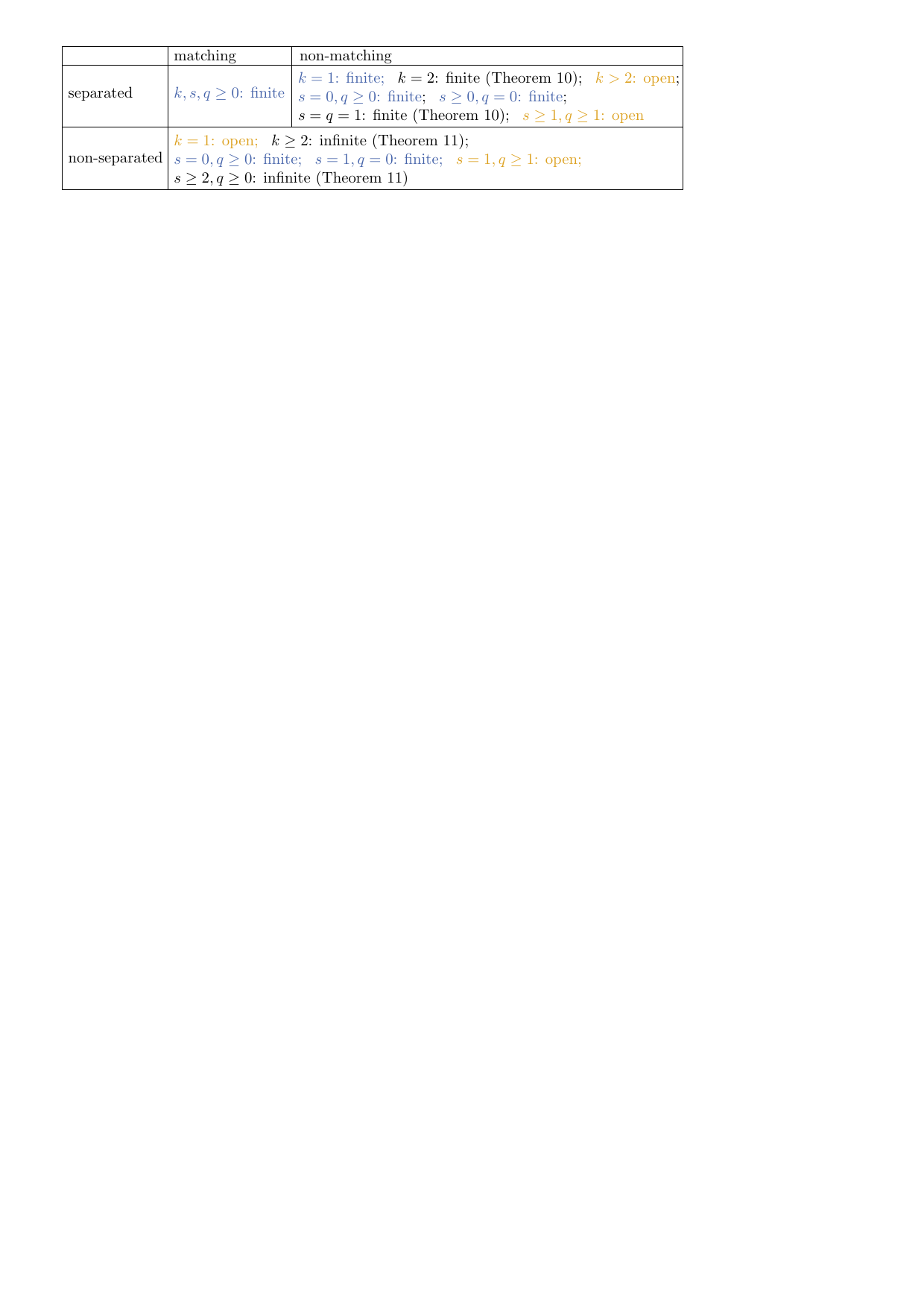}
	\caption{An overview visualizing whether the number of $k$-critical and $(s,q)$-critical graphs is finite or infinite in the cases
	of separated/non-separated layouts of matchings/non-matchings. The blue cases follow immediately from previous results,
    which we discuss in \cref{sec:critical_separated} and \cref{sec:critical_general}.}
	\label{fig:critical_graphs_overview}
\end{figure}

%

\subsection{Critical Graphs for Separated Layouts}
\label{sec:critical_separated}

A separated matching $G$ corresponds to a permutation, $\pi(G)$, of the right endpoints of the edges
relative to the left endpoints. Hence, the separated mixed page number of a
matching $G$ equals the minimum number of monotone subsequences that $\pi(G)$ can be partitioned
into. It is known that the set of permutations that can be partitioned into a fixed
number of monotone subsequences are characterized by a finite set of forbidden
subsequences~\cite{KSW96,FH06,Wa21}. Thus, for every pair $(s,q)$ with $s,q \ge 0$, there exists
a finite number of $(s,q)$-critical matchings with a separated layout.
Our main result in the section is the following generalization to non-matching graphs.

\CriticalSeparatedThm*


Before proving \cref{thm:critical_sep}, we remark that a similar characterization of graphs with a
bounded separated pure stack (queue) number is straightforward. In fact, there exists a single
$(s, 0)$-critical graph and a single $(0, q)$-critical graph for all $s, q \ge 1$, namely the $ s $-twist and the $q$-rainbow, respectively. In contrast,
characterizations for mixed linear layouts are significantly more complex. We computationally
identified all $9$ graphs that are $1$-critical, all $20$ graphs that are $(1, 1)$-critical, and
all $3128$ graphs that are $2$-critical.

Our proof of \cref{thm:critical_sep} uses similar arguments as~\cite{Wa21} who show, in our language,
that the number of edges in critical separated matchings is bounded. The main challenge is to argue that
the maximum degree in a critical graph is bounded.
The following lemma, provided by W\"arn \cite{Wa21}, is the combinatorial key to proving \cref{thm:critical_sep}.
It shows that the partial Boolean functions, each representing an $s$-stack $q$-queue layout of $G-e$ for an edge $e$, can be
combined into a total Boolean function, representing an $s$-stack $q$-queue layout of $G$, provided that the number of edges is
sufficiently large.

\begin{lemma}[{\cite[Lemma 13]{Wa21}}]\label{lm:Boolean_functions}
	Let $r,d \in \mathbb{N}$ with $r \geq 2$, and let $U$ be a non-empty set.
	For each $i \in U$, suppose there is a partial Boolean function $P_i: U \setminus \{i\} \rightarrow \{0,1\}$.
	Suppose that all $P_i$ are \df{close} in the sense that for every pair $i,j \in U$, there are at most $d$ points
	$k \in U \setminus \{i,j\}$ with $P_i(k) \neq P_j(k)$.

	Then, there exists an integer $N = N(r,d) \leq \min(4r^{d+1},(4r)^{d/2+1})$ such that
    if $|U| > N$, we can find a total Boolean function
	$P: U \rightarrow \{0,1\}$ satisfying the following condition:
	For any $S \subseteq U$ of size at most $r+1$, there exists an $i \in U \setminus S$
	such that $P(k) = P_i(k)$ for every $k \in S$.
\end{lemma}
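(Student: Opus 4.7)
The plan is to construct the total Boolean function $P: U \to \{0,1\}$ via a coordinate-wise majority vote among the partial functions $\{P_i\}_{i \in U}$, and then to verify the required extension property through a counting argument on small subsets $S$.

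For each $k \in U$, define $P(k)$ to be the majority value of $\{P_i(k) : i \in U \setminus \{k\}\}$, breaking ties arbitrarily. I would first establish that the disagreement set $D_i = \{k \in U \setminus \{i\} : P_i(k) \neq P(k)\}$ has size $O(d)$ for every $i \in U$. This follows from a double-counting: if $k \in D_i$, then $P_i(k)$ is in the minority at $k$, so at least roughly $(|U|-3)/2$ indices $j \in U \setminus \{i,k\}$ satisfy $P_j(k) \neq P_i(k)$; summing over $k \in D_i$ and applying closeness (each pair $(i,j)$ disagrees on at most $d$ coordinates) yields $|D_i| \cdot (|U|-3)/2 \leq (|U|-1) d$, whence $|D_i| \leq 2d + O(1)$ for $|U|$ large.

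To conclude, for each $S \subseteq U$ with $|S| \leq r+1$, I need an index $i \in U \setminus S$ with $D_i \cap S = \emptyset$. The number of bad indices is at most $\sum_{k \in S} |B_k|$, where $B_k = \{i \in U \setminus \{k\} : P_i(k) \neq P(k)\}$. By a similar double-count, the number of coordinates with $|B_k| \geq t$ is at most $O(|U| d / t)$, since each such $k$ accounts for $\Omega(t \cdot |U|)$ pairwise disagreements out of the total budget $d \binom{|U|}{2}$. This controls the $r+1$ largest values of $|B_k|$ in aggregate via $\sum_k |B_k| = \sum_i |D_i| \leq 2d|U| + O(|U|)$, though not pointwise.

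The main obstacle is closing the quantitative gap to reach $N \leq \min(4r^{d+1}, (4r)^{d/2+1})$. At coordinates where $\{P_i(k) : i \neq k\}$ is near-balanced, $|B_k|$ can be as large as $(|U|-1)/2$, and the plain majority $P$ may fail the extension property at such coordinates; a refinement is needed. I expect the proof proceeds by induction on $d$: identify a coordinate $k_0$ where the $P_i$'s split into two non-trivial parts, fix $P(k_0)$ to one of the two values, and recurse on the corresponding side, which either has a substantially smaller index set or a reduced effective closeness parameter. A direct recursion on $d$ contributes a factor of roughly $r$ per step and yields the bound $r^{d+1}$; a sharper second-moment (Cauchy--Schwarz-type) estimate on the distribution of disagreements at balanced coordinates yields the improved exponent $d/2+1$ when $d$ is large relative to $r$. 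The hardest part will be carrying out this iterative refinement while simultaneously preserving the extension property for every subset $S$ of size $r+1$.
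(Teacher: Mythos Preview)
The paper does not contain a proof of this lemma: it is quoted verbatim from W\"arn~\cite{Wa21} (as Lemma~13 there) and used as a black box in the proof of \cref{lm:finite_if_deg_bounded}. There is therefore no proof in the present paper to compare your proposal against.

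Regarding the proposal itself: you have correctly identified both the natural first attempt (coordinate-wise majority) and the reason it does not immediately close. Your double-counting gives $\sum_k |B_k| \le O(d\,|U|)$, so on average $|B_k|$ is $O(d)$; but the extension property for a specific $S$ needs a \emph{pointwise} guarantee on the $|B_k|$ with $k \in S$, and a single near-balanced coordinate can already rule out half of all indices $i$. Your suggested fix---iteratively committing $P$ at a balanced coordinate and recursing with a reduced parameter---is in the right spirit, but as written it is only a heuristic: you have not specified what quantity decreases at each step, why the recursion terminates with the stated bounds $4r^{d+1}$ and $(4r)^{d/2+1}$, or why committing $P(k_0)$ does not destroy the extension property for subsets $S$ that straddle both sides of the split. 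To turn this into a proof you would need to make the induction hypothesis precise (on which parameter, with what invariant on the remaining index set) and verify that each step loses at most a factor of $r$ (respectively $\sqrt{r}$ via a second-moment bound). As it stands the proposal is a plausible outline rather than a proof.
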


Note that, aside from the upper bound on $N(r,d)$, \cref{lm:Boolean_functions} holds for
$r = 1$. To this end, observe that
the only part where $r$ plays a role is the requirement that the last condition
holds for all subsets $S$ of size \textit{at most} $r+1$. In particular, when $r=1$, the same Boolean
function $P$ as for $r = 2$ can be used; that is, $N(1,d) \leq N(2,d)$.
This allows us to prove the following claim.

\begin{lemma}\label{lm:finite_if_deg_bounded}
  For all $s,q \ge 1$ with $s \leq q$, every separated
  $(s,q)$-critical graph with maximum degree $\Delta$ contains at most $N(q,2\Delta sq)$ edges.
\end{lemma}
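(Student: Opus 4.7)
The plan is to argue by contradiction. Suppose $G$ is a separated $(s,q)$-critical graph with maximum degree $\Delta$ and strictly more than $N(q, 2\Delta sq)$ edges; I will use \cref{lm:Boolean_functions} to synthesize a valid $s$-stack $q$-queue layout of $G$, contradicting criticality.

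Set $U = E(G)$. By $(s,q)$-criticality, for every edge $e \in U$ there is an $s$-stack $q$-queue layout $L_e$ of $G - e$. I define a partial Boolean function $P_e: U \setminus \{e\} \to \{0,1\}$ by letting $P_e(f) = 0$ if $f$ lies on a stack in $L_e$ and $P_e(f) = 1$ if $f$ lies on a queue. Thus $P_e$ records only the coarse stack-versus-queue split of $L_e$, discarding the precise page index; the finer information will be recovered at the end from the forbidden-pattern characterizations of stack and queue layouts in the separated bipartite setting.

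The technical heart is verifying the closeness hypothesis: for every pair $e, e' \in U$, the functions $P_e$ and $P_{e'}$ must disagree on at most $d = 2\Delta sq$ inputs. The strategy is to start from $L_{e'}$ and, via a controlled exchange procedure, modify it into a layout whose stack/queue partition coincides with that of $L_e$ outside a small set of edges close to $\{e, e'\}$. Since each of $e$ and $e'$ touches at most $\Delta$ other edges through a shared endpoint, the immediate neighborhood of $\{e, e'\}$ has size at most $2\Delta$; the factor $s\cdot q$ then captures how a single swap between a stack and a queue can propagate through the interplay of the $s$ stacks and $q$ queues during reconciliation. I expect this step to be the main obstacle, because the quantity $2\Delta sq$ must emerge from a careful bookkeeping of cascading exchanges, and it is the only point where the maximum-degree assumption enters the argument.

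With closeness in hand, apply \cref{lm:Boolean_functions} with $r = q$ (using $s \leq q$) and $d = 2\Delta sq$ to obtain a total Boolean function $P: U \to \{0,1\}$ such that for every $S \subseteq U$ with $|S| \leq q + 1$, some $e \in U \setminus S$ satisfies $P|_S = P_e|_S$. Let $E_s = P^{-1}(0)$ and $E_q = P^{-1}(1)$. If $E_q$ contained a $(q+1)$-rainbow $R$, local consistency would yield an edge $e \notin R$ with $P_e|_R \equiv 1$, so $L_e$ would place all $q+1$ pairwise-nesting edges of $R$ onto at most $q$ queues, forcing two into a common queue and contradicting the validity of $L_e$. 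The analogous argument, using $r + 1 \geq s + 1$, rules out an $(s+1)$-twist inside $E_s$. For separated bipartite layouts, absence of a $(k+1)$-twist (respectively, $(k+1)$-rainbow) in a set of edges is equivalent to admitting a $k$-stack (respectively, $k$-queue) layout of that set, so $E_s$ fits on $s$ stacks and $E_q$ on $q$ queues. Combining these gives an $s$-stack $q$-queue layout of $G$, contradicting $(s,q)$-criticality and completing the proof.
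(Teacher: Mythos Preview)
Your overall architecture matches the paper's: define partial Boolean functions recording stack-versus-queue, establish $d$-closeness with $d=2\Delta sq$, apply \cref{lm:Boolean_functions} with $r=q$, and then argue that $P^{-1}(0)$ and $P^{-1}(1)$ avoid a large twist and a large rainbow respectively. The final step is fine.

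The genuine gap is the closeness argument. You do not actually prove it, and the strategy you sketch --- an ``exchange procedure'' starting from $L_{e'}$ and propagating swaps through the neighbourhood of $\{e,e'\}$ --- is not how the bound $2\Delta sq$ arises, and I do not see how to make it work. The disagreement set between $P_e$ and $P_{e'}$ has no reason to be localised near the deleted edges; two layouts of $G-e$ and $G-e'$ can look globally different. The paper's argument is instead a short pigeonhole count that never mentions $e$ or $e'$ after the setup. If $P_i$ and $P_j$ disagree on more than $2\Delta sq$ edges, then without loss of generality some set $S$ of $\Delta sq+1$ edges lies entirely on queues in $L_i$ and entirely on stacks in $L_j$. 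From $L_i$, some single queue contains at least $\Delta s+1$ edges of $S$; these pairwise do not nest, hence pairwise cross or share an endpoint. Since $G$ is bipartite of maximum degree $\Delta$, K\"onig's theorem partitions them into $\Delta$ matchings, one of which has $s+1$ pairwise crossing edges --- an $(s+1)$-twist. But $S$ also fits into $s$ stacks in $L_j$, a contradiction. So the factors $2$, $q$, $\Delta$, $s$ come respectively from symmetry, the number of queues, edge-colouring, and the number of stacks, not from any local exchange near $e,e'$.
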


\begin{proof}
    Consider a separated $(s,q)$-critical graph $G$ on $m$ edges, numbered from $1$ to $m$, with
    maximum degree $\Delta$. Not that $ s \leq q $ holds without loss if generality.
    Assume for a contradiction that $m > N(q,2\Delta sq)$.

    For every edge $i \in [m]$, graph $G-i$ (that is, $G$ after the deletion of $i$) admits an
    $s$-stack $q$-queue layout. We fix such an $s$-stack $q$-queue layout of
    $G-i$ and define a partial Boolean function $P_i: [m] \setminus \{i\} \rightarrow \{0,1\}$, where
    $P_i(k) = 0$ when $k$ is in a queue and $P_i(k) = 1$ when $k$ is in a stack of the layout.

    First we show that all $P_i$ are $(2\Delta sq)$-close.
    Given $i,j \in [m]$, we suppose for the sake of contradiction that there are $2\Delta sq + 1$
    edges $k \in [m] \setminus \{i,j\}$ with $P_i(k) \neq P_j(k)$. Then without loss of
    generality, there is a set $S$ of size $\Delta sq + 1$ with $P_i(k) = 0$ and $P_j(k) = 1$ for
    all $k \in S$. Since $P_i(k) = 0$ for all $k \in S$, all edges in $S$ are in a queue in the
    fixed $s$-stack $q$-queue layout of $G-i$. Thus $S$ can be covered by $q$ queues. Analogously,
    since $P_j(k) = 1$ for all $k \in S$, $S$ can also be covered by $s$ stacks.
    Now, since $S$ has $\Delta sq+1$ edges and can be covered by $q$ queues, at least one of the
    queues contains $\Delta s + 1$ edges. That is, those $\Delta s+1$ edges pairwise either cross
    or share an endpoint. Observe that $G$ is a bipartite graph of maximum degree $\Delta$, and hence,
    it is $\Delta$-edge-colorable~\cite{Ko16}; that is, $S$ can be partitioned into $\Delta$ matchings.
    By pigeonhole principle, at least one of these matchings contains at least $s+1$ edges. Hence,
    we found $s+1$ pairwise crossing edges in $S$. This contradicts to the
    fact that $S$ can be covered by $s$ stacks, and hence, to assumption that there are $2\Delta sq + 1$
    edges $k \in [m] \setminus \{i,j\}$ with $P_i(k) \neq P_j(k)$. In other words,
    functions $P_i$ are $(2\Delta sq)$-close.

    Next we apply \cref{lm:Boolean_functions} to obtain a total function $P: [m] \rightarrow \{0,1\}$.
    Let $A \coloneqq \{k \in [m] \mid P(k) = 0\}$ and $B \coloneqq \{k \in [m] \mid P(k) = 1\}$ be
    a partition of the edges of $G$ according to $P$. We show that $A$ contains no $(q+1)$-rainbow.
    Suppose there is a $(q+1)$-rainbow in $A$, denoted $R$. Since $|R| \leq q+1$, by the properties
    of function $P$, there is an
    $i \in [m] \setminus R$ with $P(k) = P_i(k)$ for all $k \in R$. Since $R \subseteq A$, we know
    that $P(k) = 0$, and thus $P_i(k) = 0$ for all $k \in R$. But then all edges from $R$ are in
    queues in the $s$-stack $q$-queue layout of $G-i$, and since there are only $q$ queues, two of
    them are in the same queue, contradicting the assumption that $R$ is a rainbow.

    Similarly, since $s+1 \leq q+1$ we can show that there is no $(s+1)$-twist in set $B$.
    Note that $A$ and $B$ are
    disjoint and together they cover all edges of $G$. Since $A$ does not contain a $
    (q+1)$-rainbow, it can be covered by $q$ queues. Likewise, $B$ can be covered by $s$ stacks.
    Thus, we found a $s$-stack $q$-queue layout of $G$, which contradicts the assumption that $G$ is
    $(s,q)$-critical. Therefore, for the number of edges in $G$, it holds that $m \leq N(q,2\Delta sq)$.
\end{proof}

Note that \cref{lm:finite_if_deg_bounded} already yields a finite number of $(s,q)$-critical graphs
of bounded-degree (separated) graphs, since the size of every such graph is bounded. In order to
extend the result to $k$-critical graphs, we need the following claim.

\begin{lemma}\label{lm:sq_to_k}
	Suppose that the number of edges in an $(s,q)$-critical graph is at most $m(s,q)$ for $s,q \ge 0$.
	Then the number of edges in a $k$-critical graph is at most $\sum\limits_{s+q=k} m(s,q)$.
\end{lemma}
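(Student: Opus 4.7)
The plan is to produce, for each pair $(s,q)$ with $s+q=k$, an $(s,q)$-critical subgraph $H_{s,q}$ of the given $k$-critical graph $G$, and then to argue that the subgraphs $H_{s,q}$ together cover every edge of $G$. The union bound will then yield $|E(G)| \leq \sum_{s+q=k} |E(H_{s,q})| \leq \sum_{s+q=k} m(s,q)$, as claimed.

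For the first step, I would observe the following padding fact: any $s$-stack $q$-queue layout with $s+q < k$ extends to a mixed $k$-page layout simply by adding $k - s - q$ unused pages. Hence, since $G$ admits no mixed $k$-page layout, $G$ admits no $(s,q)$-layout for any pair with $s+q \le k$, and in particular for any pair with $s+q = k$. For each such pair, I would greedily delete edges from $G$ as long as the resulting graph still fails to admit an $s$-stack $q$-queue layout. The process halts at an edge-minimal subgraph $H_{s,q} \subseteq G$ that does not admit an $(s,q)$-layout; by definition $H_{s,q}$ is $(s,q)$-critical, and the hypothesis of the lemma gives $|E(H_{s,q})| \le m(s,q)$.

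For the second step, I would suppose for contradiction that some edge $e \in E(G)$ lies in none of the $H_{s,q}$. Then $H_{s,q} \subseteq G - e$ for every $(s,q)$ with $s + q = k$, so $G-e$ contains a non-$(s,q)$-layoutable subgraph and is therefore itself non-$(s,q)$-layoutable. Combined with the padding observation above, this shows that $G-e$ has mixed page number strictly greater than $k$, contradicting the $k$-criticality of $G$, which requires every $G-e$ to admit a $k$-page layout. Hence every edge of $G$ lies in some $H_{s,q}$, which yields
\[ |E(G)| \;\le\; \bigl|\bigcup_{s+q=k} E(H_{s,q})\bigr| \;\le\; \sum_{s+q=k} |E(H_{s,q})| \;\le\; \sum_{s+q=k} m(s,q). \]

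The argument is short and presents no real obstacle; the only subtle point is the padding observation, which promotes an $(s,q)$-critical certificate (for a single pair $(s,q)$ with $s+q=k$) into a certificate of non-$k$-layoutability, allowing the cover argument to exhaust all edges of $G$.
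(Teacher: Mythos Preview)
Your proof is correct and follows essentially the same approach as the paper: both extract an $(s,q)$-critical subgraph $H_{s,q}\subseteq G$ for each pair with $s+q=k$ and then use $k$-criticality of $G$ to conclude that the union $\bigcup_{s+q=k} H_{s,q}$ must already be all of $G$. The paper phrases the last step directly (the union is not $k$-layoutable, hence by minimality $G$ equals the union), whereas you phrase it contrapositively via an uncovered edge $e$; these are the same argument.
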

\begin{proof}
	Consider a $k$-critical graph $G$.
	By definition, for any $s + q = k$, $G$ does not admit an $s$-stack $q$-queue layout.
	Thus, for each pair $(s,q)$, $G$ contains an $(s,q)$-critical graph $G_{(s,q)}$ as a subgraph;
    the subgraph contains at most $m(s,q)$ edges.
	Let $H$ be the subgraph of $G$ with the smallest number of edges that contains all of $G_{(s,q)}$;
	clearly, $H$ has at most $\sum_{s+q=k} m(s,q)$ edges.
	Since $H$ does not admit an $s$-stack $q$-queue layout for any pair $(s,q)$ with $s+q = k$, it follows that
	$H$ does not admit any mixed linear layout on $k$ pages.
	As $G$ is $k$-critical, it cannot have more edges than $H$.
\end{proof}

\cref{lm:sq_to_k} together with \cref{lm:finite_if_deg_bounded} imply the first part of \cref{thm:critical_sep}.
For the second part of the theorem, we use the next lemma.

\begin{lemma}
	\label{lm:forbidden_degree}
	Let $G$ be a separated $(1, 1)$-critical graph.
	Then $\Delta(G) < 6$.
\end{lemma}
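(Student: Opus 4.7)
Suppose, for a contradiction, that $G$ is a separated $(1,1)$-critical graph with $\Delta(G) \ge 6$. Let $v$ be a vertex of degree at least $6$; by the left-right symmetry of separated layouts, we may assume that $v$ lies on the left side. Denote its neighbors on the right by $w_1 \prec \ldots \prec w_d$ with $d \ge 6$. By criticality, for every edge $vw_i$ the subgraph $G - vw_i$ admits a $1$-stack $1$-queue layout, which we denote $L_i$.

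The first step is to show that, for every middle index $i \in \{2, \ldots, d-1\}$, the layout $L_i$ places all edges $vw_j$ with $j<i$ into a single page and all edges $vw_j$ with $j>i$ into the opposite page. The crucial observation is that, in a separated layout, the set of rows in column $v$ admissible for the stack (respectively, the queue) is an interval, whose endpoints are determined by the stack-edges (queue-edges) in the columns distinct from $v$. If $vw_a$ and $vw_c$ with $a < i < c$ happened to lie in the same page, say the stack, this interval would contain both $w_a$ and $w_c$ and hence the whole interval $[w_a, w_c]$; in particular, it would contain $w_i$. Consequently $vw_i$ could be added to the same page, producing a $1$-stack $1$-queue layout of $G$ and contradicting criticality.

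Second, I would convert the structural constraint into restrictions on the rest of $G$. Each $L_i$ falls into one of two sub-cases depending on whether $\{vw_j : j<i\}$ lies in the queue or in the stack. Writing out the queue- and stack-windows of column $v$ together with the impossibility of inserting $vw_i$ shows that in the first sub-case every edge of $G$ incident to a left vertex preceding $v$ must have its right endpoint at row $\preceq w_1$ or $\succeq w_d$, and symmetrically in the second sub-case for left vertices succeeding $v$. Moreover, each sub-case forces the existence of specific edges on the opposite side of $v$ at rows close to $w_i$. Ranging over the four middle indices $i \in \{2,3,4,5\}$ (available since $d \ge 6$) and using that at least one sub-case realizes each $L_i$, we conclude that one of the following global conditions holds: (a) every edge of $G$ between a left vertex preceding $v$ and a right vertex in the open interval $(w_1, w_d)$ is absent, or (b) the analogous condition on the right side of $v$ holds.

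The main obstacle is to combine these restrictions with the detailed structure of the $L_i$'s to construct an explicit $1$-stack $1$-queue layout of $G$, which would contradict the assumption that none exists. The plan is to split the edges at $v$ as $\{vw_1,\ldots,vw_m\}$ into one page and $\{vw_{m+1},\ldots,vw_d\}$ into the other for a carefully chosen middle index $m$ (roughly $m = \lceil d/2 \rceil$), and to assign every remaining edge of $G$ to the page determined by whether its row is $\preceq w_m$ or $\succeq w_{m+1}$, together with the side of $v$ of its other endpoint. Edges whose assignment is ambiguous are placed by reusing the sub-layout structure inherited from $L_m$ and $L_{d+1-m}$. Validity of the resulting layout within each page is inherited from sub-layouts arising in some $L_i$, since the edges placed together form subsets of pages already appearing in some $L_i$ that satisfies the non-crossing or non-nesting condition. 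Producing such a layout contradicts $G$ being $(1,1)$-critical, completing the argument.
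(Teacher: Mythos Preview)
Your first observation---that in each $L_i$ the edges $vw_j$ with $j<i$ occupy one page and those with $j>i$ the other---is correct and is also the starting point of the paper's proof. From there, however, your argument does not close.

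The derivation of the global conditions (a)/(b) is valid but a detour: a single middle index already yields (a) or (b), so ranging over four indices buys nothing, and nothing in your outline explains why $d\ge 6$ is the threshold. More importantly, (a) or (b) alone is far too weak to build a $1$-stack $1$-queue layout of $G$; it says nothing about edges on the unrestricted side of $v$, which can still reach rows strictly between $w_1$ and $w_d$ arbitrarily. Your proposed construction---split the $vw_j$'s at some $m$ and assign the remaining edges by row and side---assembles pages that are not subsets of any single page of any $L_i$; they mix pieces of two different layouts, and you give no argument that no forbidden pair arises \emph{between} the two pieces. That gluing step is exactly the heart of the matter, and it is missing.

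The paper supplies it by a direct cut-and-glue, without ever using (a)/(b). Writing the six column-$C$ points from top to bottom as $x'',x,x',y',y,y''$, one removes $x$ and $y$, cuts the grid by a horizontal line $\ell$ between the rows of $x'$ and $y'$, and keeps the part below $\ell$ from the layout of $G-x$ together with the part above $\ell$ from the layout of $G-y$. Your first observation forces the two pages of the layout of $G-x$ to split the column-$C$ points into $\{x''\}$ and $\{x',y',y,y''\}$; hence \emph{each} page contains a column-$C$ anchor strictly above $\ell$ (namely $x''$ or $x'$). Since the queue is weakly increasing and the stack weakly decreasing, these anchors force every queue point below $\ell$ into a column $\le C$ and every stack point below $\ell$ into a column $\ge C$. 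Symmetrically, in the layout of $G-y$ each page has a column-$C$ anchor strictly below $\ell$ ($y'$ or $y''$), forcing the upper-part queue into columns $\ge C$ and the upper-part stack into columns $\le C$. The two halves therefore concatenate without creating any nesting (respectively crossing) pair. The six points are precisely what is needed so that both layouts have, in each page, an anchor on the required side of $\ell$; with five or fewer this fails.
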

\begin{proof}
    Consider the grid representation of $ G $ in which vertices are represented by
    columns and rows, with a point in the intersection if the two vertices share an edge. Suppose $G$
    has a vertex of degree at least $6$, which is represented, without loss of generality, by a column $ C $.
    Since $ G $ is critical, removing any point in $C$ yields a $1$-stack $1$-queue layout;
    that is, all remaining points can be covered by an increasing sequence (queue) and a decreasing sequence (stack).
    The idea is to consider two such points, $x \in C$ and $y \in C$,
    and obtain two pairs of sequences that can be
    combined into one increasing and one decreasing sequence covering all points. This contradicts to
    $ G $ being $ (1,1) $-critical.

	We refer to \cref{fig:forbidden_degree} for an illustration of how two such layouts are recombined.
    Let $C$ consists of (at least) six points, $x{''}$, $x$, $x'$, $y'$, $y$, $y{''}$,
    in the order from top to bottom. Consider a $1$-stack $1$-queue layout of $G - x$. Observe that
    none of the two sequences contains points both below and above $x$, as otherwise $ x $ could be added, so that the two
    sequences form a $1$-stack $1$-queue layout of $G$. Hence, there are two ways in which
    the two sequences cover the points of $C - x$: either the increasing sequence covers all points above $x$ and the
    decreasing sequence covers everything below $x$, or vice versa. Similarly, removing $y$ from $G$
    yields two possible $1$-stack $1$-queue layouts of $G - y$.

    We cut the grid representations with a horizontal line, $ \ell $, between $ x' $ and $ y' $.
	Take the lower part of the grid representation of $ G-x $, that is, the part that does not contain $ x $, and
	the upper part of the grid representation of $ G-y $ not containing $ y $. Together, these two
	parts cover all edges of $ G $ since no point was removed in the chosen parts. It is left two
	show that the combined sequences are indeed an increasing and a decreasing sequence. Recall
	that above $ x $ and below $ y $ there is at least one point each and they are covered by
	different sequences. Thus, it holds that for the lower part, the increasing sequence hits
	$ \ell $ in column $ C $ or to the left of $ C $, and in the upper part, it continues at $ C $
	or to the right. Similarly, the decreasing sequence starts in the upper part and hits $ \ell $
	at $ C $ or to the left and continues in the lower part at $ C $ or to the right.
\end{proof}

\begin{figure}[!tb]
    \begin{subfigure}[b]{.19\linewidth}
        \center
        \includegraphics[page=6]{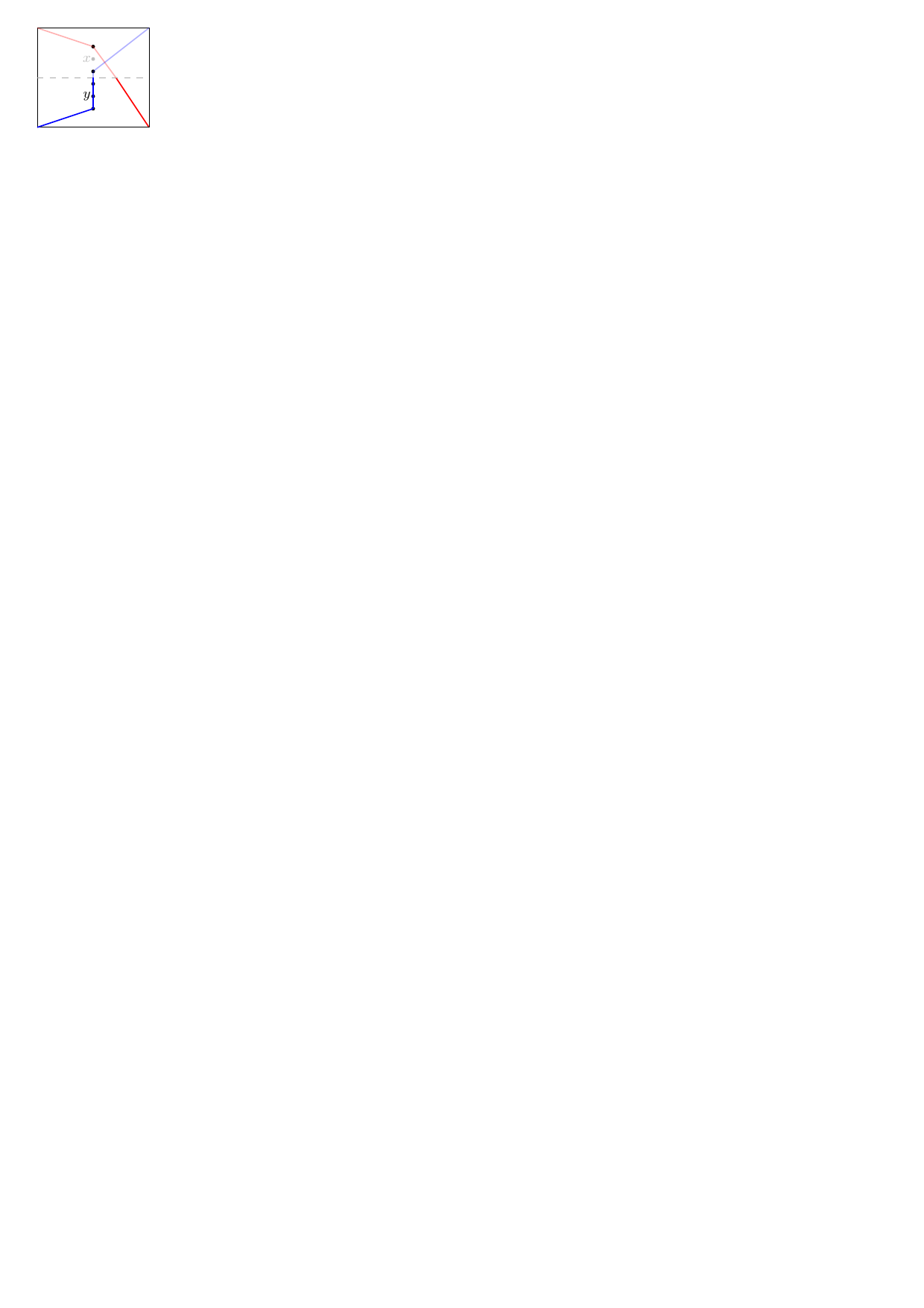}
        \caption{}
    \end{subfigure}
    \hfill
    \begin{subfigure}[b]{.38\linewidth}
        \center
        \includegraphics[page=7]{pics/critical_bounded_deg}
        \caption{}
    \end{subfigure}
    \hfill
    \begin{subfigure}[b]{.38\linewidth}
        \center
        \includegraphics[page=8]{pics/critical_bounded_deg}
        \caption{}
    \end{subfigure}
    \caption{
        (a)~A $(1,1)$-critical graph $G$ with a vertex of degree $6$ represented by column $C$.
        (b)~Two options for a $1$-stack (blue decreasing sequence) $1$-queue (red increasing sequence) layout of $G-x$.
        (c)~Two options for a $1$-stack $1$-queue layout of $G-y$.\linebreak
        In all cases the bottom part of the grid below $\ell$ in (b) can be combined with the top part of the grid in (c)
        into a $1$-stack $1$-queue layout of $G$.}
    \label{fig:forbidden_degree}
\end{figure}

By \cref{lm:finite_if_deg_bounded} and \cref{lm:forbidden_degree}, the number of $(1,1)$-critical
graphs is finite.
As the number of $(2,0)$-critical and $(0,2)$-critical graphs is also finite, \cref{lm:sq_to_k} implies that
the number of $2$-critical graphs is bounded.
Thus, the second part of \cref{thm:critical_sep} follows.

\subsection{Critical Graphs for Non-Separated Layouts}
\label{sec:critical_general}

Based on the literature, we first observe that the number of critical graphs characterizing non-separated $s$-stack $q$-queue layouts and (pure) $s$-stack layouts is expected to be infinite even for matchings, for all $s \geq 4$.
To this end, we use a hardness result stating that it is $\NP$-complete to recognize $4$-stack layouts for matchings with a fixed
layout (from coloring circle graphs)~\cite{Un88}.
Moreover, an already $\NP$-complete mixed linear layout recognition problem with given vertex order remains $\NP$-complete under addition of a stack or queue~\cite{CKN19}.
On the other hand, a finite obstruction set implies
a poly-time recognition, which would contradict the hardness result under the assumption that $ P \neq \NP$.
In contrast, $(0,q)$-critical graphs are exactly the ones avoiding $(q+1)$-rainbows~\cite{HR92};
thus, there is exactly one $(0,q)$-critical graph for every $q \in \mathbb{N}$.
Therefore, the state-of-the-art does not imply whether the obstruction set for $k$-critical graphs is finite for $ k < 4 $.
In addition, it is interesting to build explicit infinite (though not necessarily complete) collections of
critical graphs for various values of $s, q, k \ge 0$, which we accomplish by the next theorem.

\CriticalGeneralThm*

We start by building an infinite obstruction set for (pure) stack layouts of matchings.
Intuitively, $(2, 0)$-critical graphs can be constructed from odd-length cycles that
require $3$ colors in any proper coloring but removal of a vertex makes the graph $2$-colorable.
What is left to do is realizing the cycle via conflicting edges (that cannot share a stack) of a matching.
The same approach is used for larger values of $s \ge 2$.

\begin{lemma}\label{lm:critical_stacks}
	For every integer $s \ge 2$ and every $n \ge 3$, there exists an $(s, 0)$-critical matching
    graph with at least $n$ vertices.
\end{lemma}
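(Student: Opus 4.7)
The plan is to exploit the equivalence between stack layouts of matchings and vertex colorings of circle graphs. Two edges of a matching with a fixed vertex order share a stack iff they do not cross, so the stack number equals the chromatic number of the \emph{conflict graph} (the intersection graph of the chords, which is a circle graph). Hence a matching is $(s,0)$-critical iff its conflict graph is a vertex-critical graph of chromatic number $s+1$, and the task reduces to producing, for each $s \ge 2$, an infinite family of vertex-critical $(s+1)$-chromatic circle graphs.

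For the base case $s = 2$, I would use the odd cycles $C_{2\ell+1}$ for $\ell \ge 1$: these have chromatic number $3$ and are vertex-critical, since removing any vertex yields a bipartite path. To realize $C_{2\ell+1}$ as a circle graph, I place $2\ell+1$ chords at positions $(2i,\, 2i+3) \bmod (4\ell+2)$ on a circle with $4\ell+2$ labeled points and then cut the circle at an appropriate position to obtain a (non-separated) linear matching. A direct case analysis verifies that chord $c_i$ crosses exactly $c_{i-1}$ and $c_{i+1}$ modulo $2\ell+1$, giving conflict graph $C_{2\ell+1}$; since $\ell$ can be taken arbitrarily large, we obtain an infinite family of $(2,0)$-critical matchings.

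For general $s \ge 3$, the same high-level strategy applies but needs additional care: standard chromatic-increasing operations do not always preserve circle-graph status. For instance, the wheel $W_{2\ell+1} = C_{2\ell+1} \vee K_1$ is not a circle graph for $\ell \ge 2$, because any chord crossing all cycle chords would force the rim chords to have one endpoint in each of the two arcs defined by the hub, which in turn would realize $C_{2\ell+1}$ as a permutation graph—impossible, since $\overline{C_{2\ell+1}}$ is not a comparability graph. I would therefore construct the critical circle graphs more directly. One option is to extend an odd-cycle chord diagram by attaching $s-2$ carefully placed transversal chords that maintain both the cycle structure and the pairwise crossings among the new chords; alternatively, one can exploit the existence of triangle-free circle graphs of arbitrary chromatic number (Kostochka) to obtain circle graphs with $\chi \ge s+1$ of unbounded size and then extract a vertex-critical subgraph of chromatic number exactly $s+1$, choosing the base graph large enough so that the extracted subgraph has at least $n$ vertices.

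The main obstacle is reconciling the two requirements: being a circle graph and being vertex-critical of the prescribed chromatic number. Many natural chromatic-increasing graph operations (joins with $K_1$, Mycielski or Hajós extensions) fail to preserve the circle-graph property, so for $s \ge 3$ the construction must be tailored specifically, and both vertex-criticality and chord realizability must be verified by separate arguments.
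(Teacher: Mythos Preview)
Your approach for $s=2$ matches the paper's: odd cycles as conflict graphs, realized by chords $(2i,2i+3)$ on a circle.

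For $s\ge 3$, however, neither of your two options constitutes a proof. Option~1 (``attach $s-2$ carefully placed transversal chords'') is not a construction; you give no placement, no verification of criticality, and no argument that the resulting conflict graph has unbounded size. Option~2 has a genuine gap: from a large triangle-free circle graph $G$ with $\chi(G)\ge s+1$ you can indeed extract a vertex-critical $(s+1)$-chromatic induced subgraph $H$, and $H$ is again a circle graph, but nothing prevents $H$ from being \emph{small}. Making $G$ larger does not force $H$ to be larger; a priori every such $G$ might contain the same bounded-size critical subgraph. To make this route work you would need to exhibit, for each fixed $s$, infinitely many pairwise non-isomorphic vertex-critical $(s+1)$-chromatic circle graphs, and the extraction argument does not give that.

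The paper bypasses this difficulty with an explicit construction: take $n=rs+1$ for arbitrary $r\ge 1$ and let the conflict graph be the $(s{-}1)$-st power of the $n$-cycle, i.e.\ vertex set $\mathbb{Z}/n\mathbb{Z}$ with $i\sim j$ iff $1\le |i-j|\bmod n\le s-1$. A forced-propagation argument shows this graph is not $s$-colorable (any $s$-coloring is determined up to rotation by the colors of one $(s{-}1)$-clique and runs into a conflict because $s\nmid n$), while deleting any vertex breaks the cyclic constraint and leaves an $s$-colorable graph. These graphs are realized as chord diagrams directly (each vertex becomes a short arc crossing its $s-1$ neighbors on either side), so both circle-graph status and vertex-criticality are verified by construction rather than by extraction. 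Since $r$ is arbitrary, the matchings can be made arbitrarily large.
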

\begin{proof}
	We show a construction of a conflict graph that is $(s, 0)$-critical,
	that is, a graph where the nodes are
	the arcs in a linear layout and edges exist between crossing pairs of arcs.
	Then the required matching can be easily constructed.

	For $s = 2$, the conflict graph is an odd-length cycle, which clearly cannot be bi-colored.
	It is easy to see that removing any vertex from the graph makes it bi-colorable, and hence, it admits
    a $(2, 0)$-layout.

	For $s \ge 3$, start with a cycle whose length is not a multiple of $s$, say $n = rs + 1$ for some integer $r \ge 1$.
	For every node, $0 \le i < n$, in the conflict graph, add edges $(i, i + j)$ for all $1 \le j < s$ (addition is modulo $n$).
	On the one hand, the graph is not $s$-colorable, since any assignment of $s-1$ colors to vertices $i, \dots, i+s-2$ (forming a clique) yields a unique  color for vertex $i+s-1$; the process yields a contradiction with the color of $0$ as $n$ is not a multiple of $s$.
	On the other hand, removing any vertex from the graph makes it $s$-colorable.
	It is easy to realize the graph as a collection of arcs;
    refer to \cref{fig:obstructions_stacks} for an instance with $s=3$, $r=6$, and $n=20$.
\end{proof}

\begin{figure}[!tb]
    \center
    \begin{subfigure}[t]{\linewidth}
        \centering
        \includegraphics[page=4]{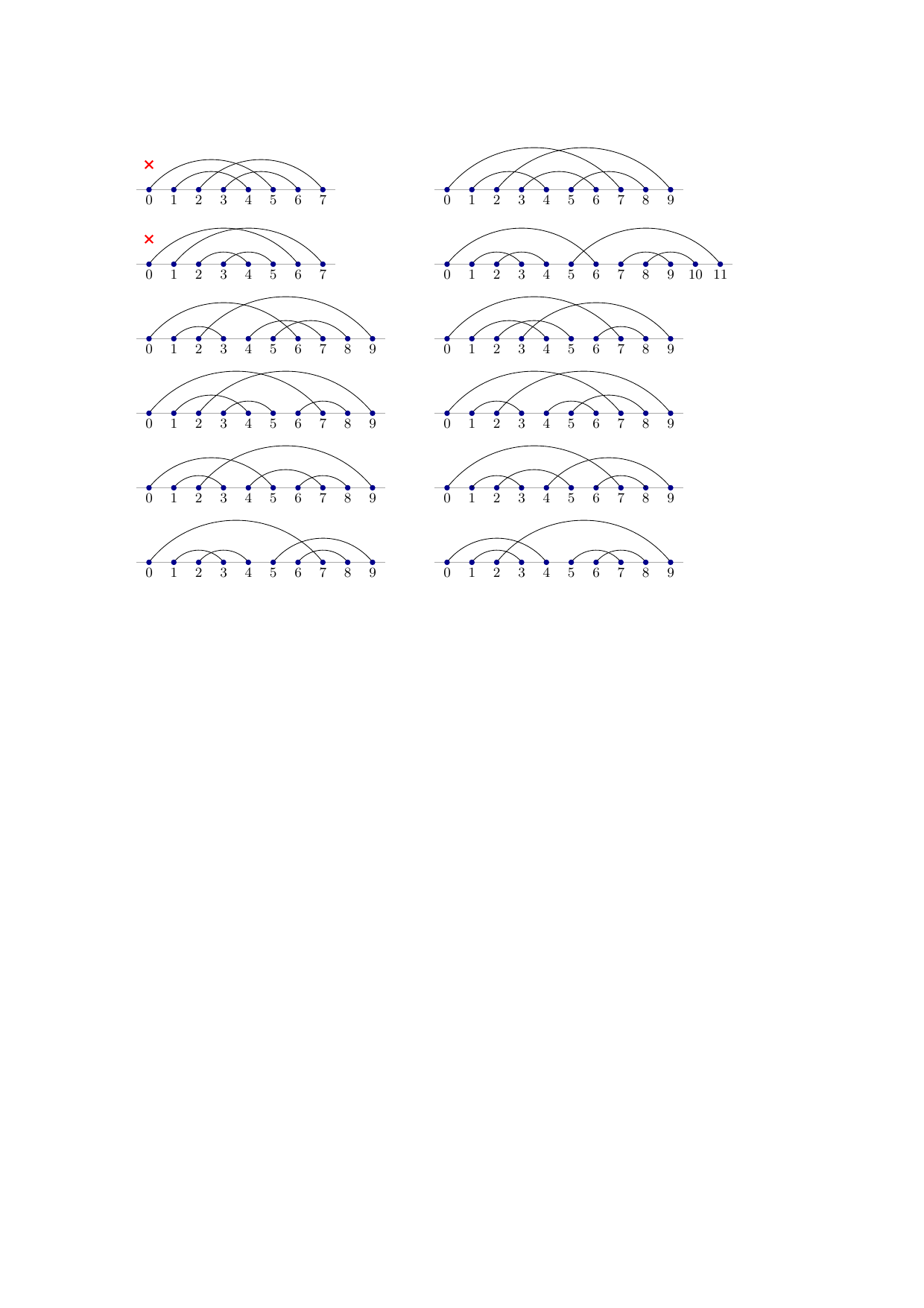}
        \caption{A $(3,0)$-critical graph}
        \label{fig:obstructions_stacks}
    \end{subfigure}
    \hfill
    \begin{subfigure}[t]{\linewidth}
        \centering
        \includegraphics[page=5]{pics/conjecture}
        \caption{A $2$-critical graph}
        \label{fig:obstructions_mixed}
    \end{subfigure}
    \caption{}
\end{figure}

\begin{lemma}\label{lm:critical_mixed}
    For every $n \ge 3$, there exists a $2$-critical matching with at least $n$ vertices.
\end{lemma}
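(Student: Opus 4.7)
The plan is to extend the odd-cycle idea of \cref{lm:critical_stacks} to the mixed setting. The crucial difference is that in a matching any two disjoint edges can always share a page---a stack if they do not cross and a queue if they do not nest---so no single pair of edges alone forbids a $2$-page mixed layout. The obstruction must therefore come from a more global cyclic structure that simultaneously rules out all three options: two stacks, two queues, and one stack plus one queue.

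Concretely, I would construct $M_n$ as a cyclic arrangement of $n$ short edge-gadgets along the vertex line. Each gadget consists of a constant number of arcs whose internal relations encode a local constraint on any candidate $2$-page layout, and consecutive gadgets are placed so that these constraints propagate between them. For an appropriately chosen odd number of gadgets, propagating the local constraint around the cycle produces a parity contradiction: the ``crossing graph'' contains an odd cycle (forbidding $(2,0)$-layouts), the ``nesting graph'' contains an odd cycle (forbidding $(0,2)$-layouts), and the same cyclic structure prevents any bipartition of the edges into a crossing-free stack and a nesting-free queue (forbidding $(1,1)$-layouts). This would give $\mn(M_n) \ge 3$.

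For criticality, I would argue that removing any single edge $e$ of $M_n$ breaks the cycle into a path, simultaneously opening each of the three obstructions. A valid $2$-page layout of $M_n - e$ can then be produced by traversing the resulting path and assigning each remaining edge greedily to either the stack or the queue in accordance with its local constraints, which is always possible because no contradiction can be reintroduced once the cycle is broken.

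The main obstacle is designing the gadget so that all three obstructions manifest along the same cyclic structure and are all repaired by the deletion of any single edge, rather than just one of the three obstructions as in the pure-stack case of \cref{lm:critical_stacks}. I expect the right gadget to resemble a pair of nested arcs that cross the corresponding pair in the neighbouring gadget, as suggested by \cref{fig:obstructions_mixed}, so that the stack-versus-queue decision for one gadget forces the decision for its neighbour and the odd parity of the cycle then yields the desired contradiction.
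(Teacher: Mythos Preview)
Your proposal is a plan rather than a proof: you never actually specify the gadget, and you yourself flag the ``main obstacle'' of designing it.  More importantly, the approach diverges from the paper's and runs into a structural difficulty you do not address.

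The paper does \emph{not} use a single cyclic structure that simultaneously blocks all three layout types.  Instead it places three independent pieces side by side: an arbitrarily long $(2,0)$-critical chain $C$ (from \cref{lm:critical_stacks}), a single extra edge $x$, and a fixed $3$-rainbow $R$ to the right.  The three pieces block the three options separately: $C$ forbids two stacks, $R$ forbids two queues, and a short local argument (the long edge of $C$ crosses a $2$-rainbow, forcing it to the queue and hence everything below it to the single stack) forbids one stack plus one queue.  Criticality is then obtained by a case distinction: deleting an edge of $C$ yields a $2$-stack layout, deleting $x$ yields a $1$-stack $1$-queue layout, and deleting an edge of $R$ yields a $2$-queue layout.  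So different edge removals are repaired by \emph{different} page types; there is no need for one deletion to open all three options at once.

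Your unified-cycle plan faces a concrete obstacle you do not mention: in a matching, any two disjoint edges either cross or nest (or are separated), never both.  Hence the crossing graph and the nesting graph are edge-disjoint, and an odd cycle in one cannot share its edges with an odd cycle in the other.  Consequently, deleting a single edge cannot in general break both odd cycles, so your claim that ``removing any single edge~$e$ \dots\ simultaneously open[s] each of the three obstructions'' needs a mechanism you have not supplied.  The paper sidesteps this entirely by letting only the $(2,0)$-obstruction grow with~$n$ while the $(0,2)$- and $(1,1)$-obstructions are handled by a constant number of additional edges.
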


\begin{proof}
    Let $r \ge 2$ be an even integer. We build a matching with $n = 2(r+2) + 6$ vertices (starting with 0) having
    three types of edges (see \cref{fig:obstructions_mixed}):
    \begin{itemize}
        \item $C = \{(2i, 2i+3), 0 \le i < r-1\} \cup \{(2r-2, 2r+3)\} \cup \{(1, 2r+1)\}$;
        \item a single edge $x = (2r, 2r+2)$;
        \item $R = \{(n-6, n-1)\} \cup \{(n-5, n-2)\} \cup \{(n-4, n-3)\}$.
    \end{itemize}

    We claim that the graph $G_r = (\{0, \dots, n-1\}, C \cup \{x\} \cup R)$ is $2$-critical, that is, $\mn(G_r) = 3$ but
    $\mn(G_r - e) = 2$ for every edge $e$.
    First, we show that $G_r - e$ admits a mixed linear layout on two pages for every edge $e$, then we show that $G_r$ requires at least $3$ pages.
    We consider the three cases $e \in C$, $e = x$, and $ e \in R$ separately.
    \begin{itemize}
        \item If we remove an edge from $C$, then the graph admits a $2$-stack layout:
        The edges of $C \setminus \{e$\} can be assigned to two stacks, while $x$ and edges from $R$
        are assigned to one of the two stacks.

        \item If we remove edge $x$, then the graph admits a $1$-stack $1$-queue layout:
        The ``long'' edge $(1,2r+1)$ of $C$ together with edges from $R$ are assigned to a stack, while
        the ``short'' edges $(2i,2i+3)$ of $C$ are assigned to a queue.

        \item If we remove an edge from $R$, then the graph admits a $2$-queue layout:
        One queue contains an edge from $R$, edge $x$, and the ``long'' edge $(1,2r+1)$ from $C$.
        Another queue contains an edge from $R$ along with the ``short'' edges $(2i,2i+3)$ from $C$.
    \end{itemize}

    Finally, graph $G_r$ does not admit a layout on two (mixed) pages, since on one hand
    (i) it cannot be assigned to two queues ($R$ forms a $3$-rainbow), and
    (ii) it cannot be assigned to two stacks ($C$ is a pattern from \cref{lm:critical_stacks}).
    On the other hand,
    (iii) $G_r$ cannot be assigned to a stack and a queue, since otherwise the ``long'' edge $(1,2r+1)$ of $C$
    is in the queue (it crosses two edges forming a $2$-rainbow) and hence, all edges covered by the
    ``long'' edge would be in a stack, which implies a crossing between two such stack edges.
\end{proof}

Now we are ready to prove the main result of the section.


\begin{proof}[Proof of \cref{thm:critical_nonsep}]
    We start by considering $k$-critical graphs and show that there exist arbitrarily large $k$-critical matchings.
    The following stronger claim is shown by induction:
	{\it For every $n \ge 3$ and every $k \ge 2$, there exists a matching, denoted $G_k$, with at least $n$ vertices, which (a)~is $k$-critical and (b)~admits a $(k+1)$-stack layout.}

    The base of the induction with $k=2$ is given by \cref{lm:critical_mixed}.
    Assume that $k \ge 2$ and that there is a matching $G_k$ satisfying the induction hypothesis; let us describe how to build $G_{k+1}$.
    To this end, add a $(k+2)$-twist (denoted $M$) covering all edges of $G_k$.
	We need to show four claims:
	\begin{itemize}
		\item $G_{k+1}$ does not admit a mixed $(k+1)$-page layout:\\
		Since $M$ is a $(k+2)$-twist, we need at least one queue for $M$ in any $(k+1)$-page layout of $G_{k+1}$.
		That queue cannot be used for an edge of $G_k$; thus, a $(k+1)$-page layout of $G_{k+1}$ exists only if
		a $k$-page layout of $G_k$ exists, a contradiction.

		\item $G_{k+1} - e$ for $e \in G_k$ admits a mixed $(k+1)$-page layout:\\
		Since $G_k$ is $k$-critical, $G_k-e$ admits a $k$-page mixed layout. Use the layout for $G_k-e$ and
		an extra queue for $M$.

		\item $G_{k+1} - e$ for $e \in M$ admits a mixed $(k+1)$-page layout:\\
		$M-e$ is a $(k+1)$-twist, which can be embedded in $k+1$ stacks. By the induction hypothesis, $G_k$ admits a $(k+1)$-stack layout. Since the stacks can be re-used, we have the desired layout.

		\item $G_{k+1}$ admits a $(k+2)$-stack layout, since $M$ can be embedded in so many stacks and $G_k$ admits a $(k+1)$-stack layout.
	\end{itemize}

    \lm{I misunderstand something here. I assume that we fix $ s $, do induction on $ q $, starting with $ q = 0 $, and replace (a) by $ (s,q) $-critical (?). But then, why does the first item hold? Say we have $ s = 100 $ and make the step towards $ q+1 = 2+1$. We add a 4-twist here, but these four edges might easily fit into four of the 100 stacks.}

    \lm{Item 3 is also unclear to me as soon as $ q $ is larger than $ s $ since we may not use so many stacks.}


    \SP{i guess ``identical'' was a bit too ambitious; i added a complete argument below, and it seems the hypothesis can be simplified}

    A proof for $(s, q)$-critical matchings is similar to the above.
    For a fixed $s \ge 2$,
    the induction is on $q$ with the following hypothesis:
	{\it For every $n \ge 3$, every $s \ge 2$, and every $q \ge 0$, there exists a matching, denoted $G_q$, with at least $n$ vertices, which is $(s, q)$-critical.}

    The base case with $q=0$ is given by \cref{lm:critical_stacks}.
    Assume that $q \ge 0$ and that there is a matching $G_q$ satisfying the induction hypothesis. We construct $G_{q+1}$ by adding
    an $(s+1)$-twist (denoted $M$) covering all edges of $G_q$.
    We show three claims:
    \begin{itemize}
        \item $G_{q+1}$ does not admit a mixed $(s, q+1)$-page layout:\\
        Since $M$ is an $(s+1)$-twist, we need at least one queue for $M$ in any layout of $G_{q+1}$ with $s$ stacks.
        That queue cannot be used for an edge of $G_q$; thus, an $(s, q+1)$-page layout of $G_{q+1}$ exists only if
        an $(s, q)$-page layout of $G_q$ exists, a contradiction.

        \item $G_{q+1} - e$ for $e \in G_q$ admits a mixed $(s, q+1)$-page layout:\\
        Since $G_q$ is $(s,q)$-critical, $G_q-e$ admits a mixed $(s,q)$-page layout. Use the layout for $G_q-e$ and
        an extra queue for $M$.

        \item $G_{q+1} - e$ for $e \in M$ admits a mixed $(s, q+1)$-page layout:\\
        $M-e$ is an $s$-twist, which can be embedded in $s$ stacks. By the induction hypothesis, $G_q$ is
        $(s, q)$-critical; hence, it admits an $(s,q+1)$-page layout by assigning one of its edges to a queue.
        Since the $s$ stacks for $M-e$ can be re-used for stacks of the layout of $G_q$, we have the desired $(s, q+1)$-page layout.
    \end{itemize}

\end{proof}

Note that while \cref{thm:critical_nonsep} answers the exact version of \cref{op:mn_characterization}
in negative for most values of $s, q$, and $k$, it does not rule out a possibility of a
finite obstruction set for layouts with a small number of pages. In fact, \cref{lm:sq_to_k}
(which holds for non-separated layouts) implies a finite number of $1$-critical graphs, since
there is exactly one $(1, 0)$-critical graph (a $2$-twist) and exactly one $(0, 1)$-critical graph (a $2$-rainbow).
Based on extensive computational experiments, we conjecture that
for $k = 1$, there are exactly eight critical matchings (see \cref{fig:conjk1}), while
for $s = q = 1$, there are exactly twelve critical matchings (see \cref{fig:conj11}).


\begin{conjecture}
	\label{conj:11}
	An ordered matching with a fixed (non-separated) layout admits
    \begin{enumerate}[(i)]
        \item a $1$-stack $1$-queue layout if and only if
            it avoids $12$ patterns depicted in \cref{fig:conj11}, and
        \item a mixed layout on $1$ page if and only if
            it avoids $8$ patterns depicted in \cref{fig:conjk1}.
    \end{enumerate}
\end{conjecture}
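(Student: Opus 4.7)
The plan is to establish \cref{conj:11} by proving each of the two iff statements as a conjunction of a \emph{soundness} direction (the listed patterns indeed force the stated lower bound on pages) and a \emph{completeness} direction (no other critical matchings exist). Soundness is a finite verification for both parts: for every pattern $P$ in \cref{fig:conj11} or \cref{fig:conjk1}, one directly checks that $P$ admits no layout of the given type, while $P - e$ does for every edge $e \in P$. Since each listed pattern has only a handful of edges, this step is entirely mechanical.

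The substantive work lies in completeness. Part~(ii) (mixed $1$ page) is the more tractable case. Observe that an ordered matching $M$ has mixed page number at most $1$ iff $M$ contains no $2$-twist (then $M$ is a single queue) or no $2$-rainbow (then $M$ is a single stack). Hence a $1$-critical matching is a minimal matching containing both a $2$-twist and a $2$-rainbow. A $2$-twist together with a $2$-rainbow spans at most $4$ edges, so minimality immediately forces every $1$-critical matching to have at most $4$ edges. The plan is then to enumerate all $3$-edge and $4$-edge ordered matchings up to relabeling of endpoints, restrict to those containing both a $2$-twist and a $2$-rainbow, and apply the single-edge-removal check to extract the critical ones. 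The outcome should match the eight patterns in \cref{fig:conjk1}.

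Part~(i) (1-stack 1-queue) proceeds along the same lines but requires more care, since the obstruction to a layout is no longer as local. Concretely, $M$ is $(1,1)$-critical iff no partition of $E(M)$ into a stack $S$ (crossing-free) and a queue $Q$ (nesting-free) exists, yet every $M - e$ admits such a partition. The first step would be to establish an absolute size bound for $(1,1)$-critical matchings; a natural approach is to fix a $2$-twist $T$ and a $2$-rainbow $R$ of $M$, observe that any extra edge $e \notin T \cup R$ can typically be placed on whichever page is less constrained by $T$ and $R$, and leverage this to bound $|E(M)|$. One would then enumerate all matchings up to this bound, build the auxiliary conflict graph on the edges of $M$ with the two allowed labels \emph{cross} and \emph{nest}, and verify that a $(1,1)$-coloring exists for $M - e$ but not for $M$, giving the twelve patterns of \cref{fig:conj11}.

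The main obstacle is precisely this size bound in part~(i): unlike the mixed $1$-page setting, where the witness of badness is simply a coexisting twist and rainbow, a $(1,1)$-coloring is a global constraint and can be destroyed by fairly intricate combinations of crossings and nestings. The key technical lemma to prove is therefore a sharp structural bound (conjecturally around $5$ or $6$ edges, matching the sizes of the patterns in \cref{fig:conj11}), after which exhaustive enumeration finishes the proof. The computational experiments reported by the authors strongly suggest that the listed patterns are exhaustive, so the main theoretical task is converting those experiments into a rigorous structural argument.
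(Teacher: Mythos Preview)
The statement you are attempting to prove is a \emph{conjecture} in the paper, not a theorem: the authors explicitly write that it is ``based on extensive computational experiments'' and offer no proof. There is therefore no argument in the paper to compare your proposal against.

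That said, your treatment of the two parts is uneven. For part~(ii) your argument is essentially complete and correct: if $M$ is $1$-critical then it contains some $2$-twist $T$ and some $2$-rainbow $R$, and the subgraph $T\cup R$ already fails to be a single stack or a single queue; edge-minimality of $M$ forces $M=T\cup R$, hence $|E(M)|\le 4$. From there a finite enumeration of ordered matchings with three or four edges settles the claim. This is more than the paper does (the paper only notes, via \cref{lm:sq_to_k}, that the number of $1$-critical graphs is finite, without pinning down the list), so with the enumeration written out you would in fact be proving half of the conjecture.

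For part~(i), however, your proposal is not a proof but a restatement of the problem. You correctly observe that the entire difficulty is an a~priori size bound on $(1,1)$-critical matchings, and you do not supply one; the heuristic that ``any extra edge can typically be placed on whichever page is less constrained'' is exactly the kind of global statement that fails without further structure, and the paper's infinite families in \cref{thm:critical_nonsep} (for $s\ge 2$) show that such intuitions can be wrong in nearby settings. Since the paper itself leaves this open, your plan here does not go beyond what the authors already know: the missing ingredient is precisely the structural lemma bounding the number of edges in a non-separated $(1,1)$-critical matching, and until that is established the enumeration step cannot be carried out.
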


\lm{In Section 4.1, it says that there are 9 separated 1-critical graphs and 20 separated $ (1,1) $-critical graphs. But a separated graph that is critical should still be critical in the general setting. And it should also still be minimal since non-separated graphs are never subgraphs of separated ones.}
\SP{the conjecture is about matchings, while Section 4.1 talks about general graphs; perhaps worth emphasizing. I think i have all general 1-critical and (1,1)-critical graphs but there are way too many to draw/analyze}

\begin{figure}[!tb]
    \center
    \begin{subfigure}[t]{\linewidth}
        \centering
        \includegraphics[page=1,width=0.9\linewidth]{pics/conjecture}
        \caption{Forbidden patterns for \cref{conj:11}(i). The marked ones are the two 2-thick patterns.}
        \label{fig:conj11}
    \end{subfigure}
    \begin{subfigure}[t]{\linewidth}
        \centering
        \includegraphics[page=2,width=0.9\linewidth]{pics/conjecture}
        \caption{Forbidden patterns for \cref{conj:11}(ii).}
        \label{fig:conjk1}
    \end{subfigure}
    \caption{}
\end{figure}

%
%

\section{Conclusions}

In this paper we made the first steps towards characterizing mixed linear layouts of ordered graphs via forbidden patterns.
The most prominent open question for fully resolving \cref{op:mn_characterization} is to transfer
\cref{main:thick} to general graphs with unbounded maximum degree.
We remark that the proofs in \cref{sec:general_matching} work similarly for general ordered graphs;
hence, the challenge is to bound the mixed page number of bipartite graphs in the separated settings.
We expect that thick patterns is the correct choice even for large-degree graphs.
\lm{maybe be more detailed here?}

Another interesting question is whether separated mixed linear layouts are characterized by a finite obstruction set;
that is, whether the statement of \cref{thm:critical_sep} holds for graphs with unbounded maximum degree.
Again we expect a positive answer here, and observe that for a proof, it is sufficient to bound
the maximum degree of separated $k$-critical graphs for $k \ge 3$, that is, extend \cref{lm:forbidden_degree} for the case.

Finally, we highlight a possible application of studied characterizations. Is the mixed page number of
upward planar (possibly, bounded-degree) graphs bounded by a constant?
To answer the question affirmatively, it is sufficient for a graph to construct a topological order
containing no $k$-thick pattern for some $k \in \mathbb{N}$.

\bibliography{refs_mp}

\end{document}